\documentclass[11pt]{article}
\pdfoutput=1
\usepackage{amsmath,amssymb,epsfig,amsfonts,mathrsfs}
\usepackage{graphicx}
\usepackage{subfig}
\usepackage[usenames, dvipsnames]{color}
\usepackage{cite}
\usepackage{multirow}
\usepackage{hyperref}
\usepackage{verbatim} 
\usepackage{enumitem}
\usepackage{rotating}
\usepackage{xcolor}
\usepackage{multirow}
\usepackage{bm}
\usepackage[normalem]{ulem}
\usepackage{cleveref}
\usepackage{slashed}
\usepackage{float}
\usepackage{parskip}

\usepackage[fleqn,tbtags]{mathtools}

\usepackage{ytableau}

\usepackage{ulem}

\usepackage{stackengine, array}

 \usepackage{tikz} 

\usepackage{tikz,bm}
\usetikzlibrary{arrows}
\usetikzlibrary{patterns}
\usetikzlibrary{decorations.pathmorphing}
\usepackage{xcolor}
\usepackage{amsmath,amssymb,amsthm,mathdots}

\usetikzlibrary{positioning}
\usetikzlibrary{chains}
\usetikzlibrary{arrows, arrows.meta ,fit,decorations.pathreplacing}
\tikzstyle{every picture}+=[remember picture]
\tikzstyle{na} = [baseline]

\usetikzlibrary{arrows, decorations.markings, calc, fadings, decorations.pathreplacing, patterns, decorations.pathmorphing, positioning}
\tikzset{>={Latex[width=1.5mm,length=1.5mm]}}

\tikzset{flavor/.style={regular polygon,regular polygon sides=4,inner sep=2.5pt, draw}}
\tikzset{gauge/.style={circle, draw,inner sep=2.5pt}}
\tikzset{bd/.style={circle, draw=black, inner sep=0pt, fill=black, minimum size=2mm}}
\tikzset{sbd/.style={circle, draw=black, inner sep=0pt, fill=black, minimum size=1mm}}
\tikzset{wd/.style={circle, draw=black, inner sep=0pt, fill=white, minimum size=2mm}}
\tikzstyle{ligne}=[draw, very thick] 
\tikzstyle{ligne2}=[draw, line width=0.3mm] 
\tikzstyle{brane}=[draw] 
\tikzset{sevenb/.style={circle, draw,fill=white,inner sep=2.5pt}}
\tikzstyle{gridline}=[draw, black!20!] 
\tikzset{hasse/.style={circle, fill,inner sep=2pt}}
\tikzset{
  ->-/.style={
    decoration={
      markings,
      mark=at position 0.5 with {
        \pgftransformxshift{2.2pt}
        \arrow{Stealth}
      }
    },
    postaction={decorate}
  },
  -<-/.style={
    decoration={
      markings,
      mark=at position 0.5 with {
        \pgftransformxshift{-2.2pt}
        \arrowreversed{Stealth}
      }
    },
    postaction={decorate}
  }
}

\makeatletter

\usepackage{tikz}
\usetikzlibrary{arrows}
\usetikzlibrary{arrows.meta}
\usetikzlibrary{shapes.geometric,calc,arrows, positioning,shapes.misc,decorations.markings}
\tikzset{
  big arrow/.style={
    decoration={markings,mark=at position 1 with {\arrow[scale=2,#1]{>}}},
    postaction={decorate},
    shorten >=0.4pt},
  big arrow/.default=black}
  
\pgfdeclarelayer{edgelayer}
\pgfdeclarelayer{nodelayer}
\pgfsetlayers{edgelayer,nodelayer,main} 
\tikzstyle{none}=[inner sep=0pt] 

\tikzstyle{NodeCross}=[draw, shape=circle, cross out, inner sep=0pt, minimum size=6pt,line width=0.25mm]
\tikzstyle{Circle}=[draw, shape=circle, black,  fill=black, inner sep=0pt, minimum size=6pt]
\tikzstyle{Star}=[draw, shape=star, fill=black, star points=8, inner sep=0pt, minimum size=8pt]

\tikzstyle{DashedLine}=[-, densely dashed, line width=0.25mm]
\tikzstyle{DottedLine}=[-, dotted, line width=0.25mm]
\tikzstyle{ThickLine}=[-, line width=0.25mm]
\tikzstyle{ArrowLineRight}=[-, -{Stealth[scale=1.75]}, line width=0.1mm, scale=5]
\tikzstyle{RedLine}=[-, draw={rgb,255: red,191; green,0; blue,0}, fill=none, line width=0.25mm]
\tikzstyle{DottedRed}=[-, dotted, draw={rgb,255: red,191; green,0; blue,0}, fill=none, line width=0.25mm]
\tikzstyle{DashedLineThin}=[-, densely dashed, line width=0.125mm, fill=none, draw=black]
\tikzstyle{ArrowLineRed}=[-, -{Stealth[scale=1.75]}, draw={rgb,255: red,191; green,0; blue,0}, line width=0.1mm, scale=5]

\tikzset{gauge/.style={rounded rectangle, draw=black!100, thick, minimum size=5mm},  gaugeD/.style={rounded rectangle, draw=black!100,double,thick,minimum size=5mm},  empty/.style={rounded rectangle, draw=white!100, thick, minimum size=5mm}, flavor/.style={rectangle, draw=black!100, thick, minimum size=5mm},flavorD/.style={rectangle, draw=black!100, double,thick, minimum size=5mm}}

\newtheorem{Theorem}{Theorem}[section]

\newcommand{\be}{\begin{equation}}
\newcommand{\ee}{\end{equation}}
\newcommand{\ba}{\begin{aligned}}
\newcommand{\ea}{\end{aligned}}

\newcommand{\C}{\mathbb{C}}
\renewcommand{\P}{\mathbb{P}}

\newcommand{\cN}{\mathcal{N}}

\newcommand{\nn}{\nonumber}
\newcommand{\bea}{\begin{eqnarray}}
\newcommand{\eea}{\end{eqnarray}}

\newcommand{\Z}{{\mathbb Z}}

\def\unit{{1\kern-.65ex {\rm l}}}
\def\1{{1\kern-.65ex {\rm l}}}

\def\CF{{\cal F}}

\definecolor{myblue}{RGB}{30,70,170}
\definecolor{mygreen}{RGB}{20,110,70}
\definecolor{mypink}{RGB}{210,70,130}

\newcount\hour \newcount\minute
\hour=\time \divide \hour by 60
\minute=\time
\def\now{%
\ifnum \hour<13
  \ifnum \hour=0 \advance \hour by 12 \number\hour:\else \number\hour:\fi%
     \ifnum \minute<10 0\fi%
     \number\minute%
\ A.M.%
\else \advance \hour by -12 \number\hour:%
  \ifnum \minute<10 0\fi%
  \number\minute%
  \ P.M.%
\fi%
}

\makeatother

\def\bp{\begin{pmatrix}}
\def\ep{\end{pmatrix}}

\begin{document}

\baselineskip=18pt  
\numberwithin{equation}{section}  
\allowdisplaybreaks  

\thispagestyle{empty}

\vspace*{0.8cm} 
\begin{center}

{\huge $G_2$-Manifolds from 4d $\mathcal{N}=1$ Quivers}

\vspace*{1.5cm}
{Andreas P. Braun$\,^1$, 
Oscar Lewis$\,^2$, 
Matteo Sacchi$\,^3$, 
Sakura Sch\"afer-Nameki$\,^2$}\\
\bigskip
{\it $^1$ Department of Mathematical Sciences, Durham University,\\ Upper
Mountjoy Campus, Stockton Rd, Durham DH1 3LE, UK} \\
{\it $^2$ Mathematical Institute, University of Oxford, \\
Andrew-Wiles Building,  Woodstock Road, Oxford, OX2 6GG, UK}\\
{\it $^3$ Simons Center for Geometry and Physics, Stony Brook University, \\
Stony Brook, NY 11794, USA}

\vspace*{0.8cm}
\end{center}
\vspace*{.5cm}

\noindent
Inspired by quantum field-theoretic constructions of 4d $\mathcal{N}=1$  quiver gauge theories that flow to superconformal field theories (SCFTs), we construct 7d manifolds of $G_2$-holonomy, which geometrically engineer these quivers in M-theory. 
Field theoretically, the 4d quivers are obtained by 
flux torus compactifications of 6d $\mathcal{N}=(1,0)$ SCFTs. 
The 6d theory compactified on a circle gives rise to a 5d KK-theory, which has a geometric realization as M-theory on a non-compact elliptically fibered Calabi-Yau threefold.
Following the field-theoretical prescription, these local geometries are fibered over a circle to realize (topological) $G_2$-holonomy manifolds.
We carry this out concretely in the case of the rank 1 E-string theory and its 4d quivers and construct new families of $G_2$-holonomy spaces.

\newpage


\tableofcontents


\section{Introduction}

Geometric engineering of quantum field theories using non-compact geometries with special holonomy is a fruitful enterprise, for physics and geometry alike. On the one hand, it provides a systematic approach to the study and classification of strongly coupled higher-dimensional renormalization group fixed points, in particular 5d and 6d superconformal field theories (SCFTs) \cite{Seiberg:1996bd,Seiberg:1996vs, DelZotto:2014hpa, Bhardwaj:2015xxa, Heckman:2013pva,Heckman:2015bfa, Xie:2017pfl, Jefferson:2018irk, Bhardwaj:2018yhy, Bhardwaj:2018vuu, Apruzzi:2019vpe, Apruzzi:2019opn,Apruzzi:2019enx, Apruzzi:2019kgb}. 
For this application the powerful results on canonical singularities of Calabi-Yau threefolds play a crucial role. 
On the other hand, it can give insights into geometries that are otherwise much more difficult to construct, namely $G_2$ and Spin$(7)$ holonomy manifolds.  
Due to the absence of a Yau-theorem for such geometries, examples 
remain sporadic and reliant on specific constructions. 

In this paper we will show how insights from physics can motivate the construction of vast classes of $G_2$-holonomy manifolds. Such physics motivated approaches have been insightful in the past, see for example \cite{Acharya:2001gy, Halverson:2014tya, Braun:2017ryx, Braun:2016igl, Braun:2017uku,  Braun:2017csz, Braun:2018joh, Braun:2018vhk, Braun:2018fdp, Acharya:2018nbo, Barbosa:2019bgh, Acharya:2019svi, Acharya:2020vmg, Acharya:2021rvh,  Acharya:2023bth, Braun:2023fqa}. 
{Our approach is motivated by} recent constructions in field theory of 4d $\cN=1$ SCFTs from 6d $\cN= (1,0)$ theories on a circle to 5d KK-theories, and then subsequently to 4d including flux tubes and duality domain walls, {studied e.g.~in~\cite{Gaiotto:2015usa,Ohmori:2015pua,Ohmori:2015pia,Razamat:2016dpl,Bah:2017gph,Kim:2017toz,Kim:2018bpg,Kim:2018lfo,Razamat:2018gro,Zafrir:2018hkr,Ohmori:2018ona,Chen:2019njf, Pasquetti:2019hxf,Baume:2021qho, Sabag:2022hyw,Giacomelli:2023qyc}, see in particular \cite{Razamat:2022gpm} for a review. }
The geometry underlying this is the elliptically fibered Calabi-Yau threefold that constructs the 6d theory in F-theory. 
We will fiber this over its extended K\"ahler cone {(field theoretically, we fiber it over the extended Coulomb branch of the 5d KK-theory)}, and apply automorphisms of the fiber, which then result in a geometry that is overall a 7-manifold. 
As the field theory predicts 4d $\cN=1$ supersymmetry, we conjecture that the resulting geometry is a (topological\footnote{By topological $G_2$-manifold, we mean a 7-dimensional topological space described at the level of the singularity structure and curve volumes. We do not construct torsion-free $G_2$-holonomy metrics on any of the spaces constructed in this work.}) $G_2$ holonomy manifold.

We build on the previous construction in  \cite{Braun:2023fqa}, where we geometrized S-duality domain walls, to conjecture new topological $G_2$-manifolds that geometrize the 4d $\mathcal{N}=1$ quiver theories arising from  flux tori compactifications of 6d $\mathcal{N}=(1,0)$ theories. The flux tori compactifications of such theories naturally goes via 5d.

\paragraph{E-string flux torus quivers.}
Our approach can in principle be applied to any 6d $\cN=(1,0)$ SCFT, but we focus here on the rank-1 E-string and its flux torus compactifications \cite{Kim:2017toz}. The 4d quivers corresponding to torus compactifications of the rank-1 E-string with flux $\mathcal{F}=(-2^{2n},0^{8-2n})$\footnote{The flux is expressed in terms of the Cartan of the $SO(16)$ subgroup of $E_8$, as we explain more in details in the main text.} for the $E_8$ global symmetry are shown in Figure~\ref{fig:Torusq1intro} for the cases $n=4,3,2,1$. 

\begin{figure}
    \centering
       \begin{tikzpicture}[baseline,scale=0.8]
\tikzstyle{every node}=[font=\scriptsize]
       \node[draw, circle,thick] (p1) at (0,-2.2) {\fontsize{12pt}{12pt}\selectfont $2$};
       \node[draw, circle,thick] (p2) at (3,0) {\fontsize{12pt}{12pt}\selectfont $2$};
       \node[draw, circle,thick] (p3) at (6,-2.2) {\fontsize{12pt}{12pt}\selectfont $2$};
       \node[draw, rectangle,thick] (p4) at (3,-2.2) {\fontsize{12pt}{12pt}\selectfont $2n$};
       \node[draw, circle,thick] (p5) at (3,-4.4) {\fontsize{12pt}{12pt}\selectfont $2$};
       \node[draw, rectangle,thick] (p6) at (3,1.6) {\fontsize{12pt}{12pt}\selectfont $8-2n$};
       \draw[-,thick] (p1) to  node[rotate=45]{\large $\times$}    (p2);
       \draw[-,thick] (p2) to  node[rotate=45]{\large $\times$}    (p3);
       \draw[-,thick] (p3) to  node[rotate=45]{\large $\times$}    (p5);
       \draw[-,thick] (p5) to  node[rotate=45]{\large $\times$}    (p1);
       \draw[->-,thick] (p1) to      (p4);
       \draw[->-,thick] (p3) to      (p4);
       \draw[-<-,thick] (p2) to      (p4);
       \draw[-<-,thick] (p5) to      (p4);
       \draw[-<-,thick] (p1) to      (0,1.6);
       \draw[-,thick] (0,1.6) to      (p6);
       \draw[->-,thick] (p3) to      (6,1.6);
       \draw[-,thick] (6,1.6) to      (p6);
       \end{tikzpicture}
    \caption{The 4d $\mathcal{N}=1$ quiver theory corresponding to the compactification of the 6d E-string theory on a torus with flux $\mathcal{F}=(-2^{2n},0^{8-2n})$. The quiver is supplemented by cubic and quartic superpotential interactions corresponding to closed loops in the quiver.}
    \label{fig:Torusq1intro}
\end{figure}

We apply the following logic: first define maps that geometrically capture the action of S-duality domain walls and gluing of quivers, which we call $\mu_s$ and $\mu_{t,n}$, respectively. 
These act as automorphisms on the lattice of curves in the dP$_9$. To $\mu_s$ and $\mu_{t,n}$ we associate so called \textbf{local building blocks} $\mu_s^{\varepsilon}$ and $\mu_{t,n}^{\varepsilon}$, defined as the non-trivial fibration of the threefold over an interval of length $\varepsilon$, with the respective maps applied at the midpoint of the interval. These local building blocks capture the local geometry associated to the domain walls and locations of quiver gluing.  

Next, we concatenate these local building blocks in the same order as one would glue together flux tubes in field theory to get the desired quivers. This produces a non-trivial fibration of the threefold over an interval. Finally, we glue the threefolds at the boundaries of the interval together by applying a suitable automorphism, which then becomes the monodromy map of the threefold fibered over a circle. This action captures the field theory process of closing the flux tubes to build a flux torus. 

\paragraph{Automorphisms.}
Automorphisms of partially resolved singular geometries are not particularly well-studied. Here we will substantiate our claims in two ways: one is to explicitly exhibit the automorphisms we require for a particular choice of complex structure, where the singular fiber is not $II^*$, which realizes the $E_8$ flavor symmetry, but {deformed} to an $I_8$, which corresponds to a manifest $SU(8)$ flavor symmetry. In this {deformed} complex structure we can prove the existence of the required automorphisms due to an enhanced Mordell-Weil group of the elliptic fibration, together with Picard-Lefschetz transformations. The other support comes from the field theory, where we can identify the duality transformation in the subspace where there is a manifest $SU(8)$ flavor symmetry.

\paragraph{Checks.}
There are multiple non-trivial checks of the proposed geometries we can perform beyond reading off the quiver from the volumes of the three cycles in the 7-manifold. First, by tracking the shifts of the $E_8$ roots of the 6d flavor symmetry by multiples of the elliptic fiber $E$ we can determine the sublattice of $E_8$ invariant under the monodromy map. For the above examples the invariant sublattices computed from the geometry correspond to the Lie algebras
\be
\ba
n=4:\quad &\mathfrak{e}_7\oplus \mathfrak{u}(1)\,,\\
n=3:\quad &\mathfrak{e}_6\oplus \mathfrak{su}(2)\oplus \mathfrak{u}(1)\,,\\
n=2:\quad &\mathfrak{so}(14)\oplus \mathfrak{u}(1)\,,\\
n=1:\quad &\mathfrak{e}_7\oplus \mathfrak{u}(1)\,.
\ea
\ee
which matches exactly with the expected IR enhanced non-abelian flavor symmetry of the quiver theories \cite{Kim:2017toz}. Second, it was pointed out in \cite{Sabag:2022hyw} that the mass parameters of the 5d gauge theory get shifted by multiples of the KK mass $m_{\text{KK}}$ when we build a flux tube with multiple domain walls. This causes a breaking of the $\mathfrak{u}(1)$ KK symmetry to a finite subgroup when we identify the two ends of the tube to build the torus model. We will show how the shifts of the $E_8$ roots by the elliptic fiber $E$ in our geometric analysis are in agreement with the shifts of the masses by $m_{\text{KK}}$ in the field theory. Finally, different flux tori quiver theories are actually IR dual to each other if their fluxes can be related by a Weyl transformation of the 6d flavor symmetry. We will recover these field theory dualities from the geometric perspective, by showing the equivalence of the geometries.

Once we have constructed the topological $G_2$ geometries for the quivers in Figure~\ref{fig:Torusq1intro}, we give a general recipe for building topological $G_2$-manifolds starting from a more general flux in the $E_8$ root lattice.\footnote{In this work we focus only on $E_8$ fluxes that are not center fluxes. By this we mean a flux breaking $E_8\to H\times U(1)^k$ where $H$ is non-abelian, which involves only a flux in the abelian part and not for the center of $H$. We leave an investigation of the geometric realization of 4d $\mathcal{N}=1$ quivers corresponding to more general fluxes for future work.} We show that the resulting map between fluxes and monodromies is a group homomorphism, so that addition of fluxes can be expressed as concatenation of domain walls. Not only do we learn the geometry of the 7-manifold associated to a 4d $\mathcal{N}=1$ quiver theory from this construction, but can easily calculate features of the 4d theory such as enhanced non-abelian flavor symmetry. These are not manifest in the quiver and require superconformal index calculations to see from the field theory side, even for the simplest of examples.

\paragraph{Plan of the paper.} The paper is organized as follows: in Section~\ref{sec: Field Theory sec2} we review the relevant field theory construction of flux tubes and tori compactifications of the rank-1 E-string theory with non-trivial $E_8$ flux. In particular, we review the S-duality domain walls and gluing procedures performed on the quiver to build the 4d $\mathcal{N}=1$ theories for flux tori compactifications. 
In Section~\ref{sec: E-string geometry} we review the geometry of the Calabi-Yau threefold given by the total space of the canonical bundle over dP$_9$, which realizes the rank 1 E-string from F-theory, including the K\"ahler form and flavor symmetries. 
In Section~\ref{sec:constructionofXS} we discuss a specific geometric realization of the dP$_9$, its associated threefold, and its corresponding flopped phases. This corresponds to a specific choice of complex structure such that the dP$_9$ surface can be described by a Weierstrass model with $I_8$ fiber, and will be used throughout our construction. We then use the generalized box graphs of~\cite{Sabag:2022hyw} to construct non-trivial automorphisms of the dP$_9$ which will be important for the $G_2$ construction.

With the dictionary between field theory and geometry established, in Section~\ref{sec: Geometric Realization} we begin by concatenating the local building blocks associated to these automorphisms in the analogous way as the quivers are glued in field theory to reproduce quivers for the cases $n=4,3,2,1$ in Figure~\ref{fig:Torusq1intro}. This results in a fibration of the threefold over an interval, which we close by applying a monodromy map. We are able to directly identify the singularities and 3-cycles inside the topological $G_2$ geometry that give rise to the corresponding quivers. 
We then provide a more algebraic description of the setup that generalizes to more general $E_8$ fluxes, and prove that the map we use to glue the threefolds identifies isomorphic geometries. On the level of homology, the monodromy maps are Eichler-Siegel transformations~\cite{heckman2001moduli}, which allows us to show that the identification between fluxes and monodromies is a homomorphism.
To conclude, we lay down the general framework for systematically building topological $G_2$-manifolds from an arbitrary flux of $E_8$.

\section{Quivers from Flux Tubes}\label{sec: Field Theory sec2}

\subsection{The Basic Flux Tube Theory}

In this section we review the field theory constructions of the 4d $\mathcal{N}=1$ theories arising from compactifications of the 6d rank-1 E-string theory on tubes and tori \cite{Kim:2017toz}. When considering the compactification of a generic six-dimensional $\mathcal{N}=(1,0)$ SCFT on a Riemann surface preserving 4d $\mathcal{N}=1$, we have the possibility to turn on a flux through the surface for the global symmetry $G$ of the 6d theory. This has to satisfy Dirac's quantization condition, which forces it to be an element of the coweight lattice of $G$
\begin{equation}
    \mathcal{F}\in\Lambda_{G}^\vee\,.
\end{equation}
For the E-string theory the global symmetry is $E_8$. It is convenient to express the flux vector in terms of the Cartan of the $\mathfrak{so}(16)\subset \mathfrak{e}_8$ subgroup, in a basis where the vector representation ${\bf 16}_{\mathfrak{so}(16)}$ has weights
\be
    \left(\pm 1,0,0,0,0,0,0,0\right)\,+\,\text{permutations}\,.
\ee
The lattice of fluxes can then be written as
\begin{equation} \label{eq: field theory fluxes}
    \Lambda_{E_8}^\vee\cong\Lambda_{E_8}\cong\left\{\mathcal{F}\in \mathbb{Z}^8 :\sum_{i=1}^{8} \mathcal{F}_i \in 2\mathbb{Z}\right\}\cup\left\{\mathcal{F}\in \left(\mathbb{Z}+\frac{1}{2}\right)^8 :\sum_{i=1}^{8} \mathcal{F}_i \in 2\mathbb{Z}\right\}\,.
\end{equation}

One of the basic building blocks to study compactifications of the 6d E-string theory on Riemann surfaces is the theory obtained by compactification on a tube with flux
\be\label{eq:basicflux}
    \mathcal{F}=\left(-\frac{1}{2},-\frac{1}{2},-\frac{1}{2},-\frac{1}{2},-\frac{1}{2},-\frac{1}{2},-\frac{1}{2},-\frac{1}{2}\right)\,.
\ee
This flux tube can be understood as one of the roots of $\mathfrak{e}_8$ that is a weight of the spinor representation ${\bf 128}_{\mathfrak{so}(16)}$ arising from decomposing the adjoint representation of $\mathfrak{e}_8$ with respect to the $\mathfrak{so}(16)$ subgroup
\be
    {\bf 248}_{\mathfrak{e}_8}\to{\bf 120}_{\mathfrak{so}(16)}\oplus{\bf 128}_{\mathfrak{so}(16)}\,,
\ee
where ${\bf 120}_{\mathfrak{so}(16)}$ is instead the adjoint representation of $\mathfrak{so}(16)$. Hence, we see that the flux of the basic tube theory corresponds to one of the Weyl transformations of $\mathfrak{e}_8$ that is not a Weyl transformation of $\mathfrak{so}(16)$.

The 4d $\mathcal{N}=1$ theory corrresponding to this flux tube can be summarized by the quiver in Figure \ref{fig:Flux tube theory}. The theory possesses the global symmetry 
\be\label{eq:globalsymm}
    \mathfrak{su}(2)\oplus\mathfrak{su}(2) \oplus\mathfrak{su}(8)\oplus\mathfrak{u}(1)\oplus\mathfrak{u}(1)\,,
\ee
whose non-abelian part is represented by square boxes in the quiver. This is due to the presence of the following superpotential interaction among the chiral fields:
\be
\mathcal{W}=bQ^2+LQR\,,
\ee
where all flavor indices are understood.

\begin{figure}[t]
    \centering
    \begin{tikzpicture}[baseline,scale=1]
\tikzstyle{every node}=[font=\scriptsize]
       \node[draw, rectangle] (p1) at (0,0) {\fontsize{12pt}{12pt}\selectfont $\,\,2\,\,$};
       \node[draw, rectangle] (p2) at (3,0) {\fontsize{12pt}{12pt}\selectfont $\,\,2\,\,$};
       \node[draw, rectangle] (p3) at (1.5,-2.2) {\fontsize{12pt}{12pt}\selectfont $\,\,8\,\,$};
       \draw[-,thick] (p1) to  node[rotate=0]{\Large $\times$}    (p2);
       \draw[->-,thick] (p1) to      (p3);
       \draw[-<-,thick] (p2) to      (p3);
       \node[left] at (0.6,-1.2)  {\fontsize{12pt}{12pt}\selectfont $L$};
       \node[right] at (2.3,-1.2)  {\fontsize{12pt}{12pt}\selectfont $R$};
       \node[above] at (0.8,0)  {\fontsize{12pt}{12pt}\selectfont $Q$};
       \node[above] at (1.5,0.2)  {\fontsize{12pt}{12pt}\selectfont $b$};
       \end{tikzpicture}
    \caption{The basic flux tube theory. We use a quiver notation where boxes denote flavor symmetries and lines connecting them denote chiral multiplets transforming under such symmetries. In particular, $Q$ is in the bifundamental of $\mathfrak{su}(2)\oplus \mathfrak{su}(2)$, $L$ is in the fundamental of the left $\mathfrak{su}(2)$ and of $\mathfrak{su}(8)$, $R$ is in the fundamental of the right $\mathfrak{su}(2)$ and in the antifundamental of $\mathfrak{su}(8)$, and $b$ is a singlet which is denoted by a cross in the quiver.}
    \label{fig:Flux tube theory}
\end{figure}

As first explained in \cite{Gaiotto:2015una,Kim:2017toz} (see also \cite{Sabag:2022hyw,Braun:2023fqa}), by viewing the tube as a circle times an interval, this theory can be understood as a domain wall interpolating between two copies of the 5d $\mathcal{N}=1$ gauge theory arising from circle reduction of the 6d E-string theory with an holonomy breaking $\mathfrak{e}_8\to \mathfrak{so}(16)$ \cite{Ganor:1996pc}. The latter is given by an $\mathfrak{su}(2)$ gauge theory with 8 fundamental hypermultiplets. Thanks to the fact that the fundamental representation of $\mathfrak{su}(2)$ is pseudo-real, we can think of the hypers as 16 half-hypermultiplets forming the vector representation of $\mathfrak{so}(16)$. 

The Intriligator-Morrison-Seiberg (IMS) prepotential \cite{Intriligator:1997pq} of this 5d $\mathcal{N}=1$ gauge theory is given by
\be\label{eq:IMSprepotEstring}
\CF_{\text{IMS}}=\frac{m_{\text{KK}}}{2} \phi^2 + \frac{4}{3}\phi^3 -\frac{1}{12}\sum_{i=1}^{8} \sum_{\pm} |\pm\phi +m_i|^3\,,
\ee
where $m_{\text{KK}}=\frac{1}{R_5}>0$ is the scale associated with the circle of the sixth dimension, $\phi$ is the Coulomb branch parameter and $m_i$ are the masses of the hypers in the Cartan of the $\mathfrak{so}(16)$ flavor symmetry. The KK mass measures the mass $m_\lambda$ of the instanton particle, which is related to the inverse gauge coupling squared. For example, in a phase where $\phi=0$ and $m_i>0$ we have
\be
m_\lambda=\frac{1}{2g_{\text{eff}}^2}=\frac{m_{\text{KK}}}{2}-\frac{1}{2}\sum_{i=1}^8m_i\,.
\ee
The domain wall implements the following transformations of the gauge theory parameters, which leaves the prepotential \eqref{eq:IMSprepotEstring} invariant (up to shifts that are constant in $\phi$):
\be\label{eq:DWtransf}
\ba
m_\lambda&\to-m_\lambda\,,\\
\phi&\to\phi+\frac{m_\lambda}{2}\,,\\
m_i&\to m_i+\frac{m_\lambda}{2}\,,\qquad i=1,\cdots,8\,.
\ea
\ee
This is the aforementioned Weyl transformation of $\mathfrak{e}_8$ that is not a Weyl transformation of $\mathfrak{so}(16)$. 

The origin of the various fields in the quiver of Figure \ref{fig:Flux tube theory} is of two types. First, one has to study boundary conditions of the 5d theory at the two extrema of the tube. These essentially define the punctures of the tube. The standard choice is to give Dirichlet boundary conditions to the gauge fields, thus freezing the $\mathfrak{su}(2)$ gauge symmetry to a global symmetry. This is the origin of the two $\mathfrak{su}(2)$ symmetries of the quiver theory, one for each copy of the 5d $\mathfrak{su}(2)$ gauge theory on the two sides of the wall. Moreover, each of the 8 hypermultiplets is split into a pair of 4d $\mathcal{N}=1$ chirals in conjugate representations and for each pair we give Neumann boundary conditions to one and Dirichlet to the other. The fields that are given Neumann boundary conditions survive as dynamical fields and give the chirals $L$, $R$ in the quiver theory. This explains the rest of the global symmetry \eqref{eq:globalsymm}. Specifically, the flux \eqref{eq:basicflux} would break the global symmetry from $\mathfrak{e}_8$ to $\mathfrak{e}_7\oplus \mathfrak{u}(1)$, since this is the subgroup of $\mathfrak{e}_8$ whose roots are orthogonal to the flux vector. However, the presence of the punctures and in particular the choice of boundary conditions further break $\mathfrak{e}_7\to \mathfrak{su}(8)$. The remaining $\mathfrak{u}(1)$ symmetry in \eqref{eq:globalsymm} has been interpreted in \cite{Hwang:2021xyw,Sabag:2022hyw} as the KK symmetry coming from the 6d to 5d circle reduction.

The remaining 4d fields, specifically the $\mathfrak{su}(2)\oplus \mathfrak{su}(2)$ bifundamental $Q$ and the singlet $b$, instead correspond to degrees of freedom that live intrinsically on the domain wall itself. Their presence was argued first in \cite{Gaiotto:2015una} by computations of the instanton partition function, and later in \cite{Kim:2017toz} by checking that the theory in Figure \ref{fig:Flux tube theory} passes several non-trivial and independent tests for it to be the tube compactification with flux of the 6d E-string theory.

\subsection{Gluing Flux-Tubes}

Given the basic flux tube theory, we can concatenate various copies of it with suitable gluing prescriptions to realize tubes of more general fluxes, and eventually identify the two ends to build a flux torus. There exist two main types of gluing operations. More general gluings are achieved by considering mixtures of these two, see e.g.~\cite{Kim:2017toz,Pasquetti:2019hxf} for more details.

\paragraph{$\Phi$-gluing.} Given two copies of the flux tube theory, we can glue them together by introducing an extra $\mathfrak{su}(2)\oplus \mathfrak{su}(8)$ bifundamental chiral $\Phi$ with the interaction
\begin{equation}\label{eq:Mixgluing}
    \delta\mathcal{W}=\Phi\left(R^{(1)}-L^{(2)}\right)\,,
\end{equation}
where $L^{(1)}$, $R^{(1)}$ are the fields of the first tube theory and $L^{(2)}$, $R^{(2)}$ those of the second. This deformation identifies the right $\mathfrak{su}(2)$ symmetry of the first tube with the left $\mathfrak{su}(2)$ of the second tube, as well as their $\mathfrak{su}(8)$ symmetries. Moreover, as part of the gluing prescription, the diagonal $\mathfrak{su}(2)$ symmetry is gauged.

The interaction \eqref{eq:Phigluing} is a mass deformation for $\Phi$ and one combination of $R^{(1)}$, $L^{(2)}$. The surviving combination can be determined from the equation of motion of $\Phi$, which sets
\begin{equation}
    R^{(1)}=L^{(2)}\,.
\end{equation}
This corresponds to a single chiral field in the bifundamental of $\mathfrak{su}(2)\oplus \mathfrak{su}(8)$. The gluing we have just described is usually called $\Phi$-gluing and is summarized in terms of the quiver in Figure \ref{fig:Phi gluing fig}.

\begin{figure}[t]
    \centering
    \begin{tikzpicture}
    \begin{scope}[shift={(0,0)}]
    \node (p1) at (0,0) [rectangle,draw,thick] {$2$};
    \node (p2) at (3,0) [rectangle,draw,thick] {$2$};
    \node (p3) at (1.5,-1.5) [rectangle,draw,thick] {$8$};
    \draw[black, thick] (p1) -- (p2);
    \draw[black, thick,->-] (p1) -- (p3);
    \draw[black, thick,-<-] (p2) -- (p3);
    \node[below, yshift=0.25cm] at (1.5,0) {$\times$};
    \end{scope}
    \begin{scope}[shift={(4,0)}]
    \node at (0,0) [rectangle,draw,thick] {$2$};
    \node at (0,-1.5) [rectangle,draw,thick] {$8$};
    \draw[black, thick,->-] (0,-0.25) -- (0,-1.25);
    \end{scope}
    \begin{scope}[shift={(5,0)}]
    \node (p1) at (0,0) [rectangle,draw,thick] {$2$};
    \node (p2) at (3,0) [rectangle,draw,thick] {$2$};
    \node (p3) at (1.5,-1.5) [rectangle,draw,thick] {$8$};
    \draw[black, thick] (p1) -- (p2);
    \draw[black, thick,-<-] (p1) -- (p3);
    \draw[black, thick,->-] (p2) -- (p3);
    \node[below, yshift=0.25cm] at (1.5,0) {$\times$};
    \end{scope}
    \begin{scope}[shift={(9,0)}]
    \draw[black, thick,->] (0,-0.5) -- (1.25,-0.5);
    \end{scope}
    \begin{scope}[shift={(11,0)}]
    \node (p1) at (0,0) [rectangle,draw,thick] {$2$};
    \node (p3) at (5,0) [rectangle,draw,thick] {$2$};
    \node (p2) at (2.5,0) [circle,draw,thick] {$2$};
    \node (p4) at (2.5,-1.5) [rectangle,draw,thick] {$8$};
    \draw[black, thick] (p1) -- (p2);
    \draw[black, thick] (p2) -- (p3);
    \draw[black, thick,->-] (p1) -- (p4);
    \draw[black, thick,-<-] (p2) -- (p4);
    \draw[black, thick,->-] (p3) -- (p4);
    \node[below, yshift=0.25cm] at (1.3,0) {$\times$};
    \node[below, yshift=0.25cm] at (3.85,0) {$\times$};
    \end{scope}
    \end{tikzpicture}
    \caption{The $\Phi$-gluing of two copies of the basic flux tube.}
    \label{fig:Phi gluing fig}
\end{figure}

Note that this type of gluing identifies the symmetries of the two tube theories up to the Weyl transformation of $\mathfrak{so}(16)$ that corresponds to conjugating all of its Cartan elements. In fact, the gluing can also be understood as implementing a symmetry transformation of the parameters of the 5d $\mathcal{N}=1$ gauge theory similarly to the domain wall transformation \eqref{eq:DWtransf}. In this case, we have the $\mathfrak{so}(16)$ Weyl transformation
\be
\ba
\phi&\to\phi\,,\\
m_\lambda&\to m_\lambda+\sum_{i=1}^8m_i\,,\\
m_i&\to -m_i\,,\qquad i=1,\cdots,8\,.
\ea
\ee
Moreover, the result of the $\Phi$-gluing can itself be understood as a compactification of the 6d E-string theory on a cylinder, but this time with a flux that is obtained by summing those of the two basic flux tubes that were glued
\be
\ba
    \mathcal{F}&=\left(-\frac{1}{2},-\frac{1}{2},-\frac{1}{2},-\frac{1}{2},-\frac{1}{2},-\frac{1}{2},-\frac{1}{2},-\frac{1}{2}\right)+\left(-\frac{1}{2},-\frac{1}{2},-\frac{1}{2},-\frac{1}{2},-\frac{1}{2},-\frac{1}{2},-\frac{1}{2},-\frac{1}{2}\right)\nn\\
    &=(-1,-1,-1,-1,-1,-1,-1,-1)\,.
    \ea
\ee

\paragraph{$S$-gluing.} Alternatively, we can glue the two tubes without introducing any extra field, but identifying their symmetries via the superpotential
\begin{equation}\label{eq:Sgluing}
    \delta\mathcal{W}=R^{(1)}L^{(2)}\,.
\end{equation}
In this case, all the fields are massive and so there is no chiral left connecting the gauged $\mathfrak{su}(2)$ node with the $\mathfrak{su}(8)$ flavor node. This gluing is usually called $S$-gluing and is summarized in terms of the quiver in Figure \ref{fig:S gluing fig}.

\begin{figure}[t]
    \centering
    \begin{tikzpicture}
    \begin{scope}[shift={(0,0)}]
    \node (p1) at (0,0) [rectangle,draw,thick] {$2$};
    \node (p2) at (3,0) [rectangle,draw,thick] {$2$};
    \node (p3) at (1.5,-1.5) [rectangle,draw,thick] {$8$};
    \draw[black, thick] (p1) -- (p2);
    \draw[black, thick,->-] (p1) -- (p3);
    \draw[black, thick,-<-] (p2) -- (p3);
    \node[below, yshift=0.25cm] at (1.5,0) {$\times$};
    \end{scope}
    \begin{scope}[shift={(4,0)}]
    \node (p1) at (0,0) [rectangle,draw,thick] {$2$};
    \node (p2) at (3,0) [rectangle,draw,thick] {$2$};
    \node (p3) at (1.5,-1.5) [rectangle,draw,thick] {$8$};
    \draw[black, thick] (p1) -- (p2);
    \draw[black, thick,->-] (p1) -- (p3);
    \draw[black, thick,-<-] (p2) -- (p3);
    \node[below, yshift=0.25cm] at (1.5,0) {$\times$};
    \end{scope}
    \begin{scope}[shift={(8,0)}]
    \draw[black, thick,->] (0,-0.5) -- (1.25,-0.5);
    \end{scope}
    \begin{scope}[shift={(10,0)}]
    \node (p1) at (0,0) [rectangle,draw,thick] {$2$};
    \node (p3) at (5,0) [rectangle,draw,thick] {$2$};
    \node (p2) at (2.5,0) [circle,draw,thick] {$2$};
    \node (p4) at (2.5,-1.5) [rectangle,draw,thick] {$8$};
    \draw[black, thick] (p1) -- (p2);
    \draw[black, thick] (p2) -- (p3);
    \draw[black, thick,->-] (p1) -- (p4);
    \draw[black, thick,->-] (p3) -- (p4);
    \node[below, yshift=0.25cm] at (1.3,0) {$\times$};
    \node[below, yshift=0.25cm] at (3.85,0) {$\times$};
    \end{scope}
    \begin{scope}[shift={(12.5,-2)}]
    \node[rotate=90] at (0,-0.2) {$=$};
    \end{scope}
    \begin{scope}[shift={(12.5,-3)}]
    \node at (0,0) [rectangle,draw,thick] {$2$};
    \node at (0,-1.5) [rectangle,draw,thick] {$8$};
    \draw[black, thick,->-] (-0.1,-0.25) -- (-0.1,-1.25);
    \draw[black, thick,-<-] (0.1,-0.25) -- (0.1,-1.25);
    \end{scope}
    \end{tikzpicture}
    \caption{The $S$-gluing of two copies of the basic flux tube.}
    \label{fig:S gluing fig}
\end{figure}

This gluing actually turns out to give a trivial theory. Indeed, the $\mathfrak{su}(2)$ gauge node has only four fundamental chiral multiplets. This theory is known to have a quantum deformed moduli space of vacua and the $\mathfrak{su}(2)\oplus \mathfrak{su}(2)$ global symmetry is consequently broken down to its diagonal $\mathfrak{su}(2)$ subgroup \cite{Seiberg:1994bz}. We are thus left with two pairs of chirals multiplets connecting this $\mathfrak{su}(2)$ with the $\mathfrak{su}(8)$ which are in conjugate representations. This is compatible with a mass term relating these two sets of fields, so that at low energies there is no dynamical degree of freedom left.

Compatibly with this, the flux associated to this composite tube is the difference of that of the components and is thus zero
\be
\ba
    \mathcal{F}&=\left(-\frac{1}{2},-\frac{1}{2},-\frac{1}{2},-\frac{1}{2},-\frac{1}{2},-\frac{1}{2},-\frac{1}{2},-\frac{1}{2}\right)-\left(-\frac{1}{2},-\frac{1}{2},-\frac{1}{2},-\frac{1}{2},-\frac{1}{2},-\frac{1}{2},-\frac{1}{2},-\frac{1}{2}\right)\\
    &=(0,0,0,0,0,0,0,0)\,.
    \ea
\ee
At the level of the parameters of the 5d $\mathcal{N}=1$ gauge theory, the $S$-gluing corresponds to the trivial identification
\be
\ba
\phi&\to\phi\,,\\
m_\lambda&\to m_\lambda\,,\\
m_i&\to m_i\,,\qquad i=1,\cdots,8\,.
\ea
\ee

\paragraph{Mixed gluing. } We can also consider a mixture of $\Phi$ and $S$-gluing. More precisely, we decompose every octet of fields into two groups of $2n$ and $8-2n$ chirals for $n=1,\cdots,4$. We then perform a $\Phi$-gluing for the first set and an $S$-gluing for the second
\begin{equation}\label{eq:Phigluing}
    \delta\mathcal{W}=\sum_{i=1}^{2n}\Phi^i\left(R^{(1)}_i-L^{(2)}_i\right)+\sum_{i=2n+1}^8R^{(1)i}L^{(2)}_i\,,
\end{equation}
where we are now writing the $\mathfrak{su}(8)$ index $i$ explicitly. This deformation thus breaks this $\mathfrak{su}(8)$ to the subgroup
\be\label{eq:SU8decomp}
\mathfrak{su}(8)\to \mathfrak{su}(2n)\oplus \mathfrak{su}(8-2n)\oplus \mathfrak{u}(1)\,.
\ee
Integrating out the massive fields as explained before leads to the quiver of Figure \ref{fig:Mixed gluing fig}.

\begin{figure}[t]
    \centering
    \begin{tikzpicture}
    \begin{scope}[shift={(0,0)}]
    \node (p1) at (0,0) [rectangle,draw,thick] {$2$};
    \node (p2) at (3,0) [rectangle,draw,thick] {$2$};
    \node (p3) at (1.5,-1.5) [rectangle,draw,thick] {$2n$};
    \node (p4) at (1.5,1.5) [rectangle,draw,thick] {$8-2n$};
    \draw[black, thick] (p1) -- (p2);
    \draw[black, thick,->-] (p1) -- (p3);
    \draw[black, thick,-<-] (p2) -- (p3);
    \draw[black, thick,->-] (p1) -- (p4);
    \draw[black, thick,-<-] (p2) -- (p4);
    \node[below, yshift=0.25cm] at (1.5,0) {$\times$};
    \end{scope}
    \begin{scope}[shift={(4,0)}]
    \node at (0,0) [rectangle,draw,thick] {$2$};
    \node at (0,-1.5) [rectangle,draw,thick] {$2n$};
    \draw[black, thick,->-] (0,-0.25) -- (0,-1.25);
    \end{scope}
    \begin{scope}[shift={(5,0)}]
    \node (p1) at (0,0) [rectangle,draw,thick] {$2$};
    \node (p2) at (3,0) [rectangle,draw,thick] {$2$};
    \node (p3) at (1.5,-1.5) [rectangle,draw,thick] {$2n$};
    \node (p4) at (1.5,1.5) [rectangle,draw,thick] {$8-2n$};
    \draw[black, thick] (p1) -- (p2);
    \draw[black, thick,-<-] (p1) -- (p3);
    \draw[black, thick,->-] (p2) -- (p3);
    \draw[black, thick,->-] (p1) -- (p4);
    \draw[black, thick,-<-] (p2) -- (p4);
    \node[below, yshift=0.25cm] at (1.5,0) {$\times$};
    \end{scope}
    \begin{scope}[shift={(9,0)}]
    \draw[black, thick,->] (0,0) -- (1.25,0);
    \end{scope}
    \begin{scope}[shift={(11,0)}]
    \node (p1) at (0,0) [rectangle,draw,thick] {$2$};
    \node (p3) at (5,0) [rectangle,draw,thick] {$2$};
    \node (p2) at (2.5,0) [circle,draw,thick] {$2$};
    \node (p4) at (2.5,-1.5) [rectangle,draw,thick] {$2n$};
    \node (p5) at (2.5,1.5) [rectangle,draw,thick] {$8-2n$};
    \draw[black, thick] (p1) -- (p2);
    \draw[black, thick] (p2) -- (p3);
    \draw[black, thick,->-] (p1) -- (p4);
    \draw[black, thick,-<-] (p2) -- (p4);
    \draw[black, thick,->-] (p3) -- (p4);
    \draw[black, thick,->-] (p1) -- (p5);
    \draw[black, thick,-<-] (p3) -- (p5);
    \node[below, yshift=0.25cm] at (1.3,0) {$\times$};
    \node[below, yshift=0.25cm] at (3.85,0) {$\times$};
    \end{scope}
    \end{tikzpicture}
    \caption{The mixed gluing of two copies of the basic flux tube.}
    \label{fig:Mixed gluing fig}
\end{figure}

In this case, the gluing identifies the two tube theories up to the Weyl transformation of $\mathfrak{so}(16)$ that only conjugates $2n$ out of its $8$ Cartan elements. In terms of the parameters of the 5d $\mathcal{N}=1$ gauge theory we have
\be \label{eq:DWtransn}
\ba
\phi&\to\phi\,,\\
m_\lambda&\to m_\lambda+\sum_{i=1}^{2n}m_i\,,\\
m_i&\to -m_i\,,\qquad i=1,\cdots,2n\,,\\
m_i&\to m_i\,,\qquad i=2n+1,\cdots,8\,.
\ea
\ee
At the level of the fluxes, since we are performing a mixture of $\Phi$-gluing for the first $2n$ fields and $S$-gluing for the remaining $8-2n$ fields, we should change the sign of the last $8-2n$ entries of the flux of the second tube before summing it to the one for the first tube
\be\label{eq:fluxmixed}
\ba
    \mathcal{F}&=\bigg(-\frac{1}{2},-\frac{1}{2},-\frac{1}{2},-\frac{1}{2},-\frac{1}{2},-\frac{1}{2},-\frac{1}{2},-\frac{1}{2}\bigg)+\bigg(\underbrace{-\frac{1}{2},\cdots,-\frac{1}{2}}_{2n},\underbrace{\frac{1}{2},\cdots,\frac{1}{2}}_{8-2n}\bigg)\\
    &=\bigg(\underbrace{-1,\cdots,-1}_{2n},\underbrace{0,\cdots,0}_{8-2n}\bigg)\,.
\ea
\ee

\subsection{Flux Tori}
\label{subsec:fluxtori}

Given the above gluing rules, we can next proceed to constructing the most general flux tube theory by concatenating many copies of the basic flux tube theory with a sequence of $\Phi$ and mixed gluings (remember the pure $S$-gluing is trivial). Moreover, if the flux of the resulting tube lives in the $E_8$ lattice \eqref{eq: field theory fluxes}, we can perform one last gluing to identify its two ends and end up with the theory corresponding to a compactification on a torus with flux. 

For simplicity we will for now consider the simplest models that can be built in this way, and discuss the most general configuration in Section \ref{sect:elemfluxtubes}. These consist of the torus models obtained by gluing $4q$ copies of the basic flux tube with an alternation of $\Phi$-gluing and mixed gluing, where $q\in\mathbb{N}$. Equivalently, we can think of this as $\Phi$-gluing together $2q$ copies of the tube of Figure \ref{fig:Mixed gluing fig} obtained from the mixed gluing. The flux of this model is thus given by $2q$ times the one \eqref{eq:fluxmixed} of the tube obtained after mixed gluing 
\be
\mathcal{F}=\bigg(\underbrace{-2q,\cdots,-2q}_{2n},\underbrace{0,\cdots,0}_{8-2n}\bigg) \,,\qquad q\in\mathbb{N}\,.
\ee
In Figure \ref{fig:Torusq1intro} we show the quiver for the model corresponding to the case $q=1$. The generalization to arbitrary $q$ is straightforward: we have a wheel composed of $2q$ gauge nodes $\mathfrak{su}(2)$, which are all connected with rays to the middle flavor $\mathfrak{su}(2n)$, while only alternating $\mathfrak{su}(2)$ gauge nodes are connected to the flavor $\mathfrak{su}(8-2n)$.


\paragraph{Global symmetry. }The manifest continuous global symmetry of this model is
\be
    \mathfrak{su}(2n)\oplus \mathfrak{su}(8-2n)\oplus \mathfrak{u}(1)\oplus \mathfrak{u}(1)\,.
\ee
One abelian symmetry factor results from the decomposition \eqref{eq:SU8decomp} of the original $\mathfrak{su}(8)$ symmetry of the basic flux tube that occurred in the mixed gluing. The second abelian symmetry instead comes from one of the $\mathfrak{u}(1)$s of the global symmetry \eqref{eq:globalsymm} of the basic flux tube. As explained in \cite{Sabag:2022hyw}, the second $\mathfrak{u}(1)$ of the basic flux tube theory corresponding to the KK symmetry is instead broken to a finite subgroup by anomaly and superpotential constraints. We refer to Appendix \ref{app:KKshift} for more details on this point as well as an explanation of how this agrees with the geometric perspective that we will discuss in the next sections.

The actual symmetry of the theory at low energies is however different in general from the one that is manifest in the quiver theory. For example, we have already mentioned that for $n=4$ the $\mathfrak{e}_8$ roots that are orthogonal to the flux vector coincide with the roots of the $\mathfrak{e}_7\oplus \mathfrak{u}(1)$ subgroup. Since there are no punctures anymore, we expect the $\mathfrak{su}(8)$ symmetry of the quiver to enhance to $\mathfrak{e}_7$ in the IR. For different values of $n$, the expected continuous symmetry is
\be \label{eq:preserved flavor symms}
\ba
n=4:\quad &\mathfrak{e}_7\oplus \mathfrak{u}(1)\,,\\
n=3:\quad &\mathfrak{e}_6\oplus \mathfrak{su}(2)\oplus \mathfrak{u}(1)\,,\\
n=2:\quad &\mathfrak{so}(14)\oplus \mathfrak{u}(1)\,,\\
n=1:\quad &\mathfrak{e}_7\oplus \mathfrak{u}(1)\,.
\ea
\ee
One of the main checks of our geometric construction of these models that we will discuss in Section \ref{sec: Geometric Realization} will be to reproduce these global symmetries, as well as the breaking of the $\mathfrak{u}(1)$ KK symmetry.

\paragraph{IR dualities. }
The fact that the cases $n=4$ and $n=1$ preserve the same symmetry is due to their fluxes being related by a Weyl symmetry transformation of $\mathfrak{e}_8$
\begin{equation}
    (-q,-q,-q,-q,-q,-q,-q,-q)\quad\overset{\text{Weyl}}{\longleftrightarrow}\quad(-2q,-2q,0,0,0,0,0,0)\,.
\end{equation}
As a consequence, the two compactifications are actually equivalent and the corresponding quiver theories provide different microscopic descriptions of the same low energy theory, namely they are IR dual. As explained in \cite{Kim:2017toz,Pasquetti:2019hxf}, this can be understood as a local application of Seiberg duality \cite{Seiberg:1994pq} on some nodes of the quiver. Specifically, the 4d $\mathcal{N}=1$ $\mathfrak{su}(2)$ SQCD with 6 fundamental chiral multiplets is dual to a simple Wess--Zumino model, i.e.~a model without gauge group and composed only of chiral fields interacting with a cubic superpotential. This duality can be summarized with quivers as in Figure \ref{fig:Seiberg}, where we have a cubic superpotential for each triangle loop.

\begin{figure}
    \centering
    \begin{tikzpicture}
    \begin{scope}[shift={(0,0)}]
    \node (p1) at (0,0) [rectangle,draw,thick] {$2$};
    \node (p3) at (5,0) [rectangle,draw,thick] {$2$};
    \node (p2) at (2.5,0) [circle,draw,thick] {$2$};
    \node (p4) at (2.5,-1.5) [rectangle,draw,thick] {$2$};
    \draw[black, thick] (p1) -- (p2);
    \draw[black, thick] (p2) -- (p3);
    \draw[black, thick,-] (p2) -- (p4);
    \node[below, yshift=0.25cm] at (1.3,0) {$\times$};
    \node[below, yshift=0.25cm] at (3.85,0) {$\times$};
    \end{scope}
    \begin{scope}[shift={(6.5,0)}]
    \draw[black, thick,<->] (0,-0.5) -- (1.25,-0.5);
    \end{scope}
    \begin{scope}[shift={(9.25,0)}]
    \node (p1) at (0,0) [rectangle,draw,thick] {$2$};
    \node (p2) at (3,0) [rectangle,draw,thick] {$2$};
    \node (p3) at (1.5,-1.5) [rectangle,draw,thick] {$2$};
    \draw[black, thick] (p1) -- (p2);
    \draw[black, thick,-] (p1) -- (p3);
    \draw[black, thick,-] (p2) -- (p3);
    \node[below, yshift=0.25cm] at (1.5,0) {$\times$};
    \end{scope}
    \end{tikzpicture}
    \caption{The basic Seiberg duality.}
    \label{fig:Seiberg}
\end{figure}

\begin{figure}
    \centering
    \begin{tikzpicture}[
    triple/.style={
        draw,
        double,
        double distance=3pt,
        line width=0.8pt,
        postaction={draw, line width=0.8pt}
    }
]
    \begin{scope}[shift={(0,0)},scale=0.6]
    \node[draw, circle,thick] (p1) at (0,-2.2) {\fontsize{11pt}{11pt}\selectfont $2$};
       \node[draw, circle,thick] (p2) at (3,0) {\fontsize{11pt}{11pt}\selectfont $2$};
       \node[draw, circle,thick] (p3) at (6,-2.2) {\fontsize{11pt}{11pt}\selectfont $2$};
       \node[draw, rectangle,thick] (p4) at (3,-2.2) {\fontsize{11pt}{11pt}\selectfont $2$};
       \node[draw, circle,thick] (p5) at (3,-4.4) {\fontsize{11pt}{11pt}\selectfont $2$};
       \node[draw, rectangle,thick] (p6) at (3,1.6) {\fontsize{11pt}{11pt}\selectfont $6$};
       \draw[-,thick] (p1) to  node[rotate=45]{\large $\times$}    (p2);
       \draw[-,thick] (p2) to  node[rotate=45]{\large $\times$}    (p3);
       \draw[-,thick] (p3) to  node[rotate=45]{\large $\times$}    (p5);
       \draw[-,thick] (p5) to  node[rotate=45]{\large $\times$}    (p1);
       \draw[-,thick] (p1) to      (p4);
       \draw[-,thick] (p3) to      (p4);
       \draw[-,thick] (p2) to      (p4);
       \draw[-,thick] (p5) to      (p4);
       \draw[-<-,thick] (p1) to      (0,1.6);
       \draw[-,thick] (0,1.6) to      (p6);
       \draw[->-,thick] (p3) to      (6,1.6);
       \draw[-,thick] (6,1.6) to      (p6);
    \end{scope}

    \begin{scope}[shift={(5.25,0)}]
       \draw[black, thick,<->] (-0.25,-1) -- node[midway, above] {\small Seiberg} node[midway, below] {\small duality} (1.5,-1);
    \end{scope}

    \begin{scope}[shift={(8,0)},scale=0.6]
       \node[draw, circle,thick] (p1) at (0,-2.2) {\fontsize{11pt}{11pt}\selectfont $2$};
       \node[draw, circle,thick] (p3) at (6,-2.2) {\fontsize{11pt}{11pt}\selectfont $2$};
       \node[draw, rectangle,thick] (p4) at (3,-2.2) {\fontsize{12pt}{12pt}\selectfont $2$};
       \node[draw, rectangle,thick] (p6) at (3,1.6) {\fontsize{11pt}{11pt}\selectfont $6$};

       \draw[triple] (p1) to (p4);
       \draw[triple] (p3) to (p4);

       \draw[-<-,thick] (p1) to (0,1.6);
       \draw[-,thick] (0,1.6) to (p6);
       \draw[->-,thick] (p3) to (6,1.6);
       \draw[-,thick] (6,1.6) to (p6);

       \draw[thick] (p1) to [out=40,in=140,looseness=1.4] node {\large$\times$} (p3);
       \draw[thick] (p1) to [out=-40,in=-140,looseness=1.4] node {\large$\times$} (p3);
    \end{scope}
\end{tikzpicture}
    \caption{Application of the Seiberg duality to the $n=1$ model with $q=1$.}
    \label{fig:Seiberg n=14}
\end{figure}

\begin{figure}
    \centering
    \begin{tikzpicture}[
    triple/.style={
        draw,
        double,
        double distance=3pt,
        line width=0.8pt,
        postaction={draw, line width=0.8pt}
    }
]
    \begin{scope}[shift={(0,0)},scale=0.6]
       \node[draw, circle,thick] (p1) at (0,-2.2) {\fontsize{11pt}{11pt}\selectfont $2$};
       \node[draw, circle,thick] (p3) at (6,-2.2) {\fontsize{11pt}{11pt}\selectfont $2$};
       \node[draw, rectangle,thick] (p4) at (3,-2.2) {\fontsize{11pt}{11pt}\selectfont $2$};
       \node[draw, rectangle,thick] (p6) at (3,1.6) {\fontsize{11pt}{11pt}\selectfont $6$};

       \draw[triple] (p1) to (p4);
       \draw[triple] (p3) to (p4);

       \draw[-<-,thick] (p1) to (0,1.6);
       \draw[-,thick] (0,1.6) to (p6);
       \draw[->-,thick] (p3) to (6,1.6);
       \draw[-,thick] (6,1.6) to (p6);

       \draw[thick] (p1) to [out=40,in=140,looseness=1.4] node {\large$\times$} (p3);
       \draw[thick] (p1) to [out=-40,in=-140,looseness=1.4] node {\large$\times$} (p3);
    \end{scope}
    \begin{scope}[shift={(5.25,0)}]
    \draw[black, thick,->] (-0.25,-1) -- node[midway, above] {\small mass} node[midway, below] {\small deformation} (1.5,-1);
    \end{scope}
    \begin{scope}[shift={(8,0)},scale=0.6]
    \node[draw, circle,thick] (p1) at (0,-2.2) {\fontsize{11pt}{11pt}\selectfont $2$};
       \node[draw, circle,thick] (p3) at (6,-2.2) {\fontsize{11pt}{11pt}\selectfont $2$};
       \node[draw, rectangle,thick] (p4) at (3,-2.2) {\fontsize{11pt}{11pt}\selectfont $2$};
       \node[draw, rectangle,thick] (p6) at (3,1.6) {\fontsize{11pt}{11pt}\selectfont $6$};
       \draw[-,thick] (p1) to      (p4);
       \draw[-,thick] (p3) to      (p4);
       \draw[-<-,thick] (p1) to      (0,1.6);
       \draw[-,thick] (0,1.6) to      (p6);
       \draw[->-,thick] (p3) to      (6,1.6);
       \draw[-,thick] (6,1.6) to      (p6);
       \draw[thick] (p1) to [out=40,in=140,looseness=1.4] node {\large$\times$} (p3);
       \draw[thick] (p1) to [out=-40,in=-140,looseness=1.4] node {\large$\times$} (p3);
    \end{scope}
    \begin{scope}[shift={(8,-2.8)}]
    \node[rotate=90] at (1.8,-0.2) {$=$};
    \end{scope}
    \begin{scope}[shift={(8,-3.4)},scale=0.6]
    \node[draw, circle,thick] (p1) at (0,-2.2) {\fontsize{11pt}{11pt}\selectfont $2$};
       \node[draw, circle,thick] (p3) at (6,-2.2) {\fontsize{11pt}{11pt}\selectfont $2$};
       \node[draw, rectangle,thick] (p4) at (3,-2.2) {\fontsize{11pt}{11pt}\selectfont $8$};
       \draw[-<-,thick] (p1) to      (p4);
       \draw[->-,thick] (p3) to      (p4);
       \draw[thick] (p1) to [out=40,in=140,looseness=1.4] node {\large$\times$} (p3);
       \draw[thick] (p1) to [out=-40,in=-140,looseness=1.4] node {\large$\times$} (p3);
    \end{scope}
    \end{tikzpicture}
    \caption{Implementation of the mass deformation leading to the dual $n=1$ model with $q=\tfrac{1}{2}$.}
    \label{fig:Seiberg n=14bis}
\end{figure}
For simplicity, we present the derivation of the duality between flux torus models for the case of
$q=\tfrac{1}{2}$. This consists of two steps. First, we start from the $n=4$ model in Figure \ref{fig:Torusq1intro} for $q=1$ and apply the basic Seiberg duality to the $\mathfrak{su}(2)$ gauge nodes that are connected to only one of the two flavor nodes. The result of the dualization is shown in Figure \ref{fig:Seiberg n=14}.

Some of the original cubic interactions turned into quadratic mass terms for the chirals encoded in the triple links connecting the left and right $\mathfrak{su}(2)$ gauge nodes to the middle $\mathfrak{su}(2)$ flavor node. The second step is then to integrate out the massive fields, which has the effect of removing pairs of links, see Figure \ref{fig:Seiberg n=14bis}. This gives a quiver that can be immediately identified with the dual $n=4$ quiver of Figure \ref{fig:Torusq1intro} with $q=\tfrac{1}{2}$.

\section{E-string Geometric Engineering} \label{sec: E-string geometry}

\subsection{E-string from F-Theory and M/F-Duality}

The canonical way to realize the geometry underlying the E-string theory is to consider an elliptic Calabi-Yau threefold $Y$ over a non-compact base $\C^2$ (with coordinates $u, v\in \C$) given by a Tate form 
\be  \label{eq: II* fiber}
y^2+b_1 u x y+b_3 u^3 y=x^3+b_2 u^2 x^2+b_4 u^4 x+b_6 u^5 v  \, ,
\ee 
for constants $b_i$. This has the following singular structure: along the locus $u=0$ there is a type $II^*$ fiber, which when resolved gives rise to an $E_8$ intersection matrix between the rational curves and the exceptional divisors of the blowup. 
Along $v=0$ there is an $I_1$ fiber which meets the $II^*$ fiber at $u=v=0$. Here, the fiber becomes non-minimal, i.e. we have a non-Kodaira fiber with the vanishing orders in the Weierstrass form exceeding $\deg (f, g, \Delta) \geq (4,6,12)$. 


To resolve this singularity one can proceed in two ways: either by first blowing up the base, i.e. removing the locus $u=v=0$ and thus inserting a $(-1)$-curve. This separates the loci of the $II^*$ and the $I_1$ fibers, so that all singular fibers are of Kodaira type, with the $II^*$ fiber realizing the affine $E_8$ Dynkin diagram. Note that this gives rise to a compact divisor $S$ in this three-fold, obtained from the fiber and base rational curve. 
The resulting smooth CY3 geometry $\widetilde{X}$ has non-compact (flavor) divisors $D_{\alpha_i}$, and associated rational curves $C_{\alpha_i}$ such that 
\be
D_{\alpha_i} \cdot_{\tilde{X}}  C_{\alpha_j} = - \text{CM}^{E_8}_{ij} \, ,
\ee
where $\text{CM}$ is the Cartan Matrix of $E_8$. 

Alternatively we can perform a non-flat resolution of the fiber, which means inserting a complex 2-dimensional surface into the fiber and then further resolving the remaining singularities. 

Both approaches result in a smooth Calabi-Yau threefold $X_{S_{II^*}}$ containing a smooth dP$_9$ or gdP$_9$ surface $S_{II^*}$ -- as discussed in detail in \cite{Apruzzi:2019opn} (blowup sequence (3.14)) with self-triple intersection $[S_{II^*}]^3=0$ and intersections with the non-compact divisors $D_i$ associated to the simple roots of affine $E_8$.

The key point here is to note that the non-compact Calabi-Yau threefold $X_{S_{II^*}}$ can be realized as the total space of the canonical bundle over dP$_9$, so that the geometric engineering follows entirely from the properties of curves within this surface. 

Starting from \eqref{eq: II* fiber}, we can first resolve the base by eliminating the intersection point $u=v=0$, and inserting a $\mathbb{P}^1$ there. In this partially resolved geometry, the dP$_9$ is realized by a Tate model over $\P^1$ with a single $II^*$ and two $I_1$ fibers, which is a well-known family of rational elliptic surfaces. Together with the fact that such a rational elliptic surface has a trivial Mordell-Weil group, we find that $H^2(S_{II^*},\Z)$ is generated by the elliptic fiber $E$, zero section $\sigma_0$ and the fiber components $\alpha_i$ $i=1,\cdots, 8$ of the $II^*$ fiber, so that 
the inner form between curves is 
\begin{equation}\label{eq:I19viaIIstarfib}
H^2(S_{II^*},\Z) = \begin{pmatrix}
    0 & 1 \\
    1 & -1
\end{pmatrix} \oplus (-E_8) \, .
\end{equation}
The first Chern class of $S_{II^*}$ is given by
\begin{equation}
    c_1(S_{II^*}) = E  \, . 
\end{equation}
By Poincar\'e duality, the relation between self-intersection and genus of curves, and the index theorem, $H^2(S_{II^*},\Z)$ is an odd\footnote{In this context, odd means that $x^2$ is not always an even integer for all elements $x$ of the lattice.}, unimodular lattice of signature $(1,9)$,  which uniquely determines it up to isometry \cite{Serre:1993}. In the following, we will denote this lattice by $I_{1,9}$. 

Compactifying M-Theory on $X_{S_{II^*}}$ yields a 5D KK theory which is known to flow to a 5D $\mathcal{N}=1$ $\mathfrak{su}(2)$ gauge theory with eight flavors in the IR. Upon a suitable parametrization of the K\"ahler form $J$ on $X_{S_{II^*}}$, the IMS prepotential of this gauge theory \eqref{eq:IMSprepotEstring} agrees with the geometric prepotential
\begin{equation}
    \CF_{\text{geo}} = -\frac16 \int_{X_{S_{II^*}}} J^3 \, ,
\end{equation}
up to terms independent of the Coulomb branch parameter $\phi$.

\subsection{The Prepotential for $X_{S_{II^*}}$}

In order to make the geometric origin of the IMS prepotential explicit, let us choose a slightly different parametrization of $H^2(S,\Z)$ and write
\begin{equation}\label{eq:innerformS12I}
H^2(S_{II^*},\Z) \simeq I_{1,9} \simeq \begin{pmatrix}
        0 & 1 \\
        1 & 0
    \end{pmatrix} \oplus (-1)^8 \, .
\end{equation}
We will denote the generators by $F,\Sigma$ and $M_i$ for $i=1,\cdots, 8$, so that
\be
    F \cdot \Sigma=1 \, , \qquad M_i \cdot M_j=-\delta_{ij} \, ,
\ee
with all other products vanishing\footnote{We will see this presentation emerge from an explicit example later in Section \ref{sec:constructionofXS}.}. This can be related to our previous basis \eqref{eq:I19viaIIstarfib} by
\begin{equation}\label{eq: C vs alpha}
    \begin{aligned}
       & E = 2F + 2 \Sigma - \sum_{i=1}^8 M_i && \sigma_0 = M_1 \\
        &\alpha_1 = \Sigma - F \qquad &   \alpha_2 = M_8 - M_7 \qquad& \alpha_3 = F - M_7 - M_8 \qquad& \alpha_4 = M_7-M_6 \\ 
        &\alpha_5 = M_6 -M_5   \qquad &   \alpha_6 = M_5 -M_4  \qquad& \alpha_7 = M_4 -M_3 \qquad & \alpha_8 = M_3 -M_2
    \end{aligned}
\end{equation}

We now choose to parameterize the K\"ahler form of the threefold {$X_{S_{II^*}}$} as follows
\be \label{eq: correct Kahler form}
   J = -\phi [S_{II^*}] + m_{\lambda}^0 F + \sum_{i=1}^8 m_i M_i \, ,
\ee
where $[S_{II^*}]$ is the class of the surface $S_{II^*}$ inside the threefold and we use the abbreviation
\begin{equation}
    m_\lambda^0 := \frac{m_{\text{KK}}}{2}-\frac{1}{2}\sum_{i=1}^8m_i\, . 
\end{equation}

By abuse of notation, we also represent by $F$, $\Sigma$, and $M_i$ the non-compact divisor classes of $X_{S_{II^*}}$ associated with the respective curve classes in the dP$_9$. The parameters $\{ \phi,m_{\lambda}^0,m_i \}$ provide a parameterization of the extended Coulomb branch of our 5d $\cN=1$ gauge theory. 

As $X_{S_{II^*}}$ is a Calabi-Yau threefold, adjunction tells us that
\begin{equation}
 [S_{II^*}] = -E=-2\Sigma-2F+\sum_{i=1}^8 M_i,   
\end{equation} 
so that we can rewrite the K\"ahler form as
\be \label{eq: Kahler form}
    J = 2\phi  \Sigma +(2\phi+m_{\lambda}^0)F +\sum_{i=1}^8 (m_i-\phi) M_i \, .
\ee
We can find the volumes of various curves in this parametrization by integrating $J$, where we want to think of the classes in $J$ as divisor classes on $X_{S_{II^*}}$.  
The volume of $F$ captures the Coulomb branch parameter $\phi$ 
\be \label{eq: Vol(F)=Coulomb branch}
    \text{Vol}(F) = 2 \phi \, ,
\ee
and the curves $M_i$ capture the masses of the hypers
\be \label{eq: Vol(M)=Coulomb branch}
\ba
    \text{Vol}(M_i) &= \phi-m_i \cr
    \text{Vol}(F-M_i) &= \phi +m_i \, .
\ea
\ee
Moreover, the volume of the elliptic fiber remains constant, and its volume is precisely the $m_{\text{KK}}$ mass scale 
\be \label{eq: Vol(E) remains constant}
    \text{Vol}(E) = m_{\text{KK}} \, .
\ee

To justify that~\eqref{eq: correct Kahler form} is the correct K\"ahler form, we first show that its geometric prepotential reproduces the IMS prepotential~\eqref{eq:IMSprepotEstring}. The curve classes $M_i$ are all contained in $S_{II^*}$, so that $X_{S_{II^*}}$ describes the gauge theory phase in which $\phi\pm m_i>0$ for all $i=1,\cdots ,8$. As $[S_{II^*}]|_{S_{II^*}} \cdot [S_{II^*}]|_{S_{II^*}}=0$, the geometric prepotential contains no terms cubic in $\phi$. We can now calculate
\be
\ba
    \CF_{\text{geo}} &= -\frac{1}{6} \int_{X_{S_{II^*}}} \left[ -\phi^3[S_{II^*}]^3 + 3 \phi^2 [S_{II^*}]^2\left( m_{\lambda}^0 F + \sum_{i=1}^8 m_i M_i\right) - 3 \phi [S_{II^*}] \left(m_{\lambda}^0 F + \sum_{i=1}^8 m_i M_i\right)^2 \right] \\
& = -\frac{1}{6} \int_{S_{II^*}}\left[ -3 \phi^2 [S_{II^*}]|_{S_{II^*}} \left(m_{\lambda}^0 F + \sum_{i=1}^8 m_i M_i\right)-3 \phi \left(m_{\lambda}^0 F + \sum_{i=1}^8 m_i M_i\right)^2 \right]\\
& = \left( m_{\lambda}^0+ \frac{1}{2}\sum_{i=1}^8 m_i \right) \phi^2 - \frac{1}{2} \phi \sum_{i=1}^8 m_i^2\, ,
\ea
\ee
where we have dropped all terms independent of the Coulomb branch parameter $\phi$, effectively allowing us to restrict the integral to $S$. The above 
coincides exactly with the IMS prepotential $\mathcal{F}_{\text{IMS}}|_{\phi \pm m_i>0}$. 

For the quiver theories we are interested in constructing, we will however need to work in the gauge theory phase $-m_i < \phi < m_i$. This appears to send the curves $M_i$ to negative volume, which geometrically means undergoing a flop transition 
\begin{equation}
    X_{S_{II^*}} \rightarrow X_{S_{II^*}}^f \, ,
\end{equation}
after which all the curves $M_i$ have been flopped out of the dP$_9$ surface $S_{II^*}$. From the perspective of $S_{II^*}$, this blows down all of the curves $M_i$, so that the canonical class of the compact surface $S^f$ in $X_{S_{II^*}}^f$ is now 
\be 
c_1(S^f) = 2\Sigma + 2 F\,,\qquad [S^f]|_S=-2\Sigma-2F\,,
\ee 
inside $X_{S_{II^*}}^f$. In particular, $M_i|_{S^f} = 0$ and $[S^f]_{S^f} \cdot [S^f]|_{S^f} = 8 $, so that the geometric prepotential becomes 
\be
\ba
    \CF_{\text{geo}} := -\frac{1}{6} \int_{X_{S_{II^*}}^f} J^3 &= -\frac{1}{6}\int_{S^f} \left(-\phi^3 [S^f]^2 + 3\phi^2 m_{\lambda}^0[S^f] \cdot F \right) \cr
    &= \frac{4}{3} \phi^3 -\frac{1}{2} m_{\lambda}^0 \phi^2 \int_{S^f} (-2\Sigma-2F) \cdot F \cr
    &= \frac{4}{3}\phi^3+m_{\lambda}^0 \phi^2 \, ,
\ea
\ee
which one can verify matches the gauge theory IMS prepotential $\mathcal{F}_{\text{IMS}}|_{-m_i<\phi<m_i}$.

\subsection{Flavor Symmetry} \label{sec: flavour symmetry}

The hypermultiplets transform under the vector $\mathbf{16}$ of $\mathfrak{so}(16)$, and we can use a basis where their weights, in the weight lattice $\Lambda_w(\mathfrak{so}(16))$, are given by
\begin{equation}
(\pm 1,0,0,0,0,0,0,0) \, ,
\end{equation}
and permutations thereof. We can identify these with curves in the threefold as
\begin{equation} \label{eq: N map applied to effective curves}
\begin{aligned}
N(M_i) =  (-\delta_{ij})_j \, , \qquad
N(F - M_i) = (\delta_{ij})_j \, ,
\end{aligned}
\end{equation}
where we have introduced the surjective homomorphism 
\be \label{eq: N homomorphism definition}
    N:\quad H_2(X_{S_{II^*}},\mathbb{Z})\rightarrow \Lambda_w(\mathfrak{so}(16))\, ,
\ee
to formalize this correspondence and keep track of the relative sign between the representation theoretic conventions and the intersection form in geometry
\begin{equation}
N(\alpha) \cdot N(\beta) = - \alpha \cdot \beta  \, .
\end{equation} 
This map is not injective as it has non-trivial kernel $N(F)=N(E)=0$. 

The charges of the 16 half-hypers under the Cartan $U(1)$'s are given by the intersection product of the matter curves and the non-compact flavor divisors. In particular, they branch as
\begin{equation}
    \mathbf{16} \rightarrow \mathbf{8}^{+1} \oplus \mathbf{\bar{8}}^{-1} \, ,
\end{equation}
with respect to the $\mathfrak{su}(8)\oplus \mathfrak{u}(1)$ subgroup of $\mathfrak{so}(16)$. Again, this determines the roots of $\mathfrak{so}(16)$ to be given by the following curves of $S$
\begin{equation}
\begin{aligned}
N(M_i - M_j) = (-\delta_{ik}+\delta_{jk})_k \, ,  \qquad N(F-M_i - M_j ) = (\delta_{ik}+\delta_{jk})_k  \, ,
\end{aligned}
\end{equation}
in terms of the $\mathfrak{so}(16)$ weight basis. We choose to work in a basis of simple roots $\{ \beta_0 ,\beta_1 ,\cdots ,\beta_8 \}$ of $\mathfrak{su}(2)\oplus \mathfrak{so}(16)$, appearing in the box graph description of the field theory symmetry algebra, given explicitly by
\begin{equation}\label{eq:Eso16roots_so16weights}
\ba
&& N(\beta)&& \mbox{Vol}\\
\beta_0 & = F& (0,0,0,0,0,0,0,0)&& 2\phi \\
\beta_1 & = M_2-M_1 & (1,-1,0,0,0,0,0,0)&& m_1-m_2 \\
\beta_2 & = M_3 - M_2 & (0,1,-1,0,0,0,0,0)&& m_2-m_3\\
\beta_3 & = M_4 - M_3 & (0,0,1,-1,0,0,0,0)&& m_3 - m_4\\
\beta_4 & = M_5 - M_4& (0, 0, 0, 1, -1, 0, 0, 0)&& m_4 - m_5\\
\beta_5 & = M_6 - M_5& (0, 0, 0, 0, 1, -1, 0, 0)&& m_5 - m_6\\
\beta_6 & = M_7 - M_6& (0, 0, 0, 0, 0, 1, -1, 0)&& m_6 - m_7\\
\beta_7 & = M_8 - M_7& (0,0, 0, 0, 0, 0,  1, -1)&& m_7 - m_8\\
\beta_8 & = F-M_7-M_8& (0, 0, 0, 0, 0, 0, 1, 1)&& m_7 +m_8 \, .
\ea
\end{equation}
The $\mathfrak{so}(16)$ root lattice 
\begin{equation}\label{eq D8 lattice}
    \Lambda_{D_{8}} := \langle \beta_1, \cdots, \beta_8 \rangle_{\Z} \, ,
\end{equation}
has an $\mathfrak{su}(8)$ root lattice 
\begin{equation}
    \Lambda_{A_7} := \langle \beta_1, \cdots, \beta_7 \rangle_{\Z}\, ,
\end{equation}
embedded within it. The identification between the geometric affine $E_8$ roots and the $\mathfrak{so}(16)$ weights, as well as the volumes of the associated curves in $S$, are given by
\begin{equation}\label{eq:E8roots_so16weights}
\ba
&& N(\alpha)&& \mbox{Vol}\\
\alpha_1 & = \Sigma - F & (-\tfrac12,-\tfrac12,-\tfrac12,-\tfrac12,-\tfrac12,-\tfrac12,-\tfrac12,-\tfrac12)&& m_\lambda^0\\
\alpha_2 & = M_8 - M_7 & (0,0,0,0,0,0,1,-1)&& m_7 - m_8\\
\alpha_3 & = F - M_7 - M_8 & (0,0,0,0,0,0,1,1)&& m_7 + m_8\\
\alpha_4 & = M_7 - M_6& (0, 0, 0, 0, 0, 1,-1, 0)&& m_6 - m_7\\
\alpha_5 & = M_6 - M_5& (0, 0, 0, 0, 1, -1, 0, 0)&& m_5 - m_6\\
\alpha_6 & = M_5 - M_4& (0, 0, 0, 1, -1, 0, 0, 0)&& m_4 - m_5\\
\alpha_7 & = M_4 - M_3& (0, 0, 1, -1, 0,  0, 0, 0)&& m_3 - m_4\\
\alpha_8 & = M_3 - M_2& (0, 1, -1, 0, 0, 0, 0, 0)&& m_2 - m_3\\
\alpha_0 & = M_2 - M_1& (1, -1, 0, 0, 0, 0, 0, 0)&& m_1 - m_2  \, .
\ea
\end{equation}
where the labeling of roots uses Bourbaki conventions as summarized by\\

\begin{center}
\begin{tikzpicture}[
    node/.style={circle, draw, fill=white, inner sep=2.5pt},
    lbl/.style={font=\small, text=black}
]

    \node[node] (1) at (0,0) {};
    \node[node] (2) at (1,0) {};
    \node[node] (3) at (2,0) {};
    \node[node] (4) at (3,0) {};
    \node[node] (5) at (4,0) {};
    \node[node] (6) at (5,0) {};
    \node[node] (7) at (6,0) {};
    \node[node] (9) at (7,0) {};
    \node[node] (8) at (2,1) {}; 

    \draw[thick] (1) -- (2) -- (3) -- (4) -- (5) -- (6) -- (7) -- (9);
    \draw[thick] (3) -- (8);

    \node[lbl, below=3pt] at (1) {$\alpha_1$};
    \node[lbl, below=3pt] at (2) {$\alpha_3$};
    \node[lbl, below=3pt] at (3) {$\alpha_4$};
    \node[lbl, below=3pt] at (4) {$\alpha_5$};
    \node[lbl, below=3pt] at (5) {$\alpha_6$};
    \node[lbl, below=3pt] at (6) {$\alpha_7$};
    \node[lbl, below=3pt] at (7) {$\alpha_8$};
    \node[lbl, below=3pt] at (9) {$\alpha_0$};
    \node[lbl, right=3pt] at (8) {$\alpha_2$};

\end{tikzpicture}
\end{center}

For this choice of simple roots for $\Lambda_{E_8}$ we have 
\begin{equation}
  E = 2 \alpha_1 + 4 \alpha_3 + 6 \alpha_4 + 3 \alpha_2 + 5 \alpha_5 + 
 4 \alpha_6 + 3 \alpha_7 + 2 \alpha_8 + \alpha_0 \, ,
\end{equation}
so that we can think of the $\alpha_i$ as the divisors associated to the fiber components of the resolved $II^*$ fiber of $S_{II^*}$.

Note that setting $m_i = m$ for all $i$ realizes $\mathfrak{su}(8)$, which is enhanced to $\mathfrak{so}(16)$ for $m_i=0 $ for all $i$. The $E_8$ flavor symmetry is only found in the UV where all mass parameters vanish. Using the form of the roots in~\eqref{eq:E8roots_so16weights}, we can conclude that the kernel of $N$ is trivial when restricted to 
\begin{equation}
\Lambda_{E_8} := \langle \alpha_1, \cdots, \alpha_8 \rangle_\Z.
\end{equation}

Hence every flux appearing in the field theoretic construction of flux tubes~\eqref{eq: field theory fluxes} is realized uniquely by a curve in $H_2(X_{S_{II^*}},\mathbb{Z})$.

\section{Local Calabi-Yau Manifolds} \label{sec:constructionofXS}

In this section we will describe some aspects of the geometry of $X_S$ in some more detail and explain how the maps associated with the domain walls and gluings discussed in Section \ref{sec: Field Theory sec2} are realized geometrically. The presentation of a rational elliptic surface as a Weierstrass fibration over $\P^1$ makes this hard to describe, so that we will first introduce a more suitable model. 

{This description is a particular complex structure deformation, which will make some of the aspects of the geometry more manifest, in particular the automorphisms. As we are considering automorphisms of a (partially resolved and thus still) singular elliptic fibration, the precise notion of automorphism remains to be developed in mathematics. We will here derive the automorphisms in a particular choice of complex structure, where the $II^*$ fiber is deformed to an $I_8$, and then conjecture that these automorphisms survive the limit back to the original geometry. The particular complex structure deformation is discussed in Section \ref{sec: Geometry of dP9 S}.}

\subsection{The dP$_9$ Surface $S_{0}$ and the Threefold $X_{S_{0}}$} \label{sec: Geometry of dP9 0}

The emergence of a 5d gauge theory description of the E-string theory is best understood from a ruling with reducible fibers on the contractable dP$_9$ surface contained in the E-string geometry. In order to relate the E-string geometry to the gauge theory description, we will first describe a particular geometric realization of a dP$_9$ surface in which both the ruling and the elliptic fibration are manifest, but the flavor symmetry is not. 

Consider the following hypersurface in $\P_x^1 \times \P_y^1 \times \P_z^1$
\begin{equation}\label{eq:dp9_p1cubed}
 S_{0}: \hspace{1cm} P_0(y,z) \, x_1 = Q_0(y,z)  \, x_2 \,,
\end{equation}
where $[x_1:x_2]$, $[y_1:y_2]$ and $[z_1:z_2]$ are homogeneous coordinates on the three $\P^1$s and $P_0$ and $Q_0$ are generic homogeneous polynomials 
of $[y_1:y_2]$ and $[z_1:z_2]$ of multidegree $(2,2)$.

As $[x_1:x_2]$ is uniquely determined everywhere except for the $8$ points where $P_0=Q_0=0$, we can think of $S_{0}$ as a blowup of $\P^1 \times \P^1$. Denoting the restriction of the hyperplane classes of $\P_y^1$ and $\P_z^1$ by $H_y$ and $H_z$ and the $8$ exceptional curves by $M_i$ then reproduces the inner form \eqref{eq:innerformS12I} with $F = H_z$, $\Sigma = H_y$, and realizes all of these generators as effective curves. The anti-canonical class of $S_{0}$ is correspondingly found to be
\begin{equation}
	-K_{S_{0}} = H_x = 2F + 2 \Sigma - \sum_{i=1}^8 M_i   \, .
\end{equation}

We can project $S_{0}$ to e.g. $\P^1_z$ by projecting the ambient space to $\P^1_z$. For a generic point on $\P^1_z$, the fiber is just a smooth $\P^1$
with divisor class $H_z = F$, but over the eight points where $P_0=Q_0=0$ the fiber splits into two components with classes $M_i$ and $F-M_i$. The threefold
\begin{equation}
    X_{S_{0}} := K_{S_{0}} \, ,
\end{equation}
is then readily interpreted as giving rise to an $SU(2)$ gauge theory with eight flavors and Coulomb branch parameters measured as in \eqref{eq: Vol(F)=Coulomb branch} and \eqref{eq: Vol(M)=Coulomb branch}. 

The fiber of a projection to $\P^1$ is an elliptic curve which degenerates over $12$ distinct points and has class $E = -K_{S_{0}}$. This is an elliptic fibration as any of the $M_i$ defines a global section, and we choose $\sigma_0 = M_1$ as the zero section. It follows that the rank of the Mordell-Weil group
\begin{equation}
\text{MW} \simeq H^{2}(S_0,\Z)/\langle \sigma_0, E \rangle \simeq -E_8 \, .
\end{equation}
must be $8$ and that its torsional subgroup must be trivial \cite{oguiso_shioda_rat_ell,schuett2010ellipticsurfaces}. Using the identification \eqref{eq: C vs alpha} we can hence find an associated section 
for any element $\alpha \in \Lambda_{E_8} = \text{span}_\Z \left\{ \alpha_1, \cdots, \alpha_8 \right\}$. 

Following \cite{oguiso_shioda_rat_ell,schuett2010ellipticsurfaces} the class of such a section is determined to be (see Appendix \ref{app:sections} for details)
\begin{equation}\label{eq:intersectionality}
\sigma_{\alpha}:= \sigma_0 + \alpha - \frac{\alpha^2}{2}E \, ,
\end{equation}
and the $\sigma_{\alpha}$ obey
\begin{equation}
\sigma_{\alpha}^2 = -1 \, , \hspace{1cm} \sigma_{\alpha} \cdot E = 1 \, .
\end{equation}
For example, the class $M_i$ is recovered as representing a section from the above by using $\alpha \in \Lambda_{E_8}$ with
\begin{equation}
\alpha = -2 \Sigma - 2 F + M_i - M_1 + \sum_{j=1}^8{M_j} \, .
\end{equation}

The group operation of the Mordell-Weil group is the addition of sections, which takes the form
\begin{equation}
\sigma_\alpha \boxplus \sigma_\beta = \sigma_{\alpha+\beta}\, . 
\end{equation}
These become automorphisms $\tau_\alpha$ of $S_{0}$ which act as a translation on every elliptic fiber. As $\tau_\alpha$ must map $\sigma_0$ to $\sigma_\alpha$, leave $E$ invariant, and act as an isometry on $H^2(S_0,\Z)$, the action of $\tau_\alpha$ is
\begin{equation}
\tau_\alpha:  \qquad 
\left\{\begin{aligned}
\sigma_0 &\rightarrow \sigma_\alpha = \sigma_0 + \alpha - \frac{\alpha^2}{2} E\\
E &\rightarrow E \\
\beta &\rightarrow \beta - (\alpha \cdot \beta) E \, ,
\end{aligned}\right.
\end{equation}
where $\beta$ is any element of $\Lambda_{E_8}$. The above implies that
a shift by $\sigma_\alpha$ acts on any section $\sigma_\beta$ as
\begin{equation}
\tau_\alpha: \qquad  \sigma_\beta \rightarrow \sigma_{\alpha+\beta}\, .
\end{equation}
In other words, lattice translations of $\Lambda_{E_8}$ are mapped by a group homomorphism to translations by a section and hence to automorphisms $\tau_\alpha$ of $S_0$. 

While $H^2(S_{0},\Z)$ has no $(-2)$ classes represented by holomorphic $\P^1$s, $S_0$ contains cycles which are represented by 2-spheres that generate the lattice $\Lambda_{E_8}$. Such cycles can e.g. be explicitely constructed along the lines of \cite{Gaberdiel:1997ud} from the singular fibers of the elliptic fibration on $S_0$. For specific choices of the polynomials $P_0$ and $Q_0$ some singular fibers may coincide and $S_0$ develops a singularity for which a curve $\alpha$ shrinks. Such loci exist in complex codimension one in the space of polynomial deformations, and transporting $S_0$ on a small loop around such a locus induces an automorphism $w_\alpha$ of $S_0$ which acts on any $\Gamma \in H^2(S_{0},\Z)$ by the Picard-Lefschetz transformation
\begin{equation}
 w_\alpha:   \Gamma \rightarrow \Gamma + \alpha \cdot \Gamma\, ,
\end{equation}
for any $\alpha \in \Lambda_{E_8}$. At this point one might still worry that we do not have access to all deformations required due to the specific realization of $S_0$ \eqref{eq:dp9_p1cubed}. However, we can alternatively realize the same rational elliptic surface $S_0$ as a Weierstrass model which gives us access to all required deformations and hence Weyl reflections for all $\alpha \in \Lambda_{E_8}$. Even though a generic $S_0$ will only give a restricted class of Weierstrass models, we can still explore the whole moduli space of Weierstrass models while still starting and ending with any given surface $S_0$, so that we find Weyl transformations for all $\alpha \in \Lambda_{E_8}$ also for $S_0$.

We can now describe the associated non-compact Calabi-Yau threefold $X_{S_{0}}$
as
\begin{equation}
    \begin{pmatrix}
        P_0 & -Q_0 \\
        u & v 
    \end{pmatrix}
    \begin{pmatrix}
        x_1 \\ x_2
    \end{pmatrix} = 0 \, ,
\end{equation}
where $u,v$ are coordinates on $\C^2$. This is a line bundle over $S_0$, and has trivial canonical bundle, so that it must be isomorphic to $K_{S_{0}}$. This threefold has an elliptic fibration over the blowup of $\C^2$ in a point, i.e. the base is given by 
\begin{equation}\label{eq: XS0 base}
    x_1 u + x_2 v = 0 \, ,
\end{equation}
and the compact surface $S_{0}$ is formed of the elliptic fiber over the exceptional $\P^1$ contained in the base. As $X_{S_{0}}$ is nothing but the canonical bundle over $S_{0}$, the automorphisms $\tau_\alpha$ and $w_\alpha$ of $S_{0}$ lift to automorphisms of $X_{S_{0}}$.

While $X_{S_0}$ neatly exhibits the ruling with reducible fibers expected for an $SU(2)$ gauge theory with 8 flavors, it fails to capture the flavor symmetry which emerges whenever some of the mass parameters $m_i$ are equal. The K\"ahler form \eqref{eq: correct Kahler form} was already written in a basis of divisors which are manifestly effective in $X_{S_0}$ and reproduces the correct prepotential when taken as the K\"ahler form for $X_{S_0}$. Setting $m_i = m_j$ for all $i,j = 1,\cdots,8$, all of the classes $\beta_i$ produce zero when integrating the K\"ahler form, so appear to have vanishing volume. However on $S_0$ and $X_{S_0}$, the $\beta_i$ do not correspond to effective curves, so that we cannot conclude that setting mass parameters equal produces any singularities. In fact, thinking of $S_0$ as a blowup of $\P^1 \times \P^1$ in eight points with exceptional divisors $M_i$, setting $m_i = m_j$ for all $i,j$ merely leads to all exceptional divisors having equal volume.

\subsection{The dP$_9$ Surface $S$ } \label{sec: Geometry of dP9 S}

In order to work in a model that at least makes an $SU(8)$ flavor symmetry manifest, we now tune the complex structure of $S_0$ to find a surface $S$ with an $I_8$ fiber (together with four $I_1$ fibers). This can be achieved by setting
\begin{equation}\label{eq:dp9_p1cubedv2}
    S: \qquad P(y,z) x_1 = Q(y,z) x_2 \, ,
\end{equation}
with 
\begin{equation}
    \begin{aligned}
        P(y,z)&= y_1^2 z_2^2 - z_1^2 y_2^2 - 4 y_2^2 z_2^2 \\
        Q(y,z)&= y_1 y_2 (z_1^2 +z_2^2) + z_1 z_2 (y_1^2 - y_2^2) \, .
    \end{aligned}
\end{equation}

Working out the associated Weierstrass model results in
\begin{equation}
    Y^2 = X^3 + f X + g \, ,
\end{equation}
with 
\begin{equation}
    \begin{aligned}
        f & = -432(x_1^4 + 3 x_2^4) \\
        g & = 1728(2 x_1^4 + 9 x_2^4) x_1^2 \, ,
    \end{aligned}
\end{equation}
so that the discriminant is
\begin{equation}
    \Delta = x_2^8 (x_1^2 - 2 x_1 x_2 + 2 x_2^2) (x_1^2 + 2 x_1 x_2  + 2 x_2^2)\, .
\end{equation}
This shows four fibers of type $I_1$ and one of type $I_8$. One can indeed check that $S$ has a singularity at $x_2 = z_2 = y_2 = 0$, while the fiber degenerates over $x_2= 0$ and the four other points indicated by the discriminant above.  

Upon resolution (which we have control over via the K\"ahler form) the zero section $\sigma_0 = M_1$ persists and the $\beta_i, i = 1, \cdots, 7$, together with $E + M_1 - M_8$  become the fiber components of the $I_8$ fiber and hence effective curves of $S$. Here, the last component being $E + M_1 - M_8$ is forced on us by the requirement that the classes of all fiber components of an $I_8$ fiber sum to the class $E$. Note that $\beta_1 = M_2 - M_1$ now plays the role of the `trivial' fiber component meeting the zero section. 
Since 
\begin{equation}
 X^3 + f X + g =   (X-12 x_1^2) (-288 x_1^4 - 1296 x_2^4 + 12 X x_1^4 + X^2 ) \,,
\end{equation}
factors, we can conclude that $S$ has a torsional $\Z_2$ section, so that the Mordell-Weil group must be $\Z\times \Z_2$. The Mordell-Weil groups of elliptic fibrations on rational elliptic surfaces with various collections of fibers have been studied extensively in the literature, and there are two inequivalent cases with four $I_1$ and one $I_8$ fibers which differ in the embedding of the components of the $I_8$ fiber into $E_8$ \cite{oguiso_shioda_rat_ell, schuett2010ellipticsurfaces}, and in their Mordell-Weil groups. Only one of these has a torsional Mordell-Weil group, and it is distinguished by the lattice of fiber components not meeting the zero section not being primitively embedded into $H_2(S,\Z)$. The assignment \eqref{eq:Eso16roots_so16weights} exactly corresponds to this case, as it implies that the embedding of $\langle \beta_2, \cdots \beta_7, E+M_1-M_8 \rangle_\Z$ into $H_2(S,\Z)$ is not primitive.\footnote{This can be seen by working out 
\begin{equation}
    E + M_1 - M_8 +\beta_2 + \beta_4 +\beta_6 
    = 2F + 2 \Sigma + 2M_3 + 2M_5 + 2M_6
\end{equation}
which can be divided by two in $H_2(S,\Z)$ but not in $\langle \beta_2, \cdots \beta_7, E+M_1-M_8 \rangle_\Z$.}
Hence \eqref{eq:dp9_p1cubedv2} has a fiber of type $I_8$ with (affine) components precisely corresponding to the $SU(8)$ flavor symmetry identified in \eqref{eq:Eso16roots_so16weights}. 

As the Mordell-Weil group of $S$ is $\Z\times \Z_2$, not all of the sections identified in $S_{0}$ can persist when specializing $P_0$ and $Q_0$ to $P$ and $Q$. One can easily check that the eight points where $P_0=Q_0=0$ all become coincident for $P=Q = 0$, so that all of the sections $M_i$ become coincident. By working out the proper transform of sections one can verify which sections survive, however it is easier to invoke the general theory about sections of rational elliptic surface as laid out in 
\cite{schuett2010ellipticsurfaces}. The upshot is that (see Appendix \ref{app:sections} for details) the Mordell-Weil group retains the sections\footnote{The Mordell-Weil group of $S$ contains other sections beyond
$\sigma_{\ell \alpha_1}$, but we'll only require these in the following.} 
\begin{equation}
    \sigma_{\ell \alpha_1} =  M_1 + \ell(\Sigma - F) + \ell^2 E
\end{equation}
for all $\ell \in \mathbb{Z}$ when deforming $S_0$ to $S$. Note that the zero section is still $\sigma_0 = M_1$. 

For any elliptic surface, the existence of non-trivial sections implies the existence of automorphisms which act fiberwise as translations. Here, the action on classes in $H^2(S,\mathbb{Z})$ is determined by the induced action on irreducible fiber components. As the sections $\sigma_{\ell \alpha_1}$ intersect the same fiber component $\beta_1 = M_2 - M_1$ of the $I_8$ fiber as $\sigma_0$, it turns out that the associated automorphisms $\tau_{\ell \alpha_1}$ must act trivially on the irreducible fiber components and, correspondingly, the lattice $\Lambda_{A_7}$. Furthermore, $\tau_{\ell \alpha_1}$ also translates any of the sections $\sigma_{k \alpha_1}$ to $\sigma_{(k + \ell) \alpha_1}$, so that we find the action on $H^2(S,\mathbb{Z})$ to be 
\begin{equation}
  \tau_{\ell \alpha_1}:\qquad \left\{\begin{aligned}
  \sigma_{k \alpha_1} &\rightarrow \sigma_{(k+\ell) \alpha_1} \\
      E & \rightarrow E \\
      \beta_i & \rightarrow \beta_i \quad i = 1, \cdots,7 \, .
  \end{aligned} \right.
\end{equation}
We can rewrite the action of $\tau_{\ell\alpha_1}$ in the form 
\begin{equation}\label{eq: translation by sigma alpha 1}
    \tau_{\ell \alpha_1}: \gamma \rightarrow \gamma - \ell (\alpha_1 \cdot \gamma)E + \ell(E \cdot \gamma)\alpha_1 + \ell^2(E \cdot \gamma)E \, ,
\end{equation}
for any $\gamma \in H^2(S,\Z)$. These maps form an abelian group with composition
\begin{equation}
    \tau_{\ell \alpha_1} \circ \tau_{\ell' \alpha_1} = \tau_{(\ell +\ell')\alpha_1}\, ,
\end{equation}
and lattice automorphisms of this form are called Eichler-Siegel transformations in the literature, see \cite{heckman2001moduli} for a discussion in the present context. 

The $\tau_{\ell \alpha_1}$ are not the only class of automorphism we can find for $S$. As discussed for $S_0$, we may have automorphisms acting on the homology of $S$ by Picard-Lefschetz transformations. However, deformations now only affect cycles corresponding to roots in $\Lambda_{E_8}$ which are not contained in the lattice $\Lambda_{A_7}$ spanned by fiber components. Hence only Picard-Lefschetz transformations $w_\alpha$ for all $\alpha \in \Lambda_{E_8}\setminus \Lambda_{A_7}$ persist as automorphisms of $S$.

\subsection{The Threefolds $X_S$ and $X_S^f$ } \label{sec: Geometry of XS}

As done for $S_0$, we now define the threefold $X_S$ as the
canonical bundle over $S$, which we can realize concretely as
\begin{equation}\label{eq: def of XS}
  X_S: \qquad  \begin{pmatrix}
        P & -Q \\
        u & v 
    \end{pmatrix}
    \begin{pmatrix}
        x_1 \\ x_2
    \end{pmatrix} = 0 \, ,
\end{equation}
with $u,v$ coordinates on $\C^2$. 

For the purposes of geometric engineering, we will be interested in a phase where all the $M_i$ are flopped out of the surface $S$. We can blow down all of the $M_i$ in $X_S$ by passing to the threefold $\tilde{X}_S$ defined by 
\begin{equation}
    \tilde{X}_S: \qquad P v = Q u \, .
\end{equation}
Note that this takes exactly the same form as $S$, except that now $u,v$ are coordinates in $\C^2$ rather than $\P^1$. With the projection to $(u,v)$ we find the same elliptic fibers over rays in $\C^2$ as we did over points of the $\P^1$ base of $S$, but the fibration becomes non-flat at $u=v=0$ where the fiber becomes $\P^1_y \times \P^1_z$. There is an $I_8$ fiber located over the locus $u=0$ with a singularity enhancement at the point $u=v=P=Q=0$ due to the $M_i$ having been blown down.

We can reach the flopped phase as
\begin{equation}
  X_S^f: \qquad  \begin{pmatrix}
        P & u \\
        -Q & v 
    \end{pmatrix}
    \begin{pmatrix}
        x_1 \\ x_2
    \end{pmatrix} = 0 \, ,
\end{equation}
where now the coordinates $[x_1:x_2]$ are sections of $\mathcal{O}$ and $\mathcal{O}_{\P^1_y}(2)\otimes \mathcal{O}_{\P^1_z}(2)$ instead. This fibration is also not flat, but has a copy of $\P^1_y \times \P^1_z$ located at $u=v=0$. 

On $X_S$, every curve $ \mathcal{C}$ on $S$ defines a non-compact divisor given by $K_S|_{\mathcal{C}}$. Alternatively, one may think of these divisors as appropriate intersections of \eqref{eq: def of XS} with the ambient space. The second description persists for $X_S^f$, but it becomes harder to describe these divisors concretely. Note however, that a given curve class on $S$ uniquely determines a corresponding divisor class on $X_S^f$. We can hence continue to label divisor classes on $X_S^f$ in terms of elements of $H_2(S,\Z) \simeq I_{1,9}$, which is what we will do in the following.  

As argued above, $S$ admits automorphisms $\tau_{\ell \alpha_1}$ for all $\ell \in \Z$ and $w_\alpha$ for all $\alpha \in \Lambda_{E_8}\setminus \Lambda_{A_7}$. The threefold $X_S$ is just the canonical bundle over $S$, so that these clearly lift to automorphisms of $X_S$. 

In order to flop $X_S$ to $X_S^f$ we first collapsed the curves $M_i$ to find $\tilde{X}_S$, but we might as well have collapsed the image of $M_i$ under any of the automorphisms of $X_S$ to produce a space which is isomorphic to $\tilde{X}_S$. By the same logic, we can flop any set of 8 $(-1)$-curves which is found by acting with an automorphism of $X_S$ on the $M_i$ to find an isomorphic space. This shows that the automorphisms $\tau_{\alpha_1}$ and $w_\alpha$ for all $\alpha \in \Lambda_{E_8}\setminus \Lambda_{A_7}$ also become automorphisms of $X_S^f$. In the end, the structure of $S$ uniquely determines both $X_S$ and $X_S^f$, so that it is not surprising that symmetries of $S$ become symmetries of both $X_S$ and $X_S^f$.

\subsection{Generalized Box Graphs and Automorphisms of $H_2(S,\Z)$} \label{sec: box graphs}

Using the generalized box graphs introduced in~\cite{Sabag:2022hyw}, we now investigate dual descriptions of the field theory and the implied automorphisms of $H^2(S,\Z) \simeq I_{1,9}$. We will formulate these in terms of the fiber $F$ and section $\Sigma$ of the ruling, and the exceptional divisors $M_i$. As the action of these relates dual field theories, it is reasonable to suspect (and obvious in some cases) that these arise from automorphisms of the geometry. As we do not ultimately require such a statement for our construction, we will not spell out this aspect in detail.

\paragraph{Box Graphs.}
Box graphs provide a diagrammatic way to understand Coulomb branch phases using a representation theoretic graphical notation ~\cite{Hayashi:2014kca,Hayashi:2013lra, Braun:2014kla,Braun:2015hkv,Apruzzi:2019enx}, which have a direct geometric Calabi-Yau interpretation: they model the relative K\"ahler cones. For 5d SCFTs these were used to e.g. classify all the consistent 5d SCFTs and identify their gauge theory and geometric descriptions in \cite{Apruzzi:2019opn,Apruzzi:2019enx,Apruzzi:2019vpe,Apruzzi:2019kgb}. 

\begin{figure}
\center
\begin{tikzpicture}[baseline,scale=0.85]
\tikzstyle{every node}=[font=\scriptsize]

\node[] (5dLyt) at (0,0) {\ytableausetup{centertableaux,boxsize=28pt}
{\tiny\begin{ytableau}
 F-M_1 &  F-M_2 &  F-M_3 &  F-M_4 &  F-M_5 &  F-M_6 & F-M_7 & F-M_8 \\
\none[] & \none[] & \none[] & \none[] & \none[] & \none[] & M_8 & M_7 & M_6 & M_5 & M_4 & M_3 & M_2 & M_1 \\ \none[] \\ \none[] \\
 -M_1 & -M_2 & -M_3 & -M_4 & -M_5 & -M_6 & -M_7 & -M_8 \\
\none[] & \none[] & \none[] & \none[] & \none[] & \none[] & M_8 -F& M_7 -F& M_6 -F& M_5 -F& M_4 -F& M_3 -F& M_2 -F& M_1 -F\\ \none[] \\
\end{ytableau}}};

\draw[-latex,black,line width=0.3mm] (-8.3,2.5) arc
    [start angle=270,end angle=90,x radius=0.8cm,y radius =0.55cm] ;
\node[] (so16root8) at (-9.6,3.1) {\fontsize{12pt}{12pt}$\beta_8$};

\draw[-latex,black,line width=0.3mm] (-6.5,4.3) arc
    [start angle=10,end angle=180,x radius=0.5cm,y radius =0.7cm] ;
\draw[-latex,black,line width=0.3mm] (-5.3,4.3) arc
    [start angle=10,end angle=180,x radius=0.5cm,y radius =0.7cm] ;
\draw[-latex,black,line width=0.3mm] (-4.1,4.3) arc
    [start angle=10,end angle=180,x radius=0.5cm,y radius =0.7cm] ;
\draw[-latex,black,line width=0.3mm] (-2.9,4.3) arc
    [start angle=10,end angle=180,x radius=0.5cm,y radius =0.7cm] ;
\draw[-latex,black,line width=0.3mm] (-1.7,4.3) arc
    [start angle=10,end angle=180,x radius=0.5cm,y radius =0.7cm] ;
\draw[-latex,black,line width=0.3mm] (-0.5,4.3) arc
    [start angle=10,end angle=180,x radius=0.5cm,y radius =0.7cm] ;
\draw[-latex,black,line width=0.3mm] (0.7,4.3) arc
    [start angle=10,end angle=180,x radius=0.5cm,y radius =0.7cm] ;

\node[] (so16root1) at (-7.1,5.2) {\fontsize{12pt}{12pt}$\beta_1$};
\node[] (so16root2) at (-5.9,5.2) {\fontsize{12pt}{12pt}$\beta_2$};
\node[] (so16root3) at (-4.7,5.2) {\fontsize{12pt}{12pt}$\beta_3$};
\node[] (so16root4) at (-3.5,5.2) {\fontsize{12pt}{12pt}$\beta_4$};
\node[] (so16root5) at (-2.3,5.2) {\fontsize{12pt}{12pt}$\beta_5$};
\node[] (so16root6) at (-1.1,5.2) {\fontsize{12pt}{12pt}$\beta_6$};
\node[] (so16root7) at (0.1,5.2) {\fontsize{12pt}{12pt}$\beta_7$};

\draw[-latex,black,line width=0.3mm] (7.8,1.55) arc
    [start angle=350,end angle=180,x radius=0.5cm,y radius =0.7cm] ;
\draw[-latex,black,line width=0.3mm] (6.6,1.55) arc
    [start angle=350,end angle=180,x radius=0.5cm,y radius =0.7cm] ;
\draw[-latex,black,line width=0.3mm] (5.4,1.55) arc
    [start angle=350,end angle=180,x radius=0.5cm,y radius =0.7cm] ;
\draw[-latex,black,line width=0.3mm] (4.2,1.55) arc
    [start angle=350,end angle=180,x radius=0.5cm,y radius =0.7cm] ;
\draw[-latex,black,line width=0.3mm] (3,1.55) arc
    [start angle=350,end angle=180,x radius=0.5cm,y radius =0.7cm] ;
\draw[-latex,black,line width=0.3mm] (1.8,1.55) arc
    [start angle=350,end angle=180,x radius=0.5cm,y radius =0.7cm] ;
\draw[-latex,black,line width=0.3mm] (0.6,1.55) arc
    [start angle=350,end angle=180,x radius=0.5cm,y radius =0.7cm] ;

\node[] (so16root1) at (7.4,0.6) {\fontsize{12pt}{12pt}$\beta_1$};
\node[] (so16root2) at (6.2,0.6) {\fontsize{12pt}{12pt}$\beta_2$};
\node[] (so16root3) at (5,0.6) {\fontsize{12pt}{12pt}$\beta_3$};
\node[] (so16root4) at (3.8,0.6) {\fontsize{12pt}{12pt}$\beta_4$};
\node[] (so16root5) at (2.6,0.6) {\fontsize{12pt}{12pt}$\beta_5$};
\node[] (so16root6) at (1.4,0.6) {\fontsize{12pt}{12pt}$\beta_6$};
\node[] (so16root7) at (0.2,0.6) {\fontsize{12pt}{12pt}$\beta_7$};

\draw[-stealth,black,line width=0.3mm] (-7.5,0) -- (-7.5,2.5);
\node[] (su2root) at (-7,1.4) {\fontsize{12pt}{12pt}$\beta_0$};

\end{tikzpicture}
\caption{The weight diagram of the 5d KK-theory generated by the reduction of the 6d rank-1 E-string theory on a circle corresponding to the effective $\mathfrak{su}(2) +8F$ gauge description with symmetry $\mathfrak{su}(2) \oplus \mathfrak{so}(16)$. The weights are labeled by their corresponding curve classes.}
\label{fig: E-stringBG}
\end{figure}

The box graphs are essentially weight diagrams for the gauge and flavor symmetries of a 5d effective theory, decorated by signs, which are chosen such that the resulting sign-dressed weights form a consistent extended Coulomb branch chamber. Equivalently, in the geometry, the sign assignment for a curve associated to a weight $w$, $C_w$, indicates which of $\pm C_w$ is an effective curve. 
The box graphs are an efficient way to determine, using simple rules, all consistent sign assignments. To make the connection to the geometry explicit we use the identification between the geometric curves and the $\mathfrak{su}(2)\oplus \mathfrak{so}(16)$ weights~\eqref{eq: N map applied to effective curves} and roots~\eqref{eq:Eso16roots_so16weights} to label our box graphs.

The box graphs for the rank-1 E-string were discussed in \cite{Apruzzi:2019opn,Apruzzi:2019enx}.
The underlying weight diagram for the $(\mathbf{2},\mathbf{16})$ representation of the gauge and flavor symmetry $\mathfrak{su}(2) \oplus \mathfrak{so}(16)$ is shown in  Figure~\ref{fig: E-stringBG}. Adjacent boxes in the graph are connected by the addition or subtraction of simple roots $\beta_i$ as defined in~\eqref{eq:Eso16roots_so16weights}.

Box graphs are obtained by decorating each box, i.e.~$\mathfrak{so}(2) \oplus \mathfrak{so}(16)$ 
weight $w$, with a sign/color depending on the sign of 
$\langle (\phi, m_i), w \rangle $, 
where $(\phi,m_i)$ is a vector of extended Coulomb branch parameters, projected onto the weight $w$. 
The rules for the sign assignments of the standard box graph are simple: since weights to the left or above a box are reached by the addition of a (positive) simple root, boxes to the left or above a box with positive sign must have positive sign. The signs are indicated in terms of $\pm$ sign blue/yellow.

In the geometry we choose the K\"ahler form~\eqref{eq: correct Kahler form} as the vector of extended Coulomb branch data with which to color the boxes. In this way, the sign of a weight $w$ is determined geometrically by
\be
    \text{sign}(J \cdot C_w) \, ,
\ee
where $C_w$ is the corresponding curve. Thus the coloring of the boxes corresponds to which curves are effective: curves with positive volume are colored blue, and flopped curves with negative volume are colored yellow. 
For example, in the phase in which $-m_i<\phi<m_i$ and $\phi=0$, the curves $F-M_i$ are effective, and $M_i$ are at negative volume $-m_i<0$. So, for this gauge theory phase, we get the coloring of the decorated box graph shown in Figure~\ref{Fig: box graph first phase}.
\begin{figure}
\centering
\begin{tikzpicture}
\node[] (5dLyt) at (-3,-2) {\ytableausetup{centertableaux,boxsize=6pt}
\begin{ytableau}
*(blue) & *(blue) & *(blue) & *(blue) & *(blue) & *(blue) & *(blue) & *(blue) \\
\none[] & \none[] & \none[] & \none[] & \none[] & \none[] & *(yellow) & *(yellow) & *(yellow) & *(yellow) & *(yellow) & *(yellow) & *(yellow) & *(yellow) \\ \none[] \\
*(blue) & *(blue) & *(blue) & *(blue) & *(blue) & *(blue) & *(blue) & *(blue) \\
\none[] & \none[] & \none[] & \none[] & \none[] & \none[] & *(yellow) & *(yellow) & *(yellow) & *(yellow) & *(yellow) & *(yellow) & *(yellow) & *(yellow) \\
\end{ytableau}};
\end{tikzpicture}
\caption{Box graph for the gauge theory phase $-m_i<\phi<m_i$, $\phi=0$.  }
\label{Fig: box graph first phase}
\end{figure}

\paragraph{Generalized Box Graphs.}
The generalized box graphs \cite{Sabag:2022hyw} appear when one tries to compare the box graphs for two different gauge theory phases related by a Weyl transformation, whilst keeping a fixed basis of weights. 
As such, the generalized box graph needs not obey the standard sign assignments of the box graph in that fixed basis of weights, but should map to a standard box graph after  relabeling by the appropriate Weyl action. 
More specifically, since the box graph in any phase of the ECB is defined modulo the Weyl action of the group, we only consider one Weyl chamber representative. 
If we fix the Weyl chamber for the theory in an initial gauge theory phase, then we have to choose a different Weyl chamber for the theory in another phase due to the non-trivial identification of ECB parameters. Examples of such generalized box graphs are shown in Figures~\ref{Fig: box diagram n4 second region} and~\ref{Fig: box diagram n3 second region}. 
The transformation also acts on the simple roots $\beta_i$, since they are built from the fundamental weights. Their image under this transformation will be denoted $\beta_i'$.
The weights in the transformed box graph are connected by the addition or subtraction of the transformed simple roots, not of the simple roots in the old basis. 

The sign assignment/coloring is of course consistent in the new basis of roots. 
We illustrate this with an example: look at the gauge theory phases in Figure~\ref{Fig: box diagram n3 second region}.
\begin{figure}
\centering
\begin{tikzpicture}
\node[] (5dLyt) at (-3,-2) {\ytableausetup{centertableaux,boxsize=6pt}
\begin{ytableau}
*(blue) & *(blue) & *(blue) & *(blue) & *(blue) & *(blue) & *(blue) & *(blue) \\
\none[] & \none[] & \none[] & \none[] & \none[] & \none[] & *(yellow) & *(yellow) & *(yellow) & *(yellow) & *(yellow) & *(yellow) & *(yellow) & *(yellow) \\ \none[] \\
*(blue) & *(blue) & *(blue) & *(blue) & *(blue) & *(blue) & *(blue) & *(blue) \\
\none[] & \none[] & \none[] & \none[] & \none[] & \none[] & *(yellow) & *(yellow) & *(yellow) & *(yellow) & *(yellow) & *(yellow) & *(yellow) & *(yellow) \\
\end{ytableau}};
    \node[] (5dLyt) at (3,-2) {\ytableausetup{centertableaux,boxsize=6pt}
\begin{ytableau}
*(yellow) & *(yellow) & *(yellow) & *(yellow) & *(yellow) & *(yellow) & *(yellow) & *(yellow) \\
\none[] & \none[] & \none[] & \none[] & \none[] & \none[] & *(blue) & *(blue) & *(blue) & *(blue) & *(blue) & *(blue) & *(blue) & *(blue) \\ \none[] \\
*(yellow) & *(yellow) & *(yellow) & *(yellow) & *(yellow) & *(yellow) & *(yellow) & *(yellow) \\
\none[] & \none[] & \none[] & \none[] & \none[] & \none[] & *(blue) & *(blue) & *(blue) & *(blue) & *(blue) & *(blue) & *(blue) & *(blue) \\
\end{ytableau}};
\draw[black, thick] (0,-3) -- (0,-1);
\end{tikzpicture}
\caption{Box graph description of the $\Phi$-gluing procedure of field theory. Both the matter curves $M_i$ and fiber $F$ are collapsed at the location of the gluing. The $2n=8$ curves $F-M_i$ are flopped out of the geometry and the 8 curves $M_i$ emerge from zero volume after passing through the gluing location.}
\label{Fig: box diagram n4 second region}
\end{figure}
\begin{figure}
\centering
\begin{tikzpicture}
\node[] (5dLyt) at (-3,-2) {\ytableausetup{centertableaux,boxsize=6pt}
\begin{ytableau}
*(blue) & *(blue) & *(blue) & *(blue) & *(blue) & *(blue) & *(blue) & *(blue) \\
\none[] & \none[] & \none[] & \none[] & \none[] & \none[] & *(yellow) & *(yellow) & *(yellow) & *(yellow) & *(yellow) & *(yellow) & *(yellow) & *(yellow) \\ \none[] \\
*(blue) & *(blue) & *(blue) & *(blue) & *(blue) & *(blue) & *(blue) & *(blue) \\
\none[] & \none[] & \none[] & \none[] & \none[] & \none[] & *(yellow) & *(yellow) & *(yellow) & *(yellow) & *(yellow) & *(yellow) & *(yellow) & *(yellow) \\
\end{ytableau}};
\node[] (5dLyt) at (3,-2) {\ytableausetup{centertableaux,boxsize=6pt}
\begin{ytableau}
*(yellow) & *(yellow) & *(yellow) & *(yellow) & *(yellow) & *(yellow) & *(blue) & *(blue) \\
\none[] & \none[] & \none[] & \none[] & \none[] & \none[] & *(yellow) & *(yellow) & *(blue) & *(blue) & *(blue) & *(blue) & *(blue) & *(blue) \\ \none[] \\
*(yellow) & *(yellow) & *(yellow) & *(yellow) & *(yellow) & *(yellow) & *(blue) & *(blue) \\
\none[] & \none[] & \none[] & \none[] & \none[] & \none[] & *(yellow) & *(yellow) & *(blue) & *(blue) & *(blue) & *(blue) & *(blue) & *(blue) \\
\end{ytableau}};
\draw[black, thick] (0,-3) -- (0,-1);
\end{tikzpicture}
\caption{Box graph description of the mixed-gluing procedure of field theory, where we have performed a $\Phi$ gluing for $2n=6$ of the chirals. Both the matter curves $M_{1,\cdots ,6}$ and fiber $F$ are collapsed at the location of the gluing. The 6 curves $F-M_{1,\cdots ,6}$ are flopped out of the geometry and the 6 curves $M_{1,\cdots ,6}$ emerge from zero volume after passing through the wall. The curves $F-M_{7,8}$ and $M_{7,8}$ remain invariant after passing through the gluing location.}
\label{Fig: box diagram n3 second region}
\end{figure}
The gauge theory phase on the left appears consistent with the box graph coloring, but the right does not. However, observe that in terms of the transformed simple roots, the undecorated weight diagram takes the form shown in Figure~\ref{fig: E-stringBGn3}.

\begin{figure}[t]
\center
\begin{tikzpicture}[baseline,scale=0.85]
\tikzstyle{every node}=[font=\scriptsize]

\node[] (5dLyt) at (0,0) {\ytableausetup{centertableaux,boxsize=28pt}
{\tiny\begin{ytableau}
 F-M_1 &  F-M_2 &  F-M_3 &  F-M_4 &  F-M_5 &  F-M_6 & F-M_7 & F-M_8 \\
\none[] & \none[] & \none[] & \none[] & \none[] & \none[] & M_8 & M_7 & M_6 & M_5 & M_4 & M_3 & M_2 & M_1 \\ \none[] \\ \none[] \\
 -M_1 & -M_2 & -M_3 & -M_4 & -M_5 & -M_6 & -M_7 & -M_8 \\
\none[] & \none[] & \none[] & \none[] & \none[] & \none[] & M_8 -F& M_7 -F& M_6 -F& M_5 -F& M_4 -F& M_3 -F& M_2 -F& M_1 -F\\ \none[] \\
\end{ytableau}}};

\draw[-latex,black,line width=0.3mm] (-8.3,2.5) arc
    [start angle=270,end angle=90,x radius=0.8cm,y radius =0.55cm] ;
\node[] (so16root8) at (-9.6,3.1) {\fontsize{12pt}{12pt}$\beta_8'$};

\draw[latex-,black,line width=0.3mm] (-6.5,4.2) arc
    [start angle=0,end angle=170,x radius=0.5cm,y radius =0.7cm] ;
\draw[latex-,black,line width=0.3mm] (-5.3,4.2) arc
    [start angle=0,end angle=170,x radius=0.5cm,y radius =0.7cm] ;
\draw[latex-,black,line width=0.3mm] (-4.1,4.2) arc
    [start angle=0,end angle=170,x radius=0.5cm,y radius =0.7cm] ;
\draw[latex-,black,line width=0.3mm] (-2.9,4.2) arc
    [start angle=0,end angle=170,x radius=0.5cm,y radius =0.7cm] ;
\draw[latex-,black,line width=0.3mm] (-1.7,4.2) arc
    [start angle=0,end angle=170,x radius=0.5cm,y radius =0.7cm] ;
\draw[-latex,black,line width=0.3mm] (0.7,4.3) arc
    [start angle=10,end angle=180,x radius=0.5cm,y radius =0.7cm] ;

\draw[-latex,black, line width=0.3mm] (-1.6,2.85) .. controls (-1.2,-1) and (0.4,-1) .. (0.8,1.6);

\node[] (so16root1) at (-7.1,5.2) {\fontsize{12pt}{12pt}$\beta_1'$};
\node[] (so16root2) at (-5.9,5.2) {\fontsize{12pt}{12pt}$\beta_2'$};
\node[] (so16root3) at (-4.7,5.2) {\fontsize{12pt}{12pt}$\beta_3'$};
\node[] (so16root4) at (-3.5,5.2) {\fontsize{12pt}{12pt}$\beta_4'$};
\node[] (so16root5) at (-2.3,5.2) {\fontsize{12pt}{12pt}$\beta_5'$};
\node[] (so16root6) at (-1.7,0.6) {\fontsize{12pt}{12pt}$\beta_6'$};
\node[] (so16root7) at (0.1,5.2) {\fontsize{12pt}{12pt}$\beta_7'$};
\node[] (so16root6) at (1.7,5.2) {\fontsize{12pt}{12pt}$\beta_6'$};

\draw[latex-,black,line width=0.3mm] (7.8,1.65) arc
    [start angle=360,end angle=190,x radius=0.5cm,y radius =0.7cm] ;
\draw[latex-,black,line width=0.3mm] (6.6,1.65) arc
    [start angle=360,end angle=190,x radius=0.5cm,y radius =0.7cm] ;
\draw[latex-,black,line width=0.3mm] (5.4,1.65)         arc [start angle=360,end angle=190,x               radius=0.5cm,y radius =0.7cm] ;
\draw[latex-,black,line width=0.3mm] (4.2,1.65) arc
    [start angle=360,end angle=190,x radius=0.5cm,y radius =0.7cm] ;
\draw[latex-,black,line width=0.3mm] (3,1.65) arc
    [start angle=360,end angle=190,x radius=0.5cm,y radius =0.7cm] ;
\draw[-latex,black,line width=0.3mm] (0.6,1.55) arc
    [start angle=350,end angle=180,x radius=0.5cm,y radius =0.7cm] ;

\draw[-latex,black, line width=0.3mm] (-1.6,2.85) .. controls (-1.2,-1) and (0.4,-1) .. (0.8,1.6);

\draw[-latex,black, line width=0.3mm] (-0.8,4.2) .. controls (-0.4,6.8) and (1.2,6.8) .. (1.8,3.1);

\node[] (so16root1) at (7.4,0.6) {\fontsize{12pt}{12pt}$\beta_1'$};
\node[] (so16root2) at (6.2,0.6) {\fontsize{12pt}{12pt}$\beta_2'$};
\node[] (so16root3) at (5,0.6) {\fontsize{12pt}{12pt}$\beta_3'$};
\node[] (so16root4) at (3.8,0.6) {\fontsize{12pt}{12pt}$\beta_4'$};
\node[] (so16root5) at (2.6,0.6) {\fontsize{12pt}{12pt}$\beta_5'$};
\node[] (so16root7) at (0.2,0.6) {\fontsize{12pt}{12pt}$\beta_7'$};

\draw[-stealth,black,line width=0.3mm] (-7.5,0) -- (-7.5,2.5);
\node[] (su2root) at (-7,1.4) {\fontsize{12pt}{12pt}$\beta_0'$};

\end{tikzpicture}
\caption{The weight diagram of the gauge theory phase in the right of Figure~\ref{Fig: box diagram n3 second region}. Using this basis of roots, one sees that the coloring of the box graph is compatible with this basis of simple roots. }
\label{fig: E-stringBGn3}
\end{figure}
Starting with the blue weight $F-M_8$ of the gauge theory phase in the right of Figure~\ref{Fig: box diagram n3 second region}, we see that adding the positive roots $\beta_{i}'$ we reach the other blue boxes. Similarly, starting from the yellow box $F-M_1$, one sees that adding the positive roots $\beta_i'$ we reach the other yellow boxes. Finally, adding $\beta_8'=\beta_8$ maps us from the yellow colored weights $M_8$ and $M_7$ to the blue weights $F-M_7$ and $F-M_8$. Hence we see that the coloring of the weight diagram when using the new simple roots is consistent with the box graph coloring prescription.

The logic to derive automorphisms of $H^2(S,\Z)$ is now as follows. First, we construct transformations that act on the curves $F$ and $M_i$ that geometrically replicate the transformation on the box graph. Next, we lift that transformation to an automorphism of the homology lattice $H_2(X_S,\mathbb{Z})$ by demanding the transformed curves preserve self-intersection number and intersection with $E$. Note that intersections between curves in $X_S$ are defined by their intersection when restricted to $S$. Using this, we will be able to construct the compatible action of the box graph transformation on the section $\Sigma$ that lifts to an automorphism. 

The first type of transformation acts on the box graph coloring as 
\be \label{Fig: box graph for mus}
\begin{tikzpicture}
\node[] (5dLyt) at (-3,-2) {\ytableausetup{centertableaux,boxsize=6pt}
\begin{ytableau}
*(blue) & *(blue) & *(blue) & *(blue) & *(blue) & *(blue) & *(blue) & *(blue) \\
\none[] & \none[] & \none[] & \none[] & \none[] & \none[] & *(yellow) & *(yellow) & *(yellow) & *(yellow) & *(yellow) & *(yellow) & *(yellow) & *(yellow) \\ \none[] \\
*(blue) & *(blue) & *(blue) & *(blue) & *(blue) & *(blue) & *(blue) & *(blue) \\
\none[] & \none[] & \none[] & \none[] & \none[] & \none[] & *(yellow) & *(yellow) & *(yellow) & *(yellow) & *(yellow) & *(yellow) & *(yellow) & *(yellow) \\
\end{ytableau}};
\node[] (5dLyt) at (3,-2) {\ytableausetup{centertableaux,boxsize=6pt}
\begin{ytableau}
*(blue) & *(blue) & *(blue) & *(blue) & *(blue) & *(blue) & *(blue) & *(blue) \\
\none[] & \none[] & \none[] & \none[] & \none[] & \none[] & *(yellow) & *(yellow) & *(yellow) & *(yellow) & *(yellow) & *(yellow) & *(yellow) & *(yellow) \\ \none[] \\
*(blue) & *(blue) & *(blue) & *(blue) & *(blue) & *(blue) & *(blue) & *(blue) \\
\none[] & \none[] & \none[] & \none[] & \none[] & \none[] & *(yellow) & *(yellow) & *(yellow) & *(yellow) & *(yellow) & *(yellow) & *(yellow) & *(yellow) \\
\end{ytableau}};
\draw[black, thick] (0,-3) -- (0,-1);
\end{tikzpicture}
\ee
There are four obvious transformations which would map us from the left box graph to the right box graph in \eqref{Fig: box graph for mus}:
\begin{enumerate}
    \item The identity automorphism. 
    \item Interchange the top and bottom sets of $\mathfrak{so}(16)$ weights that are related by the $\mathfrak{su}(2)$ root. This yields a map on curves that does not preserve intersection with $E$, so does not lift to an automorphism. 
    \item Keep the first and fourth rows of the box graph fixed, equivalent to demanding 
    \be
        F'-M_i' = F-M_i \, , \qquad i=1,\cdots ,8 \, .
    \ee
    By the preservation of self-intersection numbers and intersection with $E$, such constraints uniquely force this map to correspond to the identity automorphism. 
    \item Keep the second and third rows of the box graph fixed, corresponding to 
    \be
    M_i' = M_i \, , \qquad i=1, \cdots ,8 \, .
    \ee
    This does not define a unique lift to $\text{Aut}(H_2(X_S,\mathbb{Z}))$. We can make an ansatz for the fiber and section which does not depend on the curves $M_i$ due to self-intersection number preservation. Satisfying both the self-intersection constraints and intersection with the elliptic fiber, we arrive at two possible consistent automorphisms of the fiber and section, namely the identity automorphism, and the one given by
    \be
       \Sigma \leftrightarrow F \, .
    \ee
    with all other classes left invariant. 
\end{enumerate}
Thus from this box graph transformation, we have shown there is precisely one non-trivial lift to $\text{Aut}(H_2(S,\mathbb{Z}))$ given by 
\be \label{eq: mus general definition}
\mu_{s}:  \qquad 
\left\{\ 
\begin{aligned}
\Sigma &\rightarrow \Sigma'= F\\
F &\rightarrow F'= \Sigma \\
M_i &\rightarrow M_i'=M_i \qquad i=1,\cdots 8 \, ,
\end{aligned}\right.
\ee
which captures the exchange of the fiber and section of the ruling, is of order 2, and leaves the class of the elliptic fiber $E$ invariant.

From field theory, we know that in the construction of the $(2^{2n},0^{8-2n})$ flux tori quivers, the box graphs undergo transformations at the location of the mixed gluing procedure, which can be summarized in Figures~\ref{Fig: box diagram n4 second region} and~\ref{Fig: box diagram n3 second region} for the examples $n=4$ and $n=3$ respectively.

For a general $n$, from the box graph one reads off the transformation 
\be 
\left\{\begin{aligned}
M_i &\rightarrow M_i'=F-M_i \qquad &i=1,\cdots ,2n \\
F-M_i &\rightarrow F'-M_i'=M_i \qquad &i=1,\cdots ,2n\\
M_i &\rightarrow M_i'=M_i \qquad \qquad &i=2n+1, \cdots ,8 \\
F-M_i &\rightarrow F'-M_i'=F-M_i \qquad \qquad &i=2n+1,\cdots ,8 \, ,
\end{aligned}\right.
\ee
from which we get the actions $F'=F$, $M_i'=F-M_i$ for $i=1, \cdots ,2n$, and $M_i'=M_i$ for $i=2n+1, \cdots ,8$. To construct the action on the section $\Sigma$, we demand that $\Sigma'^2 = 0$, $\Sigma'\cdot F'=1$, and $\Sigma'\cdot M_i'=0$, which uniquely determines the map
\be \label{eq: mutn general definition}
\mu_{t,n}:  \qquad 
\left\{\ 
\begin{aligned}
\Sigma &\rightarrow \Sigma'= \Sigma+n F - \sum_{i=1}^{2n}M_i\\
F &\rightarrow F'= F \\
M_i &\rightarrow M_i'=F-M_i \qquad i=1,\cdots 2n \\
M_i &\rightarrow M_i'=M_i \qquad \qquad i=2n+1\cdots 8 \, . \\
\end{aligned}\right.
\ee
A short computation shows that any of these are of order 2 as well. Note that $\mu_{t,n}$ furthermore leaves the class $E$ of the elliptic fiber invariant. 

Another approach to these transformations that comes directly from Section~\ref{sec: Field Theory sec2} is as follows. Observe that under the ECB parameter transformation~\eqref{eq:DWtransf} which occurs as one crosses a duality domain wall, the K\"ahler form~\eqref{eq: Kahler form} transforms as
\be
    J \to J' = (2\phi + m_{\lambda}^0) \Sigma+2\phi F +\sum_{i=1}^8 (m_i-\phi) M_i = \mu_s(J) \, .
\ee
This transformation of ECB parameters is captured geometrically by exchanging $\Sigma$ and $F$ which is exactly the action of $\mu_s$. Similarly, the action of~\eqref{eq:DWtransn} on the ECB parameters that occurs at the location of mixed gluing agrees with the action of $\mu_{t,n}$ on the K\"ahler form. Thus we have a second way of understanding these maps: they are capturing geometrically the action of the ECB parameter transformations that occur in field theory when domain walls are crossed on the K\"ahler form of the threefold.

\section{Geometric Realization of 4d $\cN=1$ Quiver Theories} 
\label{sec: Geometric Realization}

Our first goal in this section is to geometrically realize the flux tori quivers in Figure~\ref{fig:Torusq1intro}, for the cases $n=1,2,3,4$. After fibering the threefold over an interval, with the insertion of appropriate duality domain walls, we determine the resulting 4d $\mathcal{N}=1$ quivers for flux tori after closing the interval to form a circle. 

This requires showing that the geometries corresponding to the ends of the interval are isomorphic, so that the two ends can be glued. This is immediate for the case $n=4$, but we relegate an argument for the existence of such automorphisms for the remaining cases after we have introduced a more systematic algebraic procedure for constructing the geometry and gluing maps associated to a torus with arbitrary flux in Section \ref{sect:elemfluxtubes}.

\subsection{Domain Walls and their Local Geometry} \label{sec: domain walls and local geometry}

The two different types of automorphisms $\mu_s$ and $\mu_{t,n}$ in $\text{Aut}(I_{1,9})$ defined in~\eqref{eq: mus general definition} and \eqref{eq: mutn general definition} give rise to domain walls in field theory, which we refer to as $\mu_s$ and $\mu_{t,n}$ domain walls, respectively. For each domain wall, we define a \textbf{local building block} of arbitrary finite length $\varepsilon$, which we denote $\mu_s^{\varepsilon}$ and $\mu_{t,n}^{\varepsilon}$ respectively. Such a local building block is defined by the local geometry determined by its associated volume profile, and has the property that the field theory data on the two sides of the wall is related by applying $\mu_s$ or $\mu_{t,n}$. This implies that we must be in a configuration that is invariant under $\mu_s$ or $\mu_{t,n}$ at the middle point. Together with the field theory engineered by $X_S^f$ away from the domain wall, this allows us to understand which further fields and interactions are localized on the domain wall.

\paragraph{$\mu_s$ Domain Wall.} These domain walls are associated to S-duality, and can be understood as a form of fiber-base-duality. Invariance under the map 
$\mu_s$ hence requires fiber and base of the ruling to have equal volumes. 
The local geometry summarized by the volume profile in Figure~\ref{fig: mus local volumes} is what we denote as the local building block $\mu_s^{\varepsilon}$. 

Since we always work in a gauge theory phase in which the Coulomb branch parameter $\phi=0$, the fiber of the ruling is collapsed to zero volume on either side of the domain wall, which gives rise to an $A_1$ singularity sitting over the base of the ruling. Hence, we must have that the section $\Sigma$ collapses to zero volume at the location of the $\mu_s$ domain wall, where we apply $\mu_s$. We take the section to be at positive non-zero volume prior to collapse to give rise to a non-trivial 3-cycle in the geometry over which the collapsed fiber will be fibered. 

As $\phi=0$, all of the matter curves $M_i$ must be flopped out of the surface $S$ at the point we apply $\mu_{s}$, at volume 
\be
    \text{Vol}(M_i) = -\frac{1}{8} \text{Vol}(E) = -\frac{m_{\text{KK}}}{8} \, ,
\ee
as fixed by the volume of the elliptic fiber. 
{Here we are the special point in the extended Coulomb branch where all the masses $m_i$ are equal.}
After applying $\mu_s$, we switch the class of the section and the fiber, with fiber $F'$ and section $\Sigma'$ emerging to the right of the domain wall. We remain in the gauge theory phase in which the Coulomb branch parameter is zero, so $F'$ is collapsed to the right of the domain wall, and $\Sigma'$ grows to finite positive volume. This is summarized in the volume profile in Figure~\ref{fig: mus local volumes}. 
\begin{figure}
    \centering
    \begin{tikzpicture}[scale=0.8]
    \begin{scope}[shift={(0,0)}]
        \filldraw[color = red, fill=red!20, thick] (0,-2) ellipse (0.75 and 1.97);
        \draw[red, thin] (0,-2) ellipse (0.75 and 0.25);
        \filldraw[color = red, fill=red!20, thick] (-2,-2.5) ellipse (0.56 and 1.47);
        \draw[red, thin] (-2,-2.5) ellipse (0.56 and 0.19);
        \filldraw[color = red, fill=red!20, thick] (-4,-3) ellipse (0.375 and 0.97);
        \draw[red, thin] (-4,-3) ellipse (0.375 and 0.125);
        \filldraw[color = red, fill=red!20, thick] (2,-2.5) ellipse (0.56 and 1.47);
        \draw[red, thin] (2,-2.5) ellipse (0.56 and 0.19);
        \filldraw[color = red, fill=red!20, thick] (4,-3) ellipse (0.375 and 0.97);
        \draw[red, thin] (4,-3) ellipse (0.375 and 0.125);

        \filldraw[color = blue, fill=blue!20, thick] (-2,0) ellipse (0.375 and 0.97);
        \draw[blue, thin] (-2,0) ellipse (0.375 and 0.11);
        \filldraw[color = blue, fill=blue!20, thick] (-4,0) ellipse (0.75 and 1.97);
        \draw[blue, thin] (-4,0) ellipse (0.75 and 0.25);
         \filldraw[color = blue, fill=blue!20, thick] (2,0) ellipse (0.375 and 0.97);
        \draw[blue, thin] (2,0) ellipse (0.375 and 0.11);
        \filldraw[color = blue, fill=blue!20, thick] (4,0) ellipse (0.75 and 1.97);
        \draw[blue, thin] (4,0) ellipse (0.75 and 0.25);

        \draw[black, thick] (-6,0) -- (6,0);
        \draw[mygreen, ultra thick] (-5,0) -- (5,0);
        \draw[blue, thick] (-5,2.5) -- (0,0);
        \draw[blue, thick] (0,0) -- (5,2.5);
        \draw[blue, thick] (-5,-2.5) -- (0,0);
        \draw[blue, thick] (0,0) -- (5,-2.5);
        \draw[red, thick] (-5,-3.97) -- (5,-3.97);
        
        \filldraw[black] (0,0) circle (0pt) node{\huge{$\times$}};
        \filldraw[black] (0,1) circle (0pt) node[anchor=south]{\large{$\mu_s$}};
        \draw[thick, ->] (0.25,0.25) arc (-45:225:0.4);
        \filldraw[black] (-6,0) circle (0pt) node[anchor=east]{\textcolor{mygreen}{\large{$F$}}};
        \filldraw[black] (-6,2.5) circle (0pt) node[anchor=east]{\textcolor{blue}{\large{$\Sigma$}}};
        \filldraw[black] (-6,-4) circle (0pt) node[anchor=east]{\textcolor{red}{\large{$M_i$}}};

        \filldraw[black] (6,0) circle (0pt) node[anchor=west]{\textcolor{mygreen}{\large{$F'$}}};
        \filldraw[black] (6,2.5) circle (0pt) node[anchor=west]{\textcolor{blue}{\large{$\Sigma'$}}};
        \filldraw[black] (6,-4) circle (0pt) node[anchor=west]{\textcolor{red}{\large{$M_i'$}}};

         \draw[red, thin, <->] (-5,-2.5) -- (-5,-4);
         \draw[blue, thin, <->] (-5,-2.5) -- (-5,2.5);
        \filldraw[black] (-5.25,-2.5) circle (0pt) node[anchor=east,rotate=90]{\textcolor{red}{\small{$-\text{Vol}(M_i)$}}};
        \filldraw[black,] (-5.25,2) circle (0pt) node[anchor=east,rotate=90]{\textcolor{blue}{\small{$\text{Vol}(\Sigma)$}}};
\end{scope}
\begin{scope}[shift={(0,-7)}]
\draw[thick]  (-2,0) arc (-90:90:0.5);
\draw[thick]  (2,0) arc (270:90:0.5);

\draw[thick] (-1.5,0.5) -- (1.5,0.5) node[midway] {\Large $\times$};

\node at (-2,0.5) {\Large $2$};
\node at ( 2,0.5) {\Large $2$};

\node[anchor=west] at (2,1) {\Large $\cdots$};
\node[anchor=west] at (2,0.0) {\Large $\cdots$};

\node[anchor=east] at (-2,1) {\Large $\cdots$};
\node[anchor=east] at (-2,0.0) {\Large $\cdots$};
\end{scope}
    \end{tikzpicture}
    \caption{The two thimble geometry arising from the $\mu_s$ domain wall, denoted by $\mu_s^{\varepsilon}$. Each cross-sectional sphere represents a $\mathbb{P}^1$, the volumes of which are that of the curve classes labelled, and the location we apply the $\mu_s$ automorphism is marked with a $\times$.   }
    \label{fig: mus local volumes}
\end{figure}
The local geometry is that of two thimbles, over which an $A_1$ singularity is fibered, coming from the collapsed curve $F$. The matter curves $M_i$ all remain flopped out of the geometry and are merely spectators here. The $A_1$ singularity gives rise to $SU(2)$ quiver nodes. These nodes do not touch the $SU(8)$ quiver node in this interval, since the matter curves intersecting both the $A_1$ and $A_7$ roots are at finite volume. We therefore do not draw the $SU(8)$ quiver node in this region.

We have introduced a new notation for the quiver in Figure~\ref{fig: mus local volumes}, which should be viewed as a local piece of the 4d $\cN=1$ quiver. In this way, the quiver is meaningless until we close the profile to give a threefold fibration over $S^1$, as in Figure~\ref{fig: n4 volumes and quiver closed loop}. The local pieces of the quiver will glue together in the obvious way when we start to concatenate local building blocks.

\paragraph{$\mu_{t,n}$ Domain Walls.} These domain walls occur at the locations in which one performs a $\Phi$ or mixed gluing in field theory. To understand the geometry defined by the volume profile of the local building block shown in Figure~\ref{fig: local mutn dom wall}, which we denote by $\mu_{t,n}^{\varepsilon}$, recall the following action of $\mu_{t,n}$ on the matter curves 
\be 
\mu_{t,n}:  \qquad 
\left\{\begin{aligned}
M_i &\rightarrow M_i'=F-M_i \qquad i=1,\cdots, 2n \\
M_i &\rightarrow M_i'=M_i \qquad \qquad i=2n+1,\cdots, 8 \, .
\end{aligned}\right.
\ee
Since the gauge theory phase requires $F$ to be collapsed to zero volume, one can only apply the $\mu_{t,n}$ map at the location in which the volume of $M_i$ for $i=1,\cdots, 2n$ is zero. Moreover, to capture the field theory operation of mixed gluing at this location, we must require that the matter curves $M_i$ for $i=2n+1,\cdots, 8$ remain at non-zero volume at the location we apply the $\mu_{t,n}$ map. We further take the section $\Sigma$ to be at positive non-zero volume at the location we apply $\mu_{t,n}$.
\begin{figure}
    \centering
\begin{tikzpicture}[scale=0.8]
\begin{scope}[shift={(0,0)},yscale=0.8,xscale=1.2]
    \filldraw[color=blue, fill=blue!20, thick] (-2,0) ellipse (0.72 and 3.17);
    \draw[blue, thin] (-2,0) ellipse (0.72 and 0.22);

    \filldraw[color=blue, fill=blue!20, thick] (-4,0) ellipse (0.60 and 2.37);
    \draw[blue, thin] (-4,0) ellipse (0.60 and 0.17);

    \filldraw[color=blue, fill=blue!20, thick] (2,0) ellipse (0.72 and 3.17);
    \draw[blue, thin] (2,0) ellipse (0.72 and 0.22);

    \filldraw[color=blue, fill=blue!20, thick] (4,0) ellipse (0.60 and 2.37);
    \draw[blue, thin] (4,0) ellipse (0.60 and 0.17);

    \filldraw[color=blue, fill=blue!20, thick] (0,0) ellipse (0.84 and 3.96);
    \draw[blue, thin] (0,0) ellipse (0.84 and 0.27);

    \filldraw[color=orange, fill=orange!20, thick] (0,5) ellipse (0.40 and 0.97);
    \draw[orange, thin] (0,5) ellipse (0.40 and 0.12);

    \filldraw[color=orange, fill=orange!20, thick] (-2,4.6) ellipse (0.54 and 1.37);
    \draw[orange, thin] (-2,4.6) ellipse (0.54 and 0.16);

    \filldraw[color=orange, fill=orange!20, thick] (2,4.6) ellipse (0.54 and 1.37);
    \draw[orange, thin] (2,4.6) ellipse (0.54 and 0.16);

    \filldraw[color=orange, fill=orange!20, thick] (-4,4.2) ellipse (0.68 and 1.77);
    \draw[orange, thin] (-4,4.2) ellipse (0.68 and 0.21);

    \filldraw[color=orange, fill=orange!20, thick] (4,4.2) ellipse (0.68 and 1.77);
    \draw[orange, thin] (4,4.2) ellipse (0.68 and 0.21);

    \filldraw[color=red, fill=red!20, thick] (4,-3.2) ellipse (0.30 and 0.77);
    \draw[red, thin] (4,-3.2) ellipse (0.30 and 0.10);

    \filldraw[color=red, fill=red!20, thick] (2,-3.6) ellipse (0.20 and 0.38);
    \draw[red, thin] (2,-3.6) ellipse (0.20 and 0.05);

    \filldraw[color=red, fill=red!20, thick] (-4,-3.2) ellipse (0.30 and 0.77);
    \draw[red, thin] (-4,-3.2) ellipse (0.30 and 0.10);

    \filldraw[color=red, fill=red!20, thick] (-2,-3.6) ellipse (0.20 and 0.38);
    \draw[red, thin] (-2,-3.6) ellipse (0.20 and 0.05);

    \draw[black, thick] (-6,0) -- (6,0);
    \draw[mygreen, ultra thick] (-5,0) -- (5,0);

    \draw[blue, thick] (-5,2) -- (0,4);
    \draw[blue, thick] (0,4) -- (5,2);
    \draw[blue, thick] (-5,-2) -- (0,-3.97);
    \draw[blue, thick] (0,-3.97) -- (5,-2);

    \draw[red, thick] (-5,-3.97) -- (5,-3.97);
    \draw[orange, thick] (-5,5.97) -- (5,5.97); 

\draw[red, thin, <->] (-5,-2) -- (-5,-4);
         \draw[orange, thin, <->] (-5,2) -- (-5,6);
         \draw[blue, thin, <->] (-5,-2) -- (-5,2);
\end{scope}
\begin{scope}[shift={(0,0)}]

        \filldraw[black] (0,0) circle (0pt) node{\huge{$\times$}};
        \filldraw[black] (0,1) circle (0pt) node[anchor=south]{\large{$\mu_{t,n}$}};
        \draw[thick, ->] (0.25,0.25) arc (-45:225:0.4);
        \filldraw[black] (-7.5,0) circle (0pt) node[anchor=east]{\textcolor{mygreen}{\large{$F$}}};
        \filldraw[black] (-7.5,1.5) circle (0pt) node[anchor=east]{\textcolor{blue}{\large{$\Sigma$}}};
        \filldraw[black] (-7,-3) circle (0pt) node[anchor=east]{\textcolor{red}{\large{$M_{1,\cdots, 2n}$}}};
        \filldraw[black] (-7,4.5) circle (0pt) node[anchor=east]{\textcolor{orange}{\large{$M_{2n+1,\cdots, 8}$}}};

         \filldraw[black] (7.5,0) circle (0pt) node[anchor=west]{\textcolor{mygreen}{\large{$F'$}}};
        \filldraw[black] (7.5,1.5) circle (0pt) node[anchor=west]{\textcolor{blue}{\large{$\Sigma'$}}};
        \filldraw[black] (7,-3) circle (0pt) node[anchor=west]{\textcolor{red}{\large{$M_{1, \cdots ,2n}'$}}};
        \filldraw[black] (7,4.5) circle (0pt) node[anchor=west]{\textcolor{orange}{\large{$M_{2n+1,\cdots , 8}'$}}};

        \filldraw[black] (-6.5,-1.4) circle (0pt) node[anchor=east,rotate=90]{\textcolor{red}{\small{$-\text{Vol}(M_i)$}}};
        \filldraw[black] (-6.5,4.5) circle (0pt) node[anchor=east,rotate=90]{\textcolor{orange}{\small{$-\text{Vol}(M_i)$}}};
        \filldraw[black,] (-6.5,1.5) circle (0pt) node[anchor=east,rotate=90]{\textcolor{blue}{\small{$\text{Vol}(\Sigma)$}}};
    \end{scope}
    \begin{scope}[shift={(0,-9.5)}]

     \tikzstyle{every node}=[font=\scriptsize]
    \draw[thick] (-1,3.5) rectangle (1,4.5);
    \draw[thick] (-0.5,-1.5) rectangle (0.5,-0.5);
    \filldraw[black] (0,-1) circle (0pt) node{\Large{$2n$}};
    \filldraw[black] (0,4) circle (0pt) node{\Large{$8-2n$}};
    \filldraw[black] (0,1.5) circle (0pt) node{\Large{$2$}};
    \draw[thick] (0.5,2) -- (-0.5,2);
    \draw[thick] (0.5,1) -- (-0.5,1);
    \filldraw[black,anchor=west] (0.5,2) circle (0pt) node{\Large{$\cdots$}};
    \filldraw[black,anchor=west] (0.5,1) circle (0pt) node{\Large{$\cdots$}};
    \filldraw[black,anchor=east] (-0.5,2) circle (0pt) node{\Large{$\cdots$}};
    \filldraw[black,anchor=east] (-0.5,1) circle (0pt) node{\Large{$\cdots$}};
    \draw[thick, ->-] (0,1) -- (0,-0.5) ;
    
\end{scope}

\end{tikzpicture}
    \caption{The local geometry arising from the $\mu_{t,n}$ domain wall, which we denote by $\mu_{t,n}^{\varepsilon}$. The fibration of the section $\Sigma$ over the interval of finite size gives rise to a three-cylinder over which an $A_1$ singularity is fibered. The $\mathbb{P}^1$'s associated to the curve classes $M_{1\cdots 2n}$ collapse at the location of the domain wall, meaning the $A_{2n-1}$ singularity touches the $A_1$ singularity at this point. The location at which we apply the $\mu_{t,n}$ automorphism is marked with a $\times$.}
    \label{fig: local mutn dom wall}
\end{figure}
From Figure~\ref{fig: local mutn dom wall}, it is clear that at the location of $\mu_{t,n}$, the $A_7$ root $\beta_{2n}=M_{2n+1}-M_{2n}$ acquires a non-zero volume, whilst the other $A_7$ roots $\beta_{1, \cdots ,2n-1}$ and $\beta_{2n+1, \cdots ,7}$ remain at zero volume. These give rise to an $A_{2n-1}$ and $A_{7-2n}$ singularity, giving the splitting of the flavor symmetry seen in the quiver
\be
    SU(8) \rightarrow SU(2n) \times SU(8-2n){\times U(1)} \, .
\ee
At the location of the domain walls, only the matter curves $M_{1, \cdots ,2n}$ collapse to zero volume so we only get the chirals charged under the $SU(2n)$ symmetry shown in the quiver.

Note that at the location we apply the $\mu_{t,n}$ automorphism, the volume of the elliptic fiber is not enough to uniquely fix the volumes of the sections and the non-collapsed matter curves, only to fix a linear relation between them. For concreteness, we choose them to be
\be
\left\{\begin{aligned}
&\text{Vol}(\Sigma) = \frac{m_{\text{KK}}}{4} \, , \qquad \text{Vol}(M_i) = -\frac{m_{\text{KK}}}{2(8-2n)} \, , \quad i=2n+1, \cdots ,8 \, , \quad n\in \{ 1,2,3 \} \, , \\
&\text{Vol}(\Sigma) = \frac{m_{\text{KK}}}{2} \, , \hspace{8.7cm} n=4 \, ,
\end{aligned}\right.
\ee
which gives us the singularities described above.

\subsection{Stacking Domain Walls}

We now move on to describe how to stack the local building blocks $\mu^\varepsilon_s$ and $\mu^\varepsilon_{t,n}$ introduced above. As we have seen, for each of the domain walls we consider, the action on divisors in $H^2(X_S^f,\Z)\simeq I_{1,9}$, can be summarized by a linear map $\mu$ which is either $\mu_s$ or $\mu_{t,n}$. We can describe this linear action by a matrix, which we will also call $\mu$ and which is defined as
\begin{equation}
   V \rightarrow V' = \mu V
\end{equation}
where $V$ is a vector with entries a basis of $I_{1,9}$, e.g. we could take
$V = \left(F,\Sigma,M_1,\cdots, M_8\right)$. When stacking several domain walls each acting by a map $\mu_i$, the resulting map should then be found by multiplication of the matrices $\mu_i$. This logic further extends to Section~\ref{sect:elemfluxtubes} when we discuss automorphisms associated to Weyl transformations. 

Given a configuration of domain walls 
\begin{equation}
    \mathcal{D} = \mu_1^\varepsilon \cdots \mu_n^\varepsilon \, ,
\end{equation} which is composed of domain walls $\mu_i^\varepsilon$ which are consecutively located on a line, we may try to close the line to a circle by identifying the two ends and form a flux torus. The domain walls have been geometrized by fibering $X_S^f$ over an interval, so this forms a 7-dimensional space $\mathcal{T}$ in which $X_S^f$ is fibered over a circle. Similar to how we can think of a space that has undergone a continuous deformation back to itself by a monodromy map, the map that we need to close the circle is then
\begin{equation}
    \mathcal{M} = (\mu_1 \cdots \mu_n)^{-1} \, ,
\end{equation}
and we will refer to this as the associated monodromy map as well. Starting with a fixed basis of cycles on $X_S^f$, it is precisely the map $\mathcal{M}$ that is encountered by divisors of $X_S^f$ as we move once around the circle over which $X_S^f$ is fibered. 

As discussed in Section \ref{sec: Geometry of XS}, divisor classes on $X_S^f$ are in one-to-one correspondence to curve classes on $S$. We will hence describe the action of maps such as $\mathcal{M}$ in terms of a basis of $H_2(S,\Z) \simeq I_{1,9}$, with the understanding that any of these lifts to a divisor on $X_S^f$.

\subsection{Stacking Domain Walls: Flux $(-2^8)$} \label{sec: flux 2^8}

Having understood the two most basic ingredients of the field theory dictionary, namely the S-duality domain wall and the gluing procedure, we can now begin to stack alternating copies of our local building blocks $\mu_{s}^{\varepsilon}$ and $\mu_{t,n}^{\varepsilon}$. First, we will show how the concatenating of volume profiles allows us to reproduce the quiver in Figure~\ref{fig:Torusq1intro} for $n=4$. To concatenate the volume profiles, we glue the curves at the locations where the volumes of the curves on either side of the gluing point are equal. 

\paragraph{$\mu_{s}^{\varepsilon}\mu_{t,4}^{\varepsilon}\mu_s^{\varepsilon}$ Configuration.} Gluing together the volume profiles in Figures~\ref{fig: mus local volumes} and~\ref{fig: local mutn dom wall} as dictated by this set of maps gives us the configuration shown in Figure~\ref{fig: local mus4s dom wall}.
\begin{figure}
    \centering
    \begin{tikzpicture}[scale=0.8]
\begin{scope}[shift={(0,0)}]


    \filldraw[color=red, fill=red!20, thick] (-5.5,-2.75) ellipse (0.42 and 1.17);
    \draw[red, thin] (-5.5,-2.75) ellipse (0.42 and 0.18);

    \filldraw[color=red, fill=red!20, thick] (-4,-2) ellipse (0.52 and 1.92);
    \draw[red, thin] (-4,-2) ellipse (0.52 and 0.22);

    \filldraw[color=red, fill=red!20, thick] (-2.5,-2.75) ellipse (0.42 and 1.17);
    \draw[red, thin] (-2.5,-2.75) ellipse (0.42 and 0.18);

    \filldraw[color=red, fill=red!20, thick] (-1,-3.5) ellipse (0.28 and 0.42);
    \draw[red, thin] (-1,-3.5) ellipse (0.28 and 0.07);

   \filldraw[color=red, fill=red!20, thick] (5.5,-2.75) ellipse (0.42 and 1.17);
    \draw[red, thin] (5.5,-2.75) ellipse (0.42 and 0.18);

    \filldraw[color=red, fill=red!20, thick] (4,-2) ellipse (0.52 and 1.92);
    \draw[red, thin] (4,-2) ellipse (0.52 and 0.22);

    \filldraw[color=red, fill=red!20, thick] (2.5,-2.75) ellipse (0.42 and 1.17);
    \draw[red, thin] (2.5,-2.75) ellipse (0.42 and 0.18);

    \filldraw[color=red, fill=red!20, thick] (1,-3.5) ellipse (0.28 and 0.42);
    \draw[red, thin] (1,-3.5) ellipse (0.28 and 0.07);


    \filldraw[color=blue, fill=blue!20, thick] (-5.5,0) ellipse (0.46 and 1.42);
    \draw[blue, thin] (-5.5,0) ellipse (0.46 and 0.14);

    \filldraw[color=blue, fill=blue!20, thick] (-2.5,0) ellipse (0.46 and 1.40);
    \draw[blue, thin] (-2.5,0) ellipse (0.46 and 0.16);

    \filldraw[color=blue, fill=blue!20, thick] (-1,0) ellipse (0.62 and 2.88);
    \draw[blue, thin] (-1,0) ellipse (0.62 and 0.22);

    \filldraw[color=blue, fill=blue!20, thick] (1,0) ellipse (0.62 and 2.88);
    \draw[blue, thin] (1,0) ellipse (0.62 and 0.20);

    \filldraw[color=blue, fill=blue!20, thick] (2.5,0) ellipse (0.46 and 1.40);
    \draw[blue, thin] (2.5,0) ellipse (0.46 and 0.15);

    \filldraw[color=blue, fill=blue!20, thick] (5.5,0) ellipse (0.46 and 1.42);
    \draw[blue, thin] (5.5,0) ellipse (0.46 and 0.14);


    \draw[black, thick, ->] (-6.25,0) -- (6.25,0);
    \draw[mygreen, ultra thick] (-5.85,0) -- (5.85,0);

    \draw[red, thick] (-5.85,-3.92) -- (5.85,-3.92);


    \draw[blue, thick] (-5.85,1.85) -- (-4.00,0);
    \draw[blue, thick] (-5.85,-1.85) -- (-4.00,0);

    \draw[blue, thick] (-4.00,0) -- (0.00,3.92);
    \draw[blue, thick] (0.00,3.92) -- (4.00,0);

    \draw[blue, thick] (-4.00,0) -- (0.00,-3.92);
    \draw[blue, thick] (0.00,-3.92) -- (4.00,0);

    \draw[blue, thick] (4.00,0) -- (5.85,1.85);
    \draw[blue, thick] (4.00,0) -- (5.85,-1.85);


    \filldraw[black] (-4.00,0) circle (0pt) node{\huge{$\times$}};
    \filldraw[black] (0.00,0) circle (0pt) node{\huge{$\times$}};
    \filldraw[black] (4.00,0) circle (0pt) node{\huge{$\times$}};


    \filldraw[black] (-6.3,0) circle (0pt)
        node[anchor=east]{\textcolor{mygreen}{\large{$F$}}};

    \filldraw[black] (-6.3,2) circle (0pt)
        node[anchor=east]{\textcolor{blue}{\large{$\Sigma$}}};

    \filldraw[black] (-6.3,-4) circle (0pt)
        node[anchor=east]{\textcolor{red}{\large{$M_i$}}};

    \filldraw[black] (0,0.2) circle (0pt) node[anchor=south]{\large{$\mu_{t,4}$}};

    \filldraw[black] (-4,0.2) circle (0pt) node[anchor=south]{\large{$\mu_{s}$}};

    \filldraw[black] (4,0.2) circle (0pt) node[anchor=south]{\large{$\mu_{s}$}};

\end{scope}
\begin{scope}[shift={(0,-7)}]

    \tikzstyle{every node}=[font=\scriptsize]
    \draw[thick] (-0.5,-1.5) rectangle (0.5,-0.5);
    \filldraw[black] (0,-1) circle (0pt) node{\Large{$8$}};
    \filldraw[black] (0,1.5) circle (0pt) node{\Large{$2$}};
    \draw[thick](0,1.5) circle (0.5);
    \draw[thick, ->-] (0,1) -- (0,-0.5) ;
       
    \draw[thick]  (-6,1) arc (-90:90:0.5);
    \draw[thick]  (6,1) arc (270:90:0.5);

    \draw[thick] (-5.5,1.5) -- (-0.5,1.5);
    \draw[thick] (0.5,1.5) -- (5.5,1.5);
    \node at (-4,1.5) {\large $\times$};
    \node at (4,1.5) {\large $\times$};

\node at (-6,1.5) {\Large $2$};
\node at (6,1.5) {\Large $2$};

\node[anchor=west] at (6,2) {\Large $\cdots$};
\node[anchor=west] at (6,1) {\Large $\cdots$};

\node[anchor=east] at (-6,2) {\Large $\cdots$};
\node[anchor=east] at (-6,1) {\Large $\cdots$};
\end{scope}

    \end{tikzpicture}
    \caption{The local geometry arising from the stacking of the local building blocks $\mu_{s}^{\varepsilon}\mu_{t,4}^{\varepsilon}\mu_s^{\varepsilon}$. The fibration of the section $\Sigma$ over the interval whose endpoints are given by the location of the $\mu_s$ domain walls gives rise to a 3-sphere over which an $A_1$ singularity is fibered. The $\mathbb{P}^1$'s associated to the curve classes $M_i$ collapse at the location of the $\mu_{t,4}$ domain wall, meaning the $A_{7}$ singularity touches the $A_1$ singularity at this point.}
    \label{fig: local mus4s dom wall}
\end{figure}




The compact 3-cycle coming from the fibration of the section $\Sigma$ over the interval between the two $\mu_s$ domain walls gives rise to a gauged $SU(2)$ node in the quiver. At the location of the $\mu_{t,4}$ domain wall, all of the matter curves collapse to zero volume, meaning the $A_7$ singularity touches the $A_1$ singularity of the collapsed fiber at this point, giving rise to the bifundamental matter seen in the quiver.

\paragraph{$\mu_s^{\varepsilon} \mu_{t,4}^{\varepsilon}\mu_s^{\varepsilon}\mu_{t,4}^{\varepsilon}\mu_s^{\varepsilon}$ Configuration.} Combining these local building blocks gives us the volume profile and quiver shown in Figure~\ref{fig: local mus4s4s dom wall}. There are two compact three-cycles over which the $A_1$ singularity coming from the collapsed curve class of the fiber is fibered. The $A_1$ singularity touches the matter curves $M_i$ at the locations where they collapse to zero volume. The second application of the $\mu_{t,4}$ map will interchange the curves $M_i$ and $F-M_i$, giving us the box graphs shown in Figure~\ref{Fig: box diagram n4 third region}. 

{As discussed in \cite{Braun:2023fqa}, one can argue for the existence of a cubic superpotential corresponding to the triangle loop of the quiver from the fact that the 3-cycles swept out by $M_i$ connect the three codimension-7 singular loci associated to the chirals involved in the interaction. This has the effect of identifying the $SU(8)$ of each $\mu_{t,4}^\varepsilon$ building block, so that we have a single $SU(8)$ in the final quiver.}

\begin{figure}[t]
    \centering
    \begin{tikzpicture}[scale=0.8]


    \filldraw[color=red, fill=red!20, thick] (-7.25,-2.125) ellipse (0.38 and 0.775);
    \draw[red, thin] (-7.25,-2.125) ellipse (0.38 and 0.14);
    \filldraw[color=red, fill=red!20, thick] (-4.75,-2.105) ellipse (0.38 and 0.79);
    \draw[red, thin] (-4.75,-2.125) ellipse (0.38 and 0.14);
    \filldraw[color=red, fill=red!20, thick] (-6,-1.5) ellipse (0.5 and 1.4);
    \draw[red, thin] (-6,-1.5) ellipse (0.5 and 0.2);
    \filldraw[color=red, fill=red!20, thick] (-1.25,-2.1) ellipse (0.38 and 0.79);
    \draw[red, thin] (-1.25,-2.125) ellipse (0.38 and 0.14);
    \filldraw[color=red, fill=red!20, thick] (1.25,-2.1) ellipse (0.38 and 0.79);
    \draw[red, thin] (1.25,-2.125) ellipse (0.38 and 0.14);
    \filldraw[color=red, fill=red!20, thick] (0,-1.5) ellipse (0.5 and 1.4);
    \draw[red, thin] (0,-1.5) ellipse (0.5 and 0.2);

    \filldraw[color=red, fill=red!20, thick] (4.75,-2.105) ellipse (0.38 and 0.79);
    \draw[red, thin] (4.75,-2.125) ellipse (0.38 and 0.14);
    \filldraw[color=red, fill=red!20, thick] (7.25,-2.125) ellipse (0.38 and 0.775);
    \draw[red, thin] (7.25,-2.125) ellipse (0.38 and 0.14);
    \filldraw[color=red, fill=red!20, thick] (6,-1.5) ellipse (0.5 and 1.4);
    \draw[red, thin] (6,-1.5) ellipse (0.5 and 0.2);


    \filldraw[color=blue, fill=blue!20, thick] (-7.25,0) ellipse (0.5 and 1.14);
    \draw[blue, thin] (-7.25,0) ellipse (0.5 and 0.2);

    \filldraw[color=blue, fill=blue!20, thick] (-4.75,0) ellipse (0.5 and 1.1);
    \draw[blue, thin] (-4.75,0) ellipse (0.5 and 0.2);

    \filldraw[color=blue, fill=blue!20, thick] (-1.25,0) ellipse (0.5 and 1.1);
    \draw[blue, thin] (-1.25,0) ellipse (0.5 and 0.2);

    \filldraw[color=blue, fill=blue!20, thick] (1.25,0) ellipse (0.5 and 1.1);
    \draw[blue, thin] (1.25,0) ellipse (0.5 and 0.2);

    \filldraw[color=blue, fill=blue!20, thick] (4.75,0) ellipse (0.5 and 1.1);
    \draw[blue, thin] (4.75,0) ellipse (0.5 and 0.2);

    \filldraw[color=blue, fill=blue!20, thick] (7.25,0) ellipse (0.5 and 1.14);
    \draw[blue, thin] (7.25,0) ellipse (0.5 and 0.2);

    \filldraw[color=blue, fill=blue!20, thick] (-3,0) ellipse (0.8 and 2.8);
    \draw[blue, thin] (-3,0) ellipse (0.8 and 0.4);

    \filldraw[color=blue, fill=blue!20, thick] (3,0) ellipse (0.8 and 2.8);
    \draw[blue, thin] (3,0) ellipse (0.8 and 0.4);

        \draw[thick] (-8.5,0) -- (8.5,0);
        \draw[color=mygreen, ultra thick] (-7.5,0) -- (7.5,0);
        \draw[color=red, thick] (-7.5,-2.9) -- (7.5,-2.9);
        \draw[color=blue, thick] (-7.5,1.5) -- (-6,0);
        \draw[color=blue, thick] (-7.5,-1.5) -- (-6,0);
        \draw[color=blue, thick] (-6,0) -- (-3,2.9);
        \draw[color=blue, thick] (-6,0) -- (-3,-2.9);
        \draw[color=blue, thick] (-3,2.9) -- (0,0);
        \draw[color=blue, thick] (-3,-2.9) -- (0,0);
        \draw[color=blue, thick] (0,0) -- (3,2.9);
        \draw[color=blue, thick] (0,0) -- (3,-2.9);
        \draw[color=blue, thick] (3,-2.9) -- (6,0);
        \draw[color=blue, thick] (3,2.9) -- (6,0);
        \draw[color=blue, thick] (6,0) -- (7.5,1.5);
        \draw[color=blue, thick] (6,0) -- (7.5,-1.5);

    \filldraw[black] (-8.5,0) circle (0pt)
        node[anchor=east]{\textcolor{mygreen}{\large{$F$}}};

    \filldraw[black] (-8.5,1.75) circle (0pt)
        node[anchor=east]{\textcolor{blue}{\large{$\Sigma$}}};

    \filldraw[black] (-8.5,-2.92) circle (0pt)
        node[anchor=east]{\textcolor{red}{\large{$M_i$}}};

    \filldraw[black] (-6.00,0.2) circle (0pt)
        node[anchor=south]{\large{$\mu_s$}};

    \filldraw[black] (-3,0.3) circle (0pt)
        node[anchor=south]{\large{$\mu_{t,4}$}};

    \filldraw[black] (0.00,0.2) circle (0pt)
        node[anchor=south]{\large{$\mu_s$}};

    \filldraw[black] (3,0.3) circle (0pt)
        node[anchor=south]{\large{$\mu_{t,4}$}};

    \filldraw[black] (6,0.2) circle (0pt)
        node[anchor=south]{\large{$\mu_s$}};

    \filldraw[black] (-6.00,0) circle (0pt) node{\huge{$\times$}};
    \filldraw[black] (-3.00,0) circle (0pt) node{\huge{$\times$}};
    \filldraw[black] (0.00,0) circle (0pt) node{\huge{$\times$}};
    \filldraw[black] (3.00,0) circle (0pt) node{\huge{$\times$}};
    \filldraw[black] (6.00,0) circle (0pt) node{\huge{$\times$}};

\begin{scope}[shift={(0,-6)}]

    \tikzstyle{every node}=[font=\scriptsize]
    \draw[thick] (-0.5,-1.5) rectangle (0.5,-0.5);
    \filldraw[black] (0,-1) circle (0pt) node{\Large{$8$}};
    \filldraw[black] (-3,1.5) circle (0pt) node{\Large{$2$}};
    \filldraw[black] (3,1.5) circle (0pt) node{\Large{$2$}};
    \draw[thick](3,1.5) circle (0.5);
    \draw[thick](-3,1.5) circle (0.5);
    \draw[thick, ->-] (-2.6,1.2) -- (-0.5,-0.7) ;
    \draw[thick, -<-] (2.6,1.2) -- (0.5,-0.7) ;
       
    \draw[thick]  (-7.25,1) arc (-90:90:0.5);
    \draw[thick]  (7.25,1) arc (270:90:0.5);

    \draw[thick] (-6.75,1.5) -- (-3.5,1.5);
    \draw[thick] (6.75,1.5) -- (3.5,1.5);
    \draw[thick] (-2.5,1.5) -- (2.5,1.5);
    \node at (-6,1.5) {\large $\times$};
    \node at (6,1.5) {\large $\times$};
    \node at (0,1.5) {\large $\times$};

\node at (-7.25,1.5) {\Large $2$};
\node at (7.25,1.5) {\Large $2$};

\node[anchor=west] at (7.25,2) {\Large $\cdots$};
\node[anchor=west] at (7.25,1) {\Large $\cdots$};

\node[anchor=east] at (-7.25,2) {\Large $\cdots$};
\node[anchor=east] at (-7.25,1) {\Large $\cdots$};
\end{scope}        
    \end{tikzpicture}
    \caption{The local geometry arising from the stacking of the $\mu_{s}^{\varepsilon}\mu_{t,4}^{\varepsilon}\mu_s^{\varepsilon}\mu_{t,4}^{\varepsilon}\mu_s^{\varepsilon}$ volume profiles. The fibration of the section $\Sigma$ over the intervals whose endpoints are given by the location of the $\mu_s$ domain walls gives rise to two 3-spheres over which an $A_1$ singularity is fibered. The $\mathbb{P}^1$'s associated to the curve classes $M_i$ collapse at the location of the $\mu_{t,4}$ domain walls, meaning the $A_{7}$ singularity touches the $A_1$ singularity at these points.}
    \label{fig: local mus4s4s dom wall}
\end{figure}
\begin{figure}[h]
\centering
\begin{tikzpicture}
\node[] (5dLyt) at (-3,-2) {\ytableausetup{centertableaux,boxsize=6pt}
\begin{ytableau}
*(blue) & *(blue) & *(blue) & *(blue) & *(blue) & *(blue) & *(blue) & *(blue) \\
\none[] & \none[] & \none[] & \none[] & \none[] & \none[] & *(yellow) & *(yellow) & *(yellow) & *(yellow) & *(yellow) & *(yellow) & *(yellow) & *(yellow) \\ \none[] \\
*(blue) & *(blue) & *(blue) & *(blue) & *(blue) & *(blue) & *(blue) & *(blue) \\
\none[] & \none[] & \none[] & \none[] & \none[] & \none[] & *(yellow) & *(yellow) & *(yellow) & *(yellow) & *(yellow) & *(yellow) & *(yellow) & *(yellow) \\
\end{ytableau}};
    \node[] (5dLyt) at (3,-2) {\ytableausetup{centertableaux,boxsize=6pt}
\begin{ytableau}
*(yellow) & *(yellow) & *(yellow) & *(yellow) & *(yellow) & *(yellow) & *(yellow) & *(yellow) \\
\none[] & \none[] & \none[] & \none[] & \none[] & \none[] & *(blue) & *(blue) & *(blue) & *(blue) & *(blue) & *(blue) & *(blue) & *(blue) \\ \none[] \\
*(yellow) & *(yellow) & *(yellow) & *(yellow) & *(yellow) & *(yellow) & *(yellow) & *(yellow) \\
\none[] & \none[] & \none[] & \none[] & \none[] & \none[] & *(blue) & *(blue) & *(blue) & *(blue) & *(blue) & *(blue) & *(blue) & *(blue) \\
\end{ytableau}};
\draw[black, thick] (0,-3) -- (0,-1);
\filldraw[black] (0,-3) circle (0pt) node[anchor=north]{$\mu_{t,4}$};
\node[] (5dLyt) at (9,-2) {\ytableausetup{centertableaux,boxsize=6pt}
\begin{ytableau}
*(blue) & *(blue) & *(blue) & *(blue) & *(blue) & *(blue) & *(blue) & *(blue) \\
\none[] & \none[] & \none[] & \none[] & \none[] & \none[] & *(yellow) & *(yellow) & *(yellow) & *(yellow) & *(yellow) & *(yellow) & *(yellow) & *(yellow) \\ \none[] \\
*(blue) & *(blue) & *(blue) & *(blue) & *(blue) & *(blue) & *(blue) & *(blue) \\
\none[] & \none[] & \none[] & \none[] & \none[] & \none[] & *(yellow) & *(yellow) & *(yellow) & *(yellow) & *(yellow) & *(yellow) & *(yellow) & *(yellow) \\
\end{ytableau}};
\draw[black, thick] (6,-3) -- (6,-1);
\filldraw[black] (6,-3) circle (0pt) node[anchor=north]{$\mu_{t,4}$};
\end{tikzpicture}
\caption{The box diagram for the configuration in Figure~\ref{fig: local mus4s4s dom wall}. Each application of $\mu_{t,4}$ inverts the coloring of the $\mathfrak{so}(16)$ weights.}
\label{Fig: box diagram n4 third region}
\end{figure}

\paragraph{$(\mu_{t,4}^{\varepsilon}\mu_s^{\varepsilon}\mu_{t,4}^{\varepsilon}\mu_s^{\varepsilon})^2$ Configuration.} The final configuration we need to understand before closing up our line over which the threefold is fibered is the volume profile associated to 
\begin{equation}
(\mu_{t,4}^{\varepsilon}\mu_s^{\varepsilon}\mu_{t,4}^{\varepsilon}\mu_s^{\varepsilon})^2 \, ,
\end{equation}
shown in Figure~\ref{fig: local s4s4s4s4 domain wall}. 
\begin{figure}
    \centering
\begin{tikzpicture}[scale=0.9]


    \filldraw[color=red, fill=red!20, thick] (-8,-1.02) ellipse (0.3 and 0.97);
    \draw[red, thin] (-8,-1) ellipse (0.3 and 0.14);

    \filldraw[color=red, fill=red!20, thick] (-4,-1.02) ellipse (0.3 and 0.97);
    \draw[red, thin] (-4,-1) ellipse (0.3 and 0.14);

    \filldraw[color=red, fill=red!20, thick] (0,-1.02) ellipse (0.3 and 0.97);
    \draw[red, thin] (0,-1) ellipse (0.3 and 0.14);

    \filldraw[color=red, fill=red!20, thick] (4,-1.02) ellipse (0.3 and 0.97);
    \draw[red, thin] (4,-1) ellipse (0.3 and 0.14);

    \filldraw[color=red, fill=red!20, thick] (-9,-1.5) ellipse (0.2 and 0.49);
    \draw[red, thin] (-9,-1.5) ellipse (0.2 and 0.1);

    \filldraw[color=red, fill=red!20, thick] (-7,-1.53) ellipse (0.2 and 0.47);
    \draw[red, thin] (-7,-1.5) ellipse (0.2 and 0.1);

    \filldraw[color=red, fill=red!20, thick] (-5,-1.53) ellipse (0.2 and 0.47);
    \draw[red, thin] (-5,-1.5) ellipse (0.2 and 0.1);

    \filldraw[color=red, fill=red!20, thick] (-3,-1.53) ellipse (0.2 and 0.47);
    \draw[red, thin] (-3,-1.5) ellipse (0.2 and 0.1);

    \filldraw[color=red, fill=red!20, thick] (-1,-1.53) ellipse (0.2 and 0.47);
    \draw[red, thin] (-1,-1.5) ellipse (0.2 and 0.1);

    \filldraw[color=red, fill=red!20, thick] (1,-1.53) ellipse (0.2 and 0.47);
    \draw[red, thin] (1,-1.5) ellipse (0.2 and 0.1);

    \filldraw[color=red, fill=red!20, thick] (3,-1.53) ellipse (0.2 and 0.47);
    \draw[red, thin] (3,-1.5) ellipse (0.2 and 0.1);

    \filldraw[color=red, fill=red!20, thick] (5,-1.53) ellipse (0.2 and 0.47);
    \draw[red, thin] (5,-1.5) ellipse (0.2 and 0.1);

    \filldraw[color=red, fill=red!20, thick] (7,-1.53) ellipse (0.2 and 0.47);
    \draw[red, thin] (7,-1.5) ellipse (0.2 and 0.1);


    \filldraw[color=blue, fill=blue!20, thick] (-9,0) ellipse (0.3 and 0.95);
    \draw[blue, thin] (-9,0) ellipse (0.3 and 0.14);

    \filldraw[color=blue, fill=blue!20, thick] (-7,0) ellipse (0.3 and 0.95);
    \draw[blue, thin] (-7,0) ellipse (0.3 and 0.14);

    \filldraw[color=blue, fill=blue!20, thick] (-5,0) ellipse (0.3 and 0.95);
    \draw[blue, thin] (-5,0) ellipse (0.3 and 0.14);

    \filldraw[color=blue, fill=blue!20, thick] (-6,0) ellipse (0.4 and 1.95);
    \draw[blue, thin] (-6,0) ellipse (0.4 and 0.2);

    \filldraw[color=blue, fill=blue!20, thick] (-3,0) ellipse (0.3 and 0.95);
    \draw[blue, thin] (-3,0) ellipse (0.3 and 0.14);

    \filldraw[color=blue, fill=blue!20, thick] (-2,0) ellipse (0.4 and 1.95);
    \draw[blue, thin] (-2,0) ellipse (0.4 and 0.2);

    \filldraw[color=blue, fill=blue!20, thick] (-1,0) ellipse (0.3 and 0.95);
    \draw[blue, thin] (-1,0) ellipse (0.3 and 0.14);

    \filldraw[color=blue, fill=blue!20, thick] (1,0) ellipse (0.3 and 0.95);
    \draw[blue, thin] (1,0) ellipse (0.3 and 0.14);

    \filldraw[color=blue, fill=blue!20, thick] (2,0) ellipse (0.4 and 1.95);
    \draw[blue, thin] (2,0) ellipse (0.4 and 0.2);

    \filldraw[color=blue, fill=blue!20, thick] (3,0) ellipse (0.3 and 0.95);
    \draw[blue, thin] (3,0) ellipse (0.3 and 0.14);

    \filldraw[color=blue, fill=blue!20, thick] (5,0) ellipse (0.3 and 0.95);
    \draw[blue, thin] (5,0) ellipse (0.3 and 0.14);

    \filldraw[color=blue, fill=blue!20, thick] (6,0) ellipse (0.4 and 1.95);
    \draw[blue, thin] (6,0) ellipse (0.4 and 0.2);

    \filldraw[color=blue, fill=blue!20, thick] (7,0) ellipse (0.3 and 1);
    \draw[blue, thin] (7,0) ellipse (0.3 and 0.14);


    \draw[color=red, thick] (-9,-2) -- (7,-2);

    \draw[thick] (-10,0) -- (8,0);
    \draw[color=mygreen, ultra thick] (-9,0) -- (7,0);
    \draw[color=blue, thick] (-9,-1) -- (-8,0);
    \draw[color=blue, thick] (-8,0) -- (-6,-2);
    \draw[color=blue, thick] (-6,-2) -- (-4,0);
    \draw[color=blue, thick] (-4,0) -- (-2,-2);
    \draw[color=blue, thick] (-2,-2) -- (0,0);
    \draw[color=blue, thick] (0,0) -- (2,-2);
    \draw[color=blue, thick] (2,-2) -- (4,0);
    \draw[color=blue, thick] (4,0) -- (6,-2);
    \draw[color=blue, thick] (6,-2) -- (7,-1);

    \draw[color=blue, thick] (-9,1) -- (-8,0);
    \draw[color=blue, thick] (-8,0) -- (-6,2);
    \draw[color=blue, thick] (-6,2) -- (-4,0);
    \draw[color=blue, thick] (-4,0) -- (-2,2);
    \draw[color=blue, thick] (-2,2) -- (0,0);
    \draw[color=blue, thick] (0,0) -- (2,2);
    \draw[color=blue, thick] (2,2) -- (4,0);
    \draw[color=blue, thick] (4,0) -- (6,2);
    \draw[color=blue, thick] (6,2) -- (7,1);

   \filldraw[black] (-10,0) circle (0pt)
        node[anchor=east]{\textcolor{mygreen}{\large{$F$}}};

    \filldraw[black] (-10,1.00) circle (0pt)
        node[anchor=east]{\textcolor{blue}{\large{$\Sigma$}}};

    \filldraw[black] (-10,-2) circle (0pt)
        node[anchor=east]{\textcolor{red}{\large{$M_i$}}};

    \filldraw[black] (-8.00,0.2) circle (0pt)
        node[anchor=south]{\large{$\mu_s$}};

    \filldraw[black] (-6,0.3) circle (0pt)
        node[anchor=south]{\large{$\mu_{t,4}$}};

    \filldraw[black] (-4.00,0.2) circle (0pt)
        node[anchor=south]{\large{$\mu_s$}};

    \filldraw[black] (-2,0.3) circle (0pt)
        node[anchor=south]{\large{$\mu_{t,4}$}};

    \filldraw[black] (0,0.2) circle (0pt)
        node[anchor=south]{\large{$\mu_s$}};

    \filldraw[black] (2,0.2) circle (0pt)
        node[anchor=south]{\large{$\mu_{t,4}$}};

    \filldraw[black] (4,0.2) circle (0pt)
        node[anchor=south]{\large{$\mu_s$}};

    \filldraw[black] (6,0.2) circle (0pt)
        node[anchor=south]{\large{$\mu_{t,4}$}};

    \filldraw[black] (-8.00,0) circle (0pt) node{\Large{$\times$}};
    \filldraw[black] (-6.00,0) circle (0pt) node{\Large{$\times$}};
    \filldraw[black] (-4.00,0) circle (0pt) node{\Large{$\times$}};
    \filldraw[black] (-2.00,0) circle (0pt) node{\Large{$\times$}};
    \filldraw[black] (0.00,0) circle (0pt) node{\Large{$\times$}};
    \filldraw[black] (2.00,0) circle (0pt) node{\Large{$\times$}};
    \filldraw[black] (4.00,0) circle (0pt) node{\Large{$\times$}};
    \filldraw[black] (6.00,0) circle (0pt) node{\Large{$\times$}};


\begin{scope}[shift={(0,-5)}]
   \tikzstyle{every node}=[font=\scriptsize]
    \draw[thick] (-2.5,-1.5) rectangle (-1.5,-0.5);
    \filldraw[black] (-2,-1) circle (0pt) node{\Large{$8$}};
    \filldraw[black] (-9,1.5) circle (0pt) node{\Large{$2$}};
    \filldraw[black] (-6,1.5) circle (0pt) node{\Large{$2$}};
    \filldraw[black] (-2,1.5) circle (0pt) node{\Large{$2$}};
    \filldraw[black] (2,1.5) circle (0pt) node{\Large{$2$}};
    \filldraw[black] (6,1.5) circle (0pt) node{\Large{$2$}};
    \draw[thick](-6,1.5) circle (0.5);
    \draw[thick](-2,1.5) circle (0.5);
    \draw[thick](2,1.5) circle (0.5);
    \draw[thick, ->-] (-5.6,1.2) -- (-2.5,-0.7) ;
    \draw[thick, -<-] (-2,1) -- (-2,-0.5) ;
    \draw[thick, ->-] (1.6,1.2) -- (-1.5,-0.7) ;
    \draw[thick, -<-] (5.6,1.2) -- (-1.5,-1) ;
       
    \draw[thick]  (-9,1) arc (-90:90:0.5);
    \draw[thick]  (6,1) arc (270:90:0.5);

    \draw[thick] (-8.5,1.5) -- (-6.5,1.5);
    \draw[thick] (-5.5,1.5) -- (-2.5,1.5);
    \draw[thick] (-1.5,1.5) -- (1.5,1.5);
    \draw[thick] (2.5,1.5) -- (5.5,1.5);
    \node at (-8,1.5) {\large $\times$};
    \node at (-4,1.5) {\large $\times$};
    \node at (0,1.5) {\large $\times$};
    \node at (4,1.5) {\large $\times$};


\node[anchor=west] at (6,2) {\Large $\cdots$};
\node[anchor=west] at (6,1) {\Large $\cdots$};

\node[anchor=east] at (-9,2) {\Large $\cdots$};
\node[anchor=east] at (-9,1) {\Large $\cdots$};
    
\end{scope}

\end{tikzpicture}
    \caption{The local geometry arising from the concatenation 
    $(\mu_{t,4}^{\varepsilon}\mu_s^{\varepsilon}\mu_{t,4}^{\varepsilon}\mu_s^{\varepsilon})^2$.
    The fibration of the section $\Sigma$ over three compact intervals, whose endpoints are given by the location of the $\mu_s$ domain walls, gives rise to three 3-spheres over which an $A_1$ singularity is fibered. The $\mathbb{P}^1$'s associated to the curve classes $M_{i}$ collapse at the locations of the $\mu_{t,4}$ domain walls, meaning the $A_{7}$ singularity touches the $A_1$ singularity at these four points. }
    \label{fig: local s4s4s4s4 domain wall}
\end{figure}
The box graph description for this is shown in Figure~\ref{Fig: box diagram n4 fifth region}. 
\begin{figure}
\centering
\begin{tikzpicture}
\node[] (5dLyt) at (-3,-2) {\ytableausetup{centertableaux,boxsize=5pt}
\begin{ytableau}
*(blue) & *(blue) & *(blue) & *(blue) & *(blue) & *(blue) & *(blue) & *(blue) \\
\none[] & \none[] & \none[] & \none[] & \none[] & \none[] & *(yellow) & *(yellow) & *(yellow) & *(yellow) & *(yellow) & *(yellow) & *(yellow) & *(yellow) \\ \none[] \\
*(blue) & *(blue) & *(blue) & *(blue) & *(blue) & *(blue) & *(blue) & *(blue) \\
\none[] & \none[] & \none[] & \none[] & \none[] & \none[] & *(yellow) & *(yellow) & *(yellow) & *(yellow) & *(yellow) & *(yellow) & *(yellow) & *(yellow) \\
\end{ytableau}};
    \node[] (5dLyt) at (0,-2) {\ytableausetup{centertableaux,boxsize=5pt}
\begin{ytableau}
*(yellow) & *(yellow) & *(yellow) & *(yellow) & *(yellow) & *(yellow) & *(yellow) & *(yellow) \\
\none[] & \none[] & \none[] & \none[] & \none[] & \none[] & *(blue) & *(blue) & *(blue) & *(blue) & *(blue) & *(blue) & *(blue) & *(blue) \\ \none[] \\
*(yellow) & *(yellow) & *(yellow) & *(yellow) & *(yellow) & *(yellow) & *(yellow) & *(yellow) \\
\none[] & \none[] & \none[] & \none[] & \none[] & \none[] & *(blue) & *(blue) & *(blue) & *(blue) & *(blue) & *(blue) & *(blue) & *(blue) \\
\end{ytableau}};
\draw[black, thick] (-1.5,-3) -- (-1.5,-1);
\filldraw[black] (-1.5,-3) circle (0pt) node[anchor=north]{$\mu_{t,4}$};
\node[] (5dLyt) at (3,-2) {\ytableausetup{centertableaux,boxsize=5pt}
\begin{ytableau}
*(blue) & *(blue) & *(blue) & *(blue) & *(blue) & *(blue) & *(blue) & *(blue) \\
\none[] & \none[] & \none[] & \none[] & \none[] & \none[] & *(yellow) & *(yellow) & *(yellow) & *(yellow) & *(yellow) & *(yellow) & *(yellow) & *(yellow) \\ \none[] \\
*(blue) & *(blue) & *(blue) & *(blue) & *(blue) & *(blue) & *(blue) & *(blue) \\
\none[] & \none[] & \none[] & \none[] & \none[] & \none[] & *(yellow) & *(yellow) & *(yellow) & *(yellow) & *(yellow) & *(yellow) & *(yellow) & *(yellow) \\
\end{ytableau}};
    \node[] (5dLyt) at (6,-2) {\ytableausetup{centertableaux,boxsize=5pt}
\begin{ytableau}
*(yellow) & *(yellow) & *(yellow) & *(yellow) & *(yellow) & *(yellow) & *(yellow) & *(yellow) \\
\none[] & \none[] & \none[] & \none[] & \none[] & \none[] & *(blue) & *(blue) & *(blue) & *(blue) & *(blue) & *(blue) & *(blue) & *(blue) \\ \none[] \\
*(yellow) & *(yellow) & *(yellow) & *(yellow) & *(yellow) & *(yellow) & *(yellow) & *(yellow) \\
\none[] & \none[] & \none[] & \none[] & \none[] & \none[] & *(blue) & *(blue) & *(blue) & *(blue) & *(blue) & *(blue) & *(blue) & *(blue) \\
\end{ytableau}};
\node[] (5dLyt) at (9,-2) {\ytableausetup{centertableaux,boxsize=5pt}
\begin{ytableau}
*(blue) & *(blue) & *(blue) & *(blue) & *(blue) & *(blue) & *(blue) & *(blue) \\
\none[] & \none[] & \none[] & \none[] & \none[] & \none[] & *(yellow) & *(yellow) & *(yellow) & *(yellow) & *(yellow) & *(yellow) & *(yellow) & *(yellow) \\ \none[] \\
*(blue) & *(blue) & *(blue) & *(blue) & *(blue) & *(blue) & *(blue) & *(blue) \\
\none[] & \none[] & \none[] & \none[] & \none[] & \none[] & *(yellow) & *(yellow) & *(yellow) & *(yellow) & *(yellow) & *(yellow) & *(yellow) & *(yellow) \\
\end{ytableau}};
\draw[black, thick] (1.5,-3) -- (1.5,-1);
\filldraw[black] (1.5,-3) circle (0pt) node[anchor=north]{$\mu_{t,4}$};
\draw[black, thick] (4.5,-3) -- (4.5,-1);
\filldraw[black] (4.5,-3) circle (0pt) node[anchor=north]{$\mu_{t,4}$};
\draw[black, thick] (7.5,-3) -- (7.5,-1);
\filldraw[black] (7.5,-3) circle (0pt) node[anchor=north]{$\mu_{t,4}$};
\end{tikzpicture}
\caption{The box diagram for the configuration in Figure~\ref{fig: local s4s4s4s4 domain wall}. 
}
\label{Fig: box diagram n4 fifth region}
\end{figure}

\paragraph{The $(\mu_{t,4}^{\varepsilon}\mu_s^{\varepsilon}\mu_{t,4}^{\varepsilon}\mu_s^{\varepsilon})^2$ Flux Torus.}
Finally, let us turn the flux tube with domain wall configuration $(\mu_{t,4}^{\varepsilon}\mu_s^{\varepsilon}\mu_{t,4}^{\varepsilon}\mu_s^{\varepsilon})^2$ 
into a flux torus. As this configuration realizes the flux 
$(-2^8) = N(4\alpha_1)$ we denote it by
\begin{equation}
\mathcal{D}[4 \alpha_1]:= (\mu_{t,4}^{\varepsilon}\mu_s^{\varepsilon}\mu_{t,4}^{\varepsilon}\mu_s^{\varepsilon})^2\, .
\end{equation}
In order to glue the two ends of the flux tube we hence need to identify them by
\begin{equation}
    \mathcal{M}[4 \alpha_1] := (\mu_{t,4}\mu_{s}\mu_{t,4}\mu_s)^{-2}    =
    (\mu_{s}\mu_{t,4}\mu_s\mu_{t,4})^2 \, .
\end{equation}
where we have used that $\mu_s^2 = \mu_{t,4}^2=1$. We can work out this map to be 
\be
\mathcal{M}[4 \alpha_1]:  \qquad 
\left\{\begin{aligned}
F &\rightarrow F+ 10 E  -4 \alpha_1 \\
\Sigma &\rightarrow \Sigma +6E - 4 \alpha_1 \\
M_i &\rightarrow M_i +4 E -2\alpha_1 \, , \qquad i=1\cdots 8 \, .
\end{aligned}\right. \, .
\ee 
This map can be expressed as 
\begin{equation} \label{eq: Action of M[-4a]}
\mathcal{M}[4 \alpha_1]: \Gamma \rightarrow \Gamma + 2(\alpha_1 \cdot \Gamma ) E - 2 (E \cdot \Gamma )\alpha_1 + 4 (E \cdot \Gamma) E
\end{equation}
for all $\Gamma \in I_{1,9}$. Here the inner form used is just the inner form on $I_{1,9}$. Comparing with \eqref{eq: translation by sigma alpha 1}, we see that 
$\mathcal{M}[4 \alpha_1] = \tau_{-\alpha_1}^2$, i.e. 
the map 
$\mathcal{M}[4 \alpha_1]$ is induced by an automorphism of 
$X_{S}^f$ associated with two translations along 
$-\alpha_1$. We can hence close the geometry to 
$\mathcal{T}[4\alpha_1]$ by this automorphism and have found the geometric analogue of the flux torus.

From the field theory perspective, the result of closing the volume profile in Figure~\ref{fig: local s4s4s4s4 domain wall} is the volume profile and quiver of the flux torus, shown in Figure~\ref{fig: n4 volumes and quiver closed loop}. 
In closing up the profile we have used an automorphism of the geometry, which must in particular identify the $A_1$ singularities associated with the two open $SU(2)$ nodes, which hence become identified and give rise to a single gauge $SU(2)$ node. Now that the geometry has been closed the associated 4d quiver is physically meaningful. 
\begin{figure}
    \centering
\begin{tikzpicture}[scale=0.7]


    \filldraw[color=blue, fill=blue!20, thick] (0,-3.6) ellipse (0.6 and 1.28);
    \draw[blue, thin] (0,-3.8) ellipse (0.58 and 0.2);

    \begin{scope}[shift={(-1.6,-3.49)}, rotate=-10]
        \filldraw[color=blue, fill=blue!20, thick] (0,0) ellipse (0.5 and 1.12);
        \draw[blue, thin] (0,-0.2) ellipse (0.48 and 0.17);
    \end{scope}

    \begin{scope}[shift={(1.6,-3.49)}, rotate=10]
        \filldraw[color=blue, fill=blue!20, thick] (0,0) ellipse (0.5 and 1.12);
        \draw[blue, thin] (0,-0.2) ellipse (0.48 and 0.17);
    \end{scope}

    \begin{scope}[shift={(-2.8,-3.28)}, rotate=-18]
        \filldraw[color=blue, fill=blue!20, thick] (0,0) ellipse (0.3 and 0.79);
        \draw[blue, thin] (0,-0.12) ellipse (0.3 and 0.1);
    \end{scope}

    \begin{scope}[shift={(2.8,-3.28)}, rotate=18]
        \filldraw[color=blue, fill=blue!20, thick] (0,0) ellipse (0.3 and 0.79);
        \draw[blue, thin] (0,-0.12) ellipse (0.3 and 0.1);
    \end{scope}


    \begin{scope}[shift={(0,3.6)}, rotate=180]
    \filldraw[color=blue, fill=blue!20, thick] (0,0) ellipse (0.6 and 1.28);
    \draw[blue, thin] (0,-0.2) ellipse (0.58 and 0.2);
   \end{scope}

    \begin{scope}[shift={(1.6,3.49)}, rotate=170]
        \filldraw[color=blue, fill=blue!20, thick] (0,0) ellipse (0.5 and 1.12);
        \draw[blue, thin] (0,-0.2) ellipse (0.48 and 0.17);
    \end{scope}

    \begin{scope}[shift={(-1.6,3.49)}, rotate=190]
        \filldraw[color=blue, fill=blue!20, thick] (0,0) ellipse (0.5 and 1.12);
        \draw[blue, thin] (0,-0.2) ellipse (0.48 and 0.17);
    \end{scope}

    \begin{scope}[shift={(2.8,3.28)}, rotate=162]
        \filldraw[color=blue, fill=blue!20, thick] (0,0) ellipse (0.3 and 0.79);
        \draw[blue, thin] (0,-0.12) ellipse (0.3 and 0.1);
    \end{scope}

    \begin{scope}[shift={(-2.8,3.28)}, rotate=198]
        \filldraw[color=blue, fill=blue!20, thick] (0,0) ellipse (0.3 and 0.79);
        \draw[blue, thin] (0,-0.12) ellipse (0.3 and 0.1);
    \end{scope}


    \begin{scope}[shift={(-5.87,0)}, rotate=90]
        \filldraw[color=blue, fill=blue!20, thick] (0,0) ellipse (0.3 and 0.74);
        \draw[blue, thin] (0,0.02) ellipse (0.29 and 0.1);
    \end{scope}

    \begin{scope}[shift={(-5.72,-1.02)}, rotate=112]
        \filldraw[color=blue, fill=blue!20, thick] (0,0) ellipse (0.27 and 0.67);
        \draw[blue, thin] (0,-0.02) ellipse (0.26 and 0.09);
    \end{scope}

    \begin{scope}[shift={(-5.72,1.02)}, rotate=68]
        \filldraw[color=blue, fill=blue!20, thick] (0,0) ellipse (0.27 and 0.67);
        \draw[blue, thin] (0,-0.02) ellipse (0.26 and 0.09);
    \end{scope}

    \begin{scope}[shift={(-5.3,-1.9)}, rotate=130]
        \filldraw[color=blue, fill=blue!20, thick] (0,0) ellipse (0.22 and 0.49);
        \draw[blue, thin] (0,-0.1) ellipse (0.22 and 0.08);
    \end{scope}

   \begin{scope}[shift={(-5.3,1.9)}, rotate=50]
        \filldraw[color=blue, fill=blue!20, thick] (0,0) ellipse (0.22 and 0.49);
        \draw[blue, thin] (0,-0.1) ellipse (0.22 and 0.08);
    \end{scope}


    \begin{scope}[shift={(5.87,0)}, rotate=-90]
        \filldraw[color=blue, fill=blue!20, thick] (0,0) ellipse (0.3 and 0.74);
        \draw[blue, thin] (0,0.02) ellipse (0.29 and 0.1);
    \end{scope}

    \begin{scope}[shift={(5.72,-1.02)}, rotate=-112]
        \filldraw[color=blue, fill=blue!20, thick] (0,0) ellipse (0.27 and 0.67);
        \draw[blue, thin] (0,-0.02) ellipse (0.26 and 0.09);
    \end{scope}

    \begin{scope}[shift={(5.72,1.02)}, rotate=-68]
        \filldraw[color=blue, fill=blue!20, thick] (0,0) ellipse (0.27 and 0.67);
        \draw[blue, thin] (0,-0.02) ellipse (0.26 and 0.09);
    \end{scope}

    \begin{scope}[shift={(5.3,-1.9)}, rotate=-130]
        \filldraw[color=blue, fill=blue!20, thick] (0,0) ellipse (0.22 and 0.49);
        \draw[blue, thin] (0,-0.1) ellipse (0.22 and 0.08);
    \end{scope}

   \begin{scope}[shift={(5.3,1.9)}, rotate=-50]
        \filldraw[color=blue, fill=blue!20, thick] (0,0) ellipse (0.22 and 0.49);
        \draw[blue, thin] (0,-0.1) ellipse (0.22 and 0.08);
    \end{scope}


    \begin{scope}[shift={(-4.31,-3.5)}, rotate=-30]
        \filldraw[color=red, fill=red!20, thick] (0,0) ellipse (0.25 and 0.56);
        \draw[red, thin] (0,0.02) ellipse (0.25 and 0.08);
    \end{scope}

    \begin{scope}[shift={(-5.9,-2.45)}, rotate=-50]
        \filldraw[color=red, fill=red!20, thick] (0,0) ellipse (0.14 and 0.31);
        \draw[red, thin] (0,-0.02) ellipse (0.14 and 0.04);
    \end{scope}

    \begin{scope}[shift={(-6.57,-1.34)}, rotate=-68]
        \filldraw[color=red, fill=red!20, thick] (0,0) ellipse (0.14 and 0.2);
        \draw[red, thin] (0,-0.02) ellipse (0.14 and 0.04);
    \end{scope}

    \begin{scope}[shift={(-3.2,-4.34)}, rotate=-20]
        \filldraw[color=red, fill=red!20, thick] (0,0) ellipse (0.2 and 0.33);
        \draw[red, thin] (0,-0.02) ellipse (0.2 and 0.07);
    \end{scope}

    \begin{scope}[shift={(-1.85,-4.83)}, rotate=-10]
        \filldraw[color=red, fill=red!20, thick] (0,0) ellipse (0.15 and 0.22);
        \draw[red, thin] (0,-0.02) ellipse (0.15 and 0.05);
    \end{scope}


    \begin{scope}[shift={(-4.31,3.5)}, rotate=30]
        \filldraw[color=red, fill=red!20, thick] (0,0) ellipse (0.25 and 0.56);
        \draw[red, thin] (0,0.02) ellipse (0.25 and 0.08);
    \end{scope}

    \begin{scope}[shift={(-5.9,2.45)}, rotate=50]
        \filldraw[color=red, fill=red!20, thick] (0,0) ellipse (0.14 and 0.31);
        \draw[red, thin] (0,-0.02) ellipse (0.14 and 0.04);
    \end{scope}

    \begin{scope}[shift={(-6.57,1.34)}, rotate=68]
        \filldraw[color=red, fill=red!20, thick] (0,0) ellipse (0.14 and 0.2);
        \draw[red, thin] (0,-0.02) ellipse (0.14 and 0.04);
    \end{scope}

    \begin{scope}[shift={(-3.2,4.34)}, rotate=20]
        \filldraw[color=red, fill=red!20, thick] (0,0) ellipse (0.2 and 0.33);
        \draw[red, thin] (0,-0.02) ellipse (0.2 and 0.07);
    \end{scope}

    \begin{scope}[shift={(-1.85,4.83)}, rotate=10]
        \filldraw[color=red, fill=red!20, thick] (0,0) ellipse (0.15 and 0.22);
        \draw[red, thin] (0,-0.02) ellipse (0.15 and 0.05);
    \end{scope}


    \begin{scope}[shift={(4.31,3.5)}, rotate=-30]
        \filldraw[color=red, fill=red!20, thick] (0,0) ellipse (0.25 and 0.56);
        \draw[red, thin] (0,0.02) ellipse (0.25 and 0.08);
    \end{scope}

    \begin{scope}[shift={(5.9,2.45)}, rotate=-50]
        \filldraw[color=red, fill=red!20, thick] (0,0) ellipse (0.14 and 0.31);
        \draw[red, thin] (0,-0.02) ellipse (0.14 and 0.04);
    \end{scope}

    \begin{scope}[shift={(6.57,1.34)}, rotate=-68]
        \filldraw[color=red, fill=red!20, thick] (0,0) ellipse (0.14 and 0.2);
        \draw[red, thin] (0,-0.02) ellipse (0.14 and 0.04);
    \end{scope}

    \begin{scope}[shift={(3.2,4.34)}, rotate=-20]
        \filldraw[color=red, fill=red!20, thick] (0,0) ellipse (0.2 and 0.33);
        \draw[red, thin] (0,-0.02) ellipse (0.2 and 0.07);
    \end{scope}

    \begin{scope}[shift={(1.85,4.83)}, rotate=-10]
        \filldraw[color=red, fill=red!20, thick] (0,0) ellipse (0.15 and 0.22);
        \draw[red, thin] (0,-0.02) ellipse (0.15 and 0.05);
    \end{scope}


    \begin{scope}[shift={(4.31,-3.5)}, rotate=30]
        \filldraw[color=red, fill=red!20, thick] (0,0) ellipse (0.25 and 0.56);
        \draw[red, thin] (0,0.02) ellipse (0.25 and 0.08);
    \end{scope}

    \begin{scope}[shift={(5.9,-2.45)}, rotate=50]
        \filldraw[color=red, fill=red!20, thick] (0,0) ellipse (0.14 and 0.31);
        \draw[red, thin] (0,-0.02) ellipse (0.14 and 0.04);
    \end{scope}

    \begin{scope}[shift={(6.57,-1.34)}, rotate=68]
        \filldraw[color=red, fill=red!20, thick] (0,0) ellipse (0.14 and 0.2);
        \draw[red, thin] (0,-0.02) ellipse (0.14 and 0.04);
    \end{scope}

    \begin{scope}[shift={(3.2,-4.34)}, rotate=20]
        \filldraw[color=red, fill=red!20, thick] (0,0) ellipse (0.2 and 0.33);
        \draw[red, thin] (0,-0.02) ellipse (0.2 and 0.07);
    \end{scope}

    \begin{scope}[shift={(1.85,-4.83)}, rotate=10]
        \filldraw[color=red, fill=red!20, thick] (0,0) ellipse (0.15 and 0.22);
        \draw[red, thin] (0,-0.02) ellipse (0.15 and 0.05);
    \end{scope}


    \draw[thick] (0,0) ellipse (6 and 4);
    \draw[color=mygreen, ultra thick] (0,0) ellipse (6 and 4);

    \draw[color=red, thick] (6.63,0) .. controls (8,3.1) and (2,6) .. (0,4.88);
    \draw[color=red, thick] (-6.63,0) .. controls (-8,-3.1) and (-2,-6) .. (0,-4.88);
    \draw[color=red, thick] (-6.63,0) .. controls (-8,3.1) and (-2,6) .. (0,4.88);
    \draw[color=red, thick] (6.63,0) .. controls (8,-3.1) and (2,-6) .. (0,-4.88);

    \draw[color=blue, thick] (-4,-2.98) .. controls (-2.8,-2.1) and (2.8,-2.1) .. (4,-2.98);
    \draw[color=blue, thick] (-4,-2.98) .. controls (-2.7,-5.5) and (2.7,-5.5) .. (4,-2.98);

    \draw[color=blue, thick] (-4,2.98) .. controls (-2.8,5.5) and (2.8,5.5) .. (4,2.98);
    \draw[color=blue, thick] (-4,2.98) .. controls (-2.7,2.1) and (2.7,2.1) .. (4,2.98);

    \draw[color=blue, thick] (-4,2.98) .. controls (-7.5,2.5) and (-7.5,-2.5) .. (-4,-2.98);
    \draw[color=blue, thick] (-4,2.98) .. controls (-5.5,2.2) and (-5.5,-2.2) .. (-4,-2.98);

    \draw[color=blue, thick] (4,2.98) .. controls (7.5,2.5) and (7.5,-2.5) .. (4,-2.98);
    \draw[color=blue, thick] (4,2.98) .. controls (5.5,2.2) and (5.5,-2.2) .. (4,-2.98);

    \node at (6,0) {\large $\times$};    
    \node at (0,-4) {\large $\times$};  
    \node at (-6,0) {\large $\times$};    
    \node at (0,4) {\large $\times$};  
    \node at (-4,2.98) {\large $\times$};    
    \node at (4,2.98) {\large $\times$};  
    \node at (-4,-2.98) {\large $\times$};    
    \node at (4,-2.98) {\large $\times$}; 

    \filldraw[black] (0,-5.9) circle (0pt)
        node[anchor=south]{\large{$\mu_{t,4}$}};

    \filldraw[black] (-6.8,0) circle (0pt)
        node[anchor=east]{\large{$\mu_{t,4}$}};

    \filldraw[black] (0,5.9) circle (0pt)
        node[anchor=north]{\large{$\mu_{t,4}$}};

    \filldraw[black] (6.8,0) circle (0pt)
        node[anchor=west]{\large{$\mu_{t,4}$}};

    \filldraw[black] (-5,4.3) circle (0pt)
        node{\large{$\mu_s$}};
    \filldraw[black] (5,4.3) circle (0pt)
        node{\large{$\mu_s$}};
    \filldraw[black] (-5,-4.3) circle (0pt)
        node{\large{$\mu_s$}};
    \filldraw[black] (5,-4.3) circle (0pt)
        node{\large{$\mu_s$}};

\begin{scope}[shift={(-5.8,-2.35)}, rotate=310,scale=0.57]
         \draw[ultra thick] (0,2) arc(90:270:-0.3cm and 1.4cm);
         \draw[ultra thick, style=dashed] (0,2) arc(90:270:0.3cm and 1.4cm);
\end{scope}

\filldraw[black] (-9.25,-3) circle (0pt) node[anchor=north]{\large{$(\mu_{s}\mu_{t,4}\mu_s \mu_{t,4})^2$}};
\draw[thick, -<_] (-6.75,-2.75) arc (-270:20:0.4);


\begin{scope}[shift={(-4.5,-8)}, scale=1.5]

 \tikzstyle{every node}=[font=\scriptsize]
       \node[draw, circle,thick] (p1) at (0,-2.2) {\fontsize{12pt}{12pt}\selectfont $2$};
       \node[draw, circle,thick] (p2) at (3,0) {\fontsize{12pt}{12pt}\selectfont $2$};
       \node[draw, circle,thick] (p3) at (6,-2.2) {\fontsize{12pt}{12pt}\selectfont $2$};
       \node[draw, rectangle,thick] (p4) at (3,-2.2) {\fontsize{12pt}{12pt}\selectfont $8$};
       \node[draw, circle,thick] (p5) at (3,-4.4) {\fontsize{12pt}{12pt}\selectfont $2$};
       \draw[-,thick] (p1) to  node[rotate=45]{\large $\times$}    (p2);
       \draw[-,thick] (p2) to  node[rotate=45]{\large $\times$}    (p3);
       \draw[-,thick] (p3) to  node[rotate=45]{\large $\times$}    (p5);
       \draw[-,thick] (p5) to  node[rotate=45]{\large $\times$}    (p1);
       \draw[->-,thick] (p1) to      (p4);
       \draw[->-,thick] (p3) to      (p4);
       \draw[-<-,thick] (p2) to      (p4);
       \draw[-<-,thick] (p5) to      (p4);
    
\end{scope}

\end{tikzpicture}
    \caption{The geometry of the $\mathcal{D}[4 \alpha_1] = (\mu_{t,4}^{\varepsilon}\mu_s^{\varepsilon} \mu_{t,4}^{\varepsilon} \mu_s^{\varepsilon})^2 $ flux torus, and associated 4d $\cN=1$ quiver of flux $\mathcal{F}=(-2^{8})$. There are four blue compact 3-spheres, over which sits an $A_1$ singularity. The red matter curves $M_i$ touch each compact 3-sphere exactly once, giving rise to chirals charged under both the $SU(2)$ gauge and $SU(8)$ flavor symmetry, as shown in the quiver.}
    \label{fig: n4 volumes and quiver closed loop}
\end{figure}

\paragraph{Enhanced Flavor Symmetry and Monodromy Map.}

The lattice $\Lambda_{E_8}$, which encodes the UV flavor symmetry of the 5D KK theory, is acted on non-trivially by the map 
$\mathcal{M}[4\alpha_1]$, which means that elements of $\Lambda_{E_8}$ encounter a non-trivial monodromy along the circle. The unbroken flavor symmetry is hence found by working out the sublattice
$\Lambda_{n=4} \subset \Lambda_{E_8}$ which is invariant under 
$\mathcal{M}[4\alpha_1]$. We can work out that the action on the roots of $E_8$ is 
\be\label{eq:Eshiftn4}
\mathcal{M}[4 \alpha_1]:  \qquad 
\left\{\begin{aligned}
\alpha_1 &\rightarrow  \alpha_1-4 E \\
\alpha_3 &\rightarrow  \alpha_3+2 E \\
\alpha_i &\rightarrow \alpha_i\, , \qquad \qquad i=2,4,5\cdots 8 \, ,
\end{aligned}\right. 
\ee
so that $\Lambda_{n=4} = \alpha_1^\perp \subset \Lambda_{E_8}$. This is an even, negative definite lattice of discriminant $2$ which determines it to be $\Lambda_{n=4} = -E_7$. Alternatively, one can work out that the invariant sublattice of $\Lambda_{E_8}$ is spanned by the simple roots
\be \label{eq: roots of invariant E7}
    \Lambda_{n=4} =\langle  \alpha_1+\alpha_2+2\alpha_3+2\alpha_4+\alpha_5, \alpha_2, \alpha_4, \alpha_5, \alpha_6, \alpha_7, \alpha_8 \rangle_{\mathbb{Z}} \, ,
\ee
which have (minus) the Cartan matrix of $E_7$ as their inner form. This is exactly the enhanced flavor symmetry~\eqref{eq:preserved flavor symms} appropriate for the quiver theory constructed in Figure~\ref{fig: n4 volumes and quiver closed loop}. 

As alluded in the Introduction and in Section \ref{sec: Field Theory sec2}, the flux tubes possess a $U(1)$ symmetry descending from the KK symmetry, which is broken to a finite subgroup when closing the tube to a torus due to anomaly and superpotential constraints \cite{Hwang:2021xyw,Sabag:2022hyw}. This was understood field-theoretically in \cite{Sabag:2022hyw} from the fact that the masses of the 5d hypers get shifted by integer multiples of $m_{\text{KK}}$ as we go around the circle. Remembering that $\text{Vol}(E)=m_{\text{KK}}$, we see that such a shift observed in field theory is also realized in the geometry as the map \eqref{eq:Eshiftn4}. In Appendix~\ref{app:KKshift} we explain more in details this agreement between field theory and geometry, also for the cases corresponding to other values of $n$ that we will discuss next.

As a variant of this construction, we may realize fluxes of the form
$2 \ell \alpha_1$ corresponding to $(-\ell^8)$ as 
$\mathcal{D}[2\ell \alpha_1] = (\mu_{t,4}^{\varepsilon}\mu_s^{\varepsilon}\mu_{t,4}^{\varepsilon}\mu_s^{\varepsilon})^\ell$, and find that 
$\mathcal{M}[2\ell \alpha_1] =  (\mu_{s}\mu_{t,4}\mu_s\mu_{t,4})^\ell$
\begin{equation}
    \mathcal{M}[2\ell  \alpha_1]: \Gamma \rightarrow \Gamma + \ell(\alpha_1 \cdot \Gamma ) E - \ell (E \cdot \Gamma )\alpha_1 + \ell^2 (E \cdot \Gamma) E
\end{equation}
so that 
$\mathcal{M}[2\ell  \alpha_1] = \tau_{-\alpha_1}^\ell$ and we can glue with the $\ell$'th power of the automorphism 
$\tau_{-\alpha_1}$ to find the geometry of
$\mathcal{T}[2\ell \alpha_1]$.

The action on $\Lambda_{E_8}$ is
\be 
\mathcal{M}[2\ell  \alpha_1] :  \qquad 
\left\{\begin{aligned}
\alpha_1 &\rightarrow \alpha_1 - 2 \ell E\\
 \alpha_3 &\rightarrow \alpha_3+ \ell E \\
\alpha_i &\rightarrow \alpha_i \qquad \qquad i=2,4,5\cdots 8 \, ,
\end{aligned}\right.
\ee
The preserved $E_7$ flavor symmetry is again present in these examples, as expedected from field theory. Moreover, we see again the shift by integer multiples of $E$, compatible with the field theory expectations.

{Note that for the case $n=4$ considered here, we can even consider the case $\ell=1$ and construct a quiver realized by $\mathcal{D}[2 \alpha_1]= \mu_{t,4}^{\varepsilon}\mu_s^{\varepsilon}\mu_{t,4}^{\varepsilon}\mu_s^{\varepsilon}$. Indeed, we can still close up this configuration by the monodromy map $\mathcal{M}[2 \alpha_1] = \tau_{-\alpha_1}$ corresponding by a single translation by $-\alpha_1$. This is compatible with the fact that the associated flux $(1^8)$ is still an $E_8$ flux that breaks it to $E_7\times U(1)$ and that is not a center flux, namely it corresponds to a unit flux for the $U(1)$ and no flux for the center of $E_7$. This will not be the case for other values of $n$ that we will consider next, since fluxes of the form $(-1^{2n},0^{8-2n})$ are center fluxes which we do not address in this work. We will discuss more general configurations related to non-center fluxes in Section \ref{sect:elemfluxtubes}.}

\subsection{Stacking Domain Walls: Flux $(-2^{2n},0^{8-2n})$} \label{sec: flux 2^2n}

\paragraph{$\mu_s^{\varepsilon} \mu_{t,n}^{\varepsilon} \mu_s^{\varepsilon}$ Configuration.} A useful component that we will require in the subsequent concatenation is to glue together the local building blocks in Figures~\ref{fig: mus local volumes} and~\ref{fig: local mutn dom wall} as dictated by this set of maps. This yields the configuration shown in Figure~\ref{fig: local musns dom wall}.
\begin{figure}
    \centering
    \begin{tikzpicture}
        \begin{scope}[shift={(0,0)}]

    \filldraw[color=red, fill=red!20, thick] (-5.5,-1.4) ellipse (0.3 and 0.59);
    \draw[red, thin] (-5.5,-1.4) ellipse (0.3 and 0.13);

    \filldraw[color=red, fill=red!20, thick] (-4,-1.02) ellipse (0.45 and 0.98);
    \draw[red, thin] (-4,-1.02) ellipse (0.45 and 0.15);

    \filldraw[color=red, fill=red!20, thick] (-2.5,-1.4) ellipse (0.3 and 0.6);
    \draw[red, thin] (-2.5,-1.4) ellipse (0.3 and 0.13);

    \filldraw[color=red, fill=red!20, thick] (-1,-1.76) ellipse (0.1 and 0.24);
    \draw[red, thin] (-1,-1.78) ellipse (0.1 and 0.05);

    \filldraw[color=red, fill=red!20, thick] (5.5,-1.4) ellipse (0.3 and 0.59);
    \draw[red, thin] (5.5,-1.4) ellipse (0.3 and 0.13);

    \filldraw[color=red, fill=red!20, thick] (4,-1.02) ellipse (0.45 and 0.98);
    \draw[red, thin] (4,-1.02) ellipse (0.45 and 0.15);

    \filldraw[color=red, fill=red!20, thick] (2.5,-1.4) ellipse (0.3 and 0.6);
    \draw[red, thin] (2.5,-1.4) ellipse (0.3 and 0.13);

    \filldraw[color=red, fill=red!20, thick] (1,-1.76) ellipse (0.1 and 0.24);
    \draw[red, thin] (1,-1.78) ellipse (0.1 and 0.05);


    \filldraw[color=orange, fill=orange!20, thick] (-5.5,1.92) ellipse (0.42 and 1.08);
    \draw[orange, thin] (-5.5,1.92) ellipse (0.42 and 0.18);

    \filldraw[color=orange, fill=orange!20, thick] (-4,1.53) ellipse (0.52 and 1.47);
    \draw[orange, thin] (-4,1.72) ellipse (0.52 and 0.22);

    \filldraw[color=orange, fill=orange!20, thick] (-2.5,1.9) ellipse (0.42 and 1.1);
    \draw[orange, thin] (-2.5,1.9) ellipse (0.42 and 0.18);

    \filldraw[color=orange, fill=orange!20, thick] (-1,2.26) ellipse (0.3 and 0.74);
    \draw[orange, thin] (-1,2.26) ellipse (0.3 and 0.14);

    \filldraw[color=orange, fill=orange!20, thick] (5.5,1.92) ellipse (0.42 and 1.08);
    \draw[orange, thin] (5.5,1.92) ellipse (0.42 and 0.18);

    \filldraw[color=orange, fill=orange!20, thick] (4,1.53) ellipse (0.52 and 1.47);
    \draw[orange, thin] (4,1.72) ellipse (0.52 and 0.22);

    \filldraw[color=orange, fill=orange!20, thick] (2.5,1.9) ellipse (0.42 and 1.1);
    \draw[orange, thin] (2.5,1.9) ellipse (0.42 and 0.18);

    \filldraw[color=orange, fill=orange!20, thick] (1,2.26) ellipse (0.3 and 0.74);
    \draw[orange, thin] (1,2.26) ellipse (0.3 and 0.14);

    \filldraw[color=blue, fill=blue!20, thick] (-5.5,0) ellipse (0.46 and 0.74);
    \draw[blue, thin] (-5.5,0) ellipse (0.46 and 0.16);

    \filldraw[color=blue, fill=blue!20, thick] (-2.5,0) ellipse (0.46 and 0.71);
    \draw[blue, thin] (-2.5,0) ellipse (0.46 and 0.16);

    \filldraw[color=blue, fill=blue!20, thick] (-1,0) ellipse (0.62 and 1.46);
    \draw[blue, thin] (-1,0) ellipse (0.62 and 0.22);

    \filldraw[color=blue, fill=blue!20, thick] (1,0) ellipse (0.62 and 1.46);
    \draw[blue, thin] (1,0) ellipse (0.62 and 0.20);

    \filldraw[color=blue, fill=blue!20, thick] (2.5,0) ellipse (0.46 and 0.71);
    \draw[blue, thin] (2.5,0) ellipse (0.46 and 0.15);

    \filldraw[color=blue, fill=blue!20, thick] (5.5,0) ellipse (0.46 and 0.74);
    \draw[blue, thin] (5.5,0) ellipse (0.46 and 0.16);


    \draw[black, thick, ->] (-6.25,0) -- (6.25,0);
    \draw[mygreen, ultra thick] (-5.85,0) -- (5.85,0);

    \draw[red, thick] (-5.85,-2) -- (5.85,-2);
    
    \draw[orange, thick] (-5.85,3) -- (5.85,3);


    \draw[blue, thick] (-5.85,0.975) -- (-4.00,0);
    \draw[blue, thick] (-5.85,-0.975) -- (-4.00,0);

    \draw[blue, thick] (-4.00,0) -- (0.00,2.00);
    \draw[blue, thick] (0.00,2.00) -- (4.00,0);

    \draw[blue, thick] (-4.00,0) -- (0.00,-2.00);
    \draw[blue, thick] (0.00,-2.00) -- (4.00,0);

    \draw[blue, thick] (4.00,0) -- (5.85,0.975);
    \draw[blue, thick] (4.00,0) -- (5.85,-0.975);


    \filldraw[black] (-4.00,0) circle (0pt) node{\huge{$\times$}};
    \filldraw[black] (0.00,0) circle (0pt) node{\huge{$\times$}};
    \filldraw[black] (4.00,0) circle (0pt) node{\huge{$\times$}};


    \filldraw[black] (-6.3,0) circle (0pt)
        node[anchor=east]{\textcolor{mygreen}{\large{$F$}}};

    \filldraw[black] (-6.3,0.975) circle (0pt)
        node[anchor=east]{\textcolor{blue}{\large{$\Sigma$}}};

    \filldraw[black] (-6.3,-2) circle (0pt)
        node[anchor=east]{\textcolor{red}{\large{$M_{1,\cdots ,2n}$}}};
    
    \filldraw[black] (-6.3,3) circle (0pt)
        node[anchor=east]{\textcolor{orange}{\large{$M_{2n+1,\cdots ,8}$}}};

    \filldraw[black] (0,0.2) circle (0pt) node[anchor=south]{\large{$\mu_{t,n}$}};

    \filldraw[black] (-4,0.2) circle (0pt) node[anchor=south]{\large{$\mu_{s}$}};

    \filldraw[black] (4,0.2) circle (0pt) node[anchor=south]{\large{$\mu_{s}$}};

\begin{scope}[shift={(0,-8)}]

    \tikzstyle{every node}=[font=\scriptsize]
    \draw[thick] (-0.5,-1.5) rectangle (0.5,-0.5);
    \filldraw[black] (0,-1) circle (0pt) node{\Large{$2n$}};
    \draw[thick] (-1,3.5) rectangle (1,4.5);
    \filldraw[black] (0,4) circle (0pt) node{\Large{$8-2n$}};
    \filldraw[black] (0,1.5) circle (0pt) node{\Large{$2$}};
    \draw[thick](0,1.5) circle (0.5);
    \draw[thick, ->-] (0,1) -- (0,-0.5) ;
       
    \draw[thick]  (-6,1) arc (-90:90:0.5);
    \draw[thick]  (6,1) arc (270:90:0.5);

    \draw[thick] (-5.5,1.5) -- (-0.5,1.5);
    \draw[thick] (0.5,1.5) -- (5.5,1.5);
    \node at (-4,1.5) {\large $\times$};
    \node at (4,1.5) {\large $\times$};

\node at (-6,1.5) {\Large $2$};
\node at (6,1.5) {\Large $2$};

\node[anchor=west] at (6,2) {\Large $\cdots$};
\node[anchor=west] at (6,1) {\Large $\cdots$};

\node[anchor=east] at (-6,2) {\Large $\cdots$};
\node[anchor=east] at (-6,1) {\Large $\cdots$};
\end{scope}

\end{scope}
    \end{tikzpicture}
    \caption{The local geometry arising from the $\mu_{s}^{\varepsilon}\mu_{t,n}^{\varepsilon}\mu_s^{\varepsilon}$ profile. The fibration of the section $\Sigma$ over the interval whose endpoints are given by the location of the $\mu_s$ domain walls, gives rise to a three-sphere over which an $A_1$ singularity is fibered. The $\mathbb{P}^1$'s associated to the curve classes $M_{1,\cdots ,2n}$ collapse at the location of the $\mu_{t,n}$ domain wall, meaning the $A_{2n-1}$ singularity touches the $A_1$ singularity at this point. The curves $M_{2n+1,\cdots ,8}$ do not collapse, meaning the $A_{7-2n}$ singularity does not touch the $A_1$ singularity.}
    \label{fig: local musns dom wall}
\end{figure}
Since the curves $M_{2n+1,\cdots ,8}$ do not collapse at any point in this profile, there is no bifundamental matter charged under the $SU(8-2n)$ flavor symmetry.

\paragraph{$(\mu_{t,n}^{\varepsilon} \mu_s^{\varepsilon}\mu_{t,4}^{\varepsilon}\mu_s^{\varepsilon} )^2 $ Configuration.} Next we present the situation involving two $\mu_{t,n}$ domain walls and two $\mu_{t,4}$ domain walls. The reason for alternating the maps $\mu_{t,n}$ and $\mu_{t,4}$ is to mimic the field theory operation of gluing quivers using an alternation of $\Phi$ and mixed gluings. Following an analogous logic to Section~\ref{sec: flux 2^8}, we associate to this configuration of local building blocks the volume profile and quiver in Figure~\ref{fig: local s3s4s3s4 domain wall}.

\begin{figure}
    \centering
\begin{tikzpicture}[scale=0.8]


    \filldraw[color=red, fill=red!20, thick] (-9,-0.75) ellipse (0.18 and 0.25);
    \draw[red, thin] (-9,-0.75) ellipse (0.18 and 0.07);

    \filldraw[color=red, fill=red!20, thick] (-8,-0.52) ellipse (0.27 and 0.48);
    \draw[red, thin] (-8,-0.52) ellipse (0.27 and 0.12);

    \filldraw[color=red, fill=red!20, thick] (-7,-0.76) ellipse (0.18 and 0.24);
    \draw[red, thin] (-7,-0.75) ellipse (0.18 and 0.07);

    \filldraw[color=red, fill=red!20, thick] (-4,-0.52) ellipse (0.27 and 0.48);
    \draw[red, thin] (-4,-0.52) ellipse (0.27 and 0.12);

    \filldraw[color=red, fill=red!20, thick] (-5,-0.76) ellipse (0.18 and 0.24);
    \draw[red, thin] (-5,-0.75) ellipse (0.18 and 0.07);

    \filldraw[color=red, fill=red!20, thick] (-3,-0.76) ellipse (0.18 and 0.24);
    \draw[red, thin] (-3,-0.75) ellipse (0.18 and 0.07);

    \filldraw[color=red, fill=red!20, thick] (0,-0.52) ellipse (0.27 and 0.48);
    \draw[red, thin] (0,-0.52) ellipse (0.27 and 0.12);

    \filldraw[color=red, fill=red!20, thick] (-1,-0.76) ellipse (0.18 and 0.24);
    \draw[red, thin] (-1,-0.75) ellipse (0.18 and 0.07);

    \filldraw[color=red, fill=red!20, thick] (1,-0.76) ellipse (0.18 and 0.24);
    \draw[red, thin] (1,-0.75) ellipse (0.18 and 0.07);

    \filldraw[color=red, fill=red!20, thick] (4,-0.52) ellipse (0.27 and 0.48);
    \draw[red, thin] (4,-0.52) ellipse (0.27 and 0.12);

    \filldraw[color=red, fill=red!20, thick] (3,-0.76) ellipse (0.18 and 0.24);
    \draw[red, thin] (3,-0.75) ellipse (0.18 and 0.07);

    \filldraw[color=red, fill=red!20, thick] (5,-0.76) ellipse (0.18 and 0.24);
    \draw[red, thin] (5,-0.75) ellipse (0.18 and 0.07);

    \filldraw[color=red, fill=red!20, thick] (7,-0.76) ellipse (0.18 and 0.24);
    \draw[red, thin] (7,-0.75) ellipse (0.18 and 0.07);


    \filldraw[color=blue, fill=blue!20, thick] (-9,0) ellipse (0.3 and 0.45);
    \draw[blue, thin] (-9,0) ellipse (0.3 and 0.14);

    \filldraw[color=blue, fill=blue!20, thick] (-7,0) ellipse (0.3 and 0.47);
    \draw[blue, thin] (-7,0) ellipse (0.3 and 0.14);

    \filldraw[color=blue, fill=blue!20, thick] (-5,0) ellipse (0.3 and 0.47);
    \draw[blue, thin] (-5,0) ellipse (0.3 and 0.14);

    \filldraw[color=blue, fill=blue!20, thick] (-6,0) ellipse (0.4 and 0.95);
    \draw[blue, thin] (-6,0) ellipse (0.4 and 0.2);

   \filldraw[color=blue, fill=blue!20, thick] (-3,0) ellipse (0.3 and 0.47);
    \draw[blue, thin] (-3,0) ellipse (0.3 and 0.14);

    \filldraw[color=blue, fill=blue!20, thick] (-1,0) ellipse (0.3 and 0.47);
    \draw[blue, thin] (-1,0) ellipse (0.3 and 0.14);

    \filldraw[color=blue, fill=blue!20, thick] (-2,0) ellipse (0.4 and 0.95);
    \draw[blue, thin] (-2,0) ellipse (0.4 and 0.2);

    \filldraw[color=blue, fill=blue!20, thick] (1,0) ellipse (0.3 and 0.47);
    \draw[blue, thin] (1,0) ellipse (0.3 and 0.14);

    \filldraw[color=blue, fill=blue!20, thick] (2,0) ellipse (0.4 and 0.95);
    \draw[blue, thin] (2,0) ellipse (0.4 and 0.2);

    \filldraw[color=blue, fill=blue!20, thick] (3,0) ellipse (0.3 and 0.47);
    \draw[blue, thin] (3,0) ellipse (0.3 and 0.14);

   \filldraw[color=blue, fill=blue!20, thick] (7,0) ellipse (0.3 and 0.47);
    \draw[blue, thin] (7,0) ellipse (0.3 and 0.14);

    \filldraw[color=blue, fill=blue!20, thick] (5,0) ellipse (0.3 and 0.47);
    \draw[blue, thin] (5,0) ellipse (0.3 and 0.14);

    \filldraw[color=blue, fill=blue!20, thick] (6,0) ellipse (0.4 and 0.95);
    \draw[blue, thin] (6,0) ellipse (0.4 and 0.2);


    \filldraw[color=orange, fill=orange!20, thick] (-9,1) ellipse (0.2 and 0.5);
    \draw[orange, thin] (-9,1) ellipse (0.2 and 0.08);

    \filldraw[color=orange, fill=orange!20, thick] (-8,0.69) ellipse (0.3 and 0.64);
    \draw[orange, thin] (-8,0.69) ellipse (0.3 and 0.14);


   \filldraw[color=orange, fill=orange!20, thick] (-7,0.85) ellipse (0.2 and 0.3);
    \draw[orange, thin] (-7,0.85) ellipse (0.2 and 0.08);

   \filldraw[color=orange, fill=orange!20, thick] (-2,1.25) ellipse (0.15 and 0.25);
    \draw[orange, thin] (-2,1.25) ellipse (0.15 and 0.05);

    \filldraw[color=orange, fill=orange!20, thick] (-3,0.94) ellipse (0.2 and 0.43);
    \draw[orange, thin] (-3,0.94) ellipse (0.2 and 0.08);

    \filldraw[color=orange, fill=orange!20, thick] (-4,0.65) ellipse (0.3 and 0.6);
    \draw[orange, thin] (-4,0.65) ellipse (0.3 and 0.14);

   \filldraw[color=orange, fill=orange!20, thick] (-5,0.83) ellipse (0.2 and 0.29);
    \draw[orange, thin] (-5,0.83) ellipse (0.2 and 0.08);


   \filldraw[color=orange, fill=orange!20, thick] (1,0.83) ellipse (0.2 and 0.29);
    \draw[orange, thin] (1,0.83) ellipse (0.2 and 0.08);



    \filldraw[color=orange, fill=orange!20, thick] (0,0.65) ellipse (0.3 and 0.6);
    \draw[orange, thin] (0,0.65) ellipse (0.3 and 0.14);


    \filldraw[color=orange, fill=orange!20, thick] (-1,0.94) ellipse (0.2 and 0.43);
    \draw[orange, thin] (-1,0.94) ellipse (0.2 and 0.08);


   \filldraw[color=orange, fill=orange!20, thick] (6,1.25) ellipse (0.15 and 0.25);
    \draw[orange, thin] (6,1.25) ellipse (0.15 and 0.05);


    \filldraw[color=orange, fill=orange!20, thick] (5,0.94) ellipse (0.2 and 0.43);
    \draw[orange, thin] (5,0.94) ellipse (0.2 and 0.08);


    \filldraw[color=orange, fill=orange!20, thick] (4,0.65) ellipse (0.3 and 0.6);
    \draw[orange, thin] (4,0.65) ellipse (0.3 and 0.14);


   \filldraw[color=orange, fill=orange!20, thick] (3,0.83) ellipse (0.2 and 0.29);
    \draw[orange, thin] (3,0.83) ellipse (0.2 and 0.08);




    \filldraw[color=orange, fill=orange!20, thick] (7,1) ellipse (0.2 and 0.5);
    \draw[orange, thin] (7,1) ellipse (0.2 and 0.08);


    \draw[thick] (-10,0) -- (8,0);
    \draw[color=mygreen, ultra thick] (-9,0) -- (7,0);
    \draw[color=blue, thick] (-9,-0.5) -- (-8,0);
    \draw[color=blue, thick] (-8,0) -- (-6,-1);
    \draw[color=blue, thick] (-6,-1) -- (-4,0);
    \draw[color=blue, thick] (-4,0) -- (-2,-1);
    \draw[color=blue, thick] (-2,-1) -- (0,0);
    \draw[color=blue, thick] (0,0) -- (2,-1);
    \draw[color=blue, thick] (2,-1) -- (4,0);
    \draw[color=blue, thick] (4,0) -- (6,-1);
    \draw[color=blue, thick] (6,-1) -- (7,-0.5);

    \draw[color=blue, thick] (-9,0.5) -- (-8,0);
    \draw[color=blue, thick] (-8,0) -- (-6,1);
    \draw[color=blue, thick] (-6,1) -- (-4,0);
    \draw[color=blue, thick] (-4,0) -- (-2,1);
    \draw[color=blue, thick] (-2,1) -- (0,0);
    \draw[color=blue, thick] (0,0) -- (2,1);
    \draw[color=blue, thick] (2,1) -- (4,0);
    \draw[color=blue, thick] (4,0) -- (6,1);
    \draw[color=blue, thick] (6,1) -- (7,0.5);

    \draw[color=red, thick] (-9,-1) -- (7,-1);

    \draw[color=orange, thick] (-9,1.5) -- (-6,1);
    \draw[color=orange, thick] (-2,1.5) -- (-6,1);
    \draw[color=orange, thick] (-2,1.5) -- (2,1);
    \draw[color=orange, thick] (6,1.5) -- (2,1);
    \draw[color=orange, thick] (6,1.5) -- (7,1.5);

   \filldraw[black] (-10,0) circle (0pt)
        node[anchor=east]{\textcolor{mygreen}{\large{$F$}}};

    \filldraw[black] (-10,0.5) circle (0pt)
        node[anchor=east]{\textcolor{blue}{\large{$\Sigma$}}};

    \filldraw[black] (-9.5,-1) circle (0pt)
        node[anchor=east]{\textcolor{red}{\large{$M_{1 \cdots 2n}$}}};

    \filldraw[black] (-9.5,1.5) circle (0pt)
        node[anchor=east]{\textcolor{orange}{\large{$M_{2n+1 \cdots 8}$}}};

    \filldraw[black] (-8.00,0.2) circle (0pt)
        node[anchor=south]{\large{$\mu_s$}};

    \filldraw[black] (-6,0.3) circle (0pt)
        node[anchor=south]{\large{$\mu_{t,4}$}};

    \filldraw[black] (-4.00,0.2) circle (0pt)
        node[anchor=south]{\large{$\mu_s$}};

    \filldraw[black] (-2,0.3) circle (0pt)
        node[anchor=south]{\large{$\mu_{t,n}$}};

    \filldraw[black] (0,0.2) circle (0pt)
        node[anchor=south]{\large{$\mu_s$}};

    \filldraw[black] (2,0.2) circle (0pt)
        node[anchor=south]{\large{$\mu_{t,4}$}};

    \filldraw[black] (4,0.2) circle (0pt)
        node[anchor=south]{\large{$\mu_s$}};

    \filldraw[black] (6,0.2) circle (0pt)
        node[anchor=south]{\large{$\mu_{t,n}$}};

    \filldraw[black] (-8.00,0) circle (0pt) node{\Large{$\times$}};
    \filldraw[black] (-6.00,0) circle (0pt) node{\Large{$\times$}};
    \filldraw[black] (-4.00,0) circle (0pt) node{\Large{$\times$}};
    \filldraw[black] (-2.00,0) circle (0pt) node{\Large{$\times$}};
    \filldraw[black] (0.00,0) circle (0pt) node{\Large{$\times$}};
    \filldraw[black] (2.00,0) circle (0pt) node{\Large{$\times$}};
    \filldraw[black] (4.00,0) circle (0pt) node{\Large{$\times$}};
    \filldraw[black] (6.00,0) circle (0pt) node{\Large{$\times$}};


\begin{scope}[shift={(0,-7)}]
   \tikzstyle{every node}=[font=\scriptsize]
    \draw[thick] (-2.5,-1.5) rectangle (-1.5,-0.5);
    \filldraw[black] (-2,-1) circle (0pt) node{\Large{$2n$}};
    \draw[thick] (-3,3.5) rectangle (-1,4.5);
    \filldraw[black] (-2,4) circle (0pt) node{\Large{$8-2n$}};
    \filldraw[black] (-9,1.5) circle (0pt) node{\Large{$2$}};
    \filldraw[black] (-6,1.5) circle (0pt) node{\Large{$2$}};
    \filldraw[black] (-2,1.5) circle (0pt) node{\Large{$2$}};
    \filldraw[black] (2,1.5) circle (0pt) node{\Large{$2$}};
    \filldraw[black] (6,1.5) circle (0pt) node{\Large{$2$}};
    \draw[thick](-6,1.5) circle (0.5);
    \draw[thick](-2,1.5) circle (0.5);
    \draw[thick](2,1.5) circle (0.5);
    \draw[thick, -<-] (-5.6,1.2) -- (-2.5,-0.7) ;
    \draw[thick, ->-] (-3,4) -- (-5.6,1.8) ;
    \draw[thick, -<-] (1.6,1.2) -- (-1.5,-0.7) ;
    \draw[thick, ->-] (5.6,1.2) -- (-1.5,-1) ;
   \draw[thick, -<-] (-2,1) -- (-2,-0.5) ;
    \draw[thick, -<-] (-1,4) -- (1.6,1.8) ;

    \draw[thick]  (-9,1) arc (-90:90:0.5);
    \draw[thick]  (6,1) arc (270:90:0.5);

    \draw[thick] (-8.5,1.5) -- (-6.5,1.5);
    \draw[thick] (-5.5,1.5) -- (-2.5,1.5);
    \draw[thick] (-1.5,1.5) -- (1.5,1.5);
    \draw[thick] (2.5,1.5) -- (5.5,1.5);
    \node at (-8,1.5) {\large $\times$};
    \node at (-4,1.5) {\large $\times$};
    \node at (0,1.5) {\large $\times$};
    \node at (4,1.5) {\large $\times$};


\node[anchor=west] at (6,2) {\Large $\cdots$};
\node[anchor=west] at (6,1) {\Large $\cdots$};

\node[anchor=east] at (-9,2) {\Large $\cdots$};
\node[anchor=east] at (-9,1) {\Large $\cdots$};
    
\end{scope}

\end{tikzpicture}
    \caption{The local geometry arising from the concatenation $(\mu_{t,n}^{\varepsilon}\mu_s^{\varepsilon}\mu_{t,4}^{\varepsilon}\mu_s^{\varepsilon})^2$. The fibration of the section $\Sigma$ over three compact intervals, whose endpoints are given by the location of the $\mu_s$ domain walls, gives rise to three 3-spheres over which an $A_1$ singularity is fibered. The $\mathbb{P}^1$'s associated to the curve classes $M_{1,\cdots , 2n}$ collapse at the locations of the $\mu_{t,n}$ and $\mu_{t,4}$ domain walls, meaning the $A_{2n-1}$ singularity touches the $A_1$ singularity at these four points. The curves $M_{2n+1, \cdots 8}$ only collapse at two points, meaning the $A_{8-2n}$ singularity touches only two of the $A_1$ singularities. }
    \label{fig: local s3s4s3s4 domain wall}
\end{figure}

There are four locations in which the matter curves $M_{1\cdots 2n}$ collapse, giving four sets of chirals charged under the $SU(2n)$ flavor symmetry. The matter curves $M_{2n+1 \cdots 8}$ only collapse at two points, giving two sets of chirals charged under the $SU(8-2n)$ flavor symmetry, as can be seen in the quiver. We further associate to this configuration the box graph description shown in Figure~\ref{Fig: box diagram n3 fifth region}. 
\begin{figure}
\centering
\begin{tikzpicture}
\node[] (5dLyt) at (-3,-2) {\ytableausetup{centertableaux,boxsize=5pt}
\begin{ytableau}
*(blue) & *(blue) & *(blue) & *(blue) & *(blue) & *(blue) & *(blue) & *(blue) \\
\none[] & \none[] & \none[] & \none[] & \none[] & \none[] & *(yellow) & *(yellow) & *(yellow) & *(yellow) & *(yellow) & *(yellow) & *(yellow) & *(yellow) \\ \none[] \\
*(blue) & *(blue) & *(blue) & *(blue) & *(blue) & *(blue) & *(blue) & *(blue) \\
\none[] & \none[] & \none[] & \none[] & \none[] & \none[] & *(yellow) & *(yellow) & *(yellow) & *(yellow) & *(yellow) & *(yellow) & *(yellow) & *(yellow) \\
\end{ytableau}};
\node[] (5dLyt) at (0,-2) {\ytableausetup{centertableaux,boxsize=5pt}
\begin{ytableau}
*(yellow) & *(yellow) & *(yellow) & *(yellow) & *(yellow) & *(yellow) & *(yellow) & *(yellow) \\
\none[] & \none[] & \none[] & \none[] & \none[] & \none[] & *(blue) & *(blue) & *(blue) & *(blue) & *(blue) & *(blue) & *(blue) & *(blue) \\ \none[] \\
*(yellow) & *(yellow) & *(yellow) & *(yellow) & *(yellow) & *(yellow) & *(yellow) & *(yellow) \\
\none[] & \none[] & \none[] & \none[] & \none[] & \none[] & *(blue) & *(blue) & *(blue) & *(blue) & *(blue) & *(blue) & *(blue) & *(blue) \\
\end{ytableau}};
\draw[black, thick] (-1.5,-3) -- (-1.5,-1);
\filldraw[black] (-1.5,-3) circle (0pt) node[anchor=north]{$\mu_{t,4}$};
\node[] (5dLyt) at (3,-2) {\ytableausetup{centertableaux,boxsize=5pt}
\begin{ytableau}
*(blue) & *(blue) & *(blue) & *(blue) & *(blue) & *(blue) & *(yellow) & *(yellow) \\
\none[] & \none[] & \none[] & \none[] & \none[] & \none[] & *(blue) & *(blue) & *(yellow) & *(yellow) & *(yellow) & *(yellow) & *(yellow) & *(yellow) \\ \none[] \\
*(blue) & *(blue) & *(blue) & *(blue) & *(blue) & *(blue) & *(yellow) & *(yellow) \\
\none[] & \none[] & \none[] & \none[] & \none[] & \none[] & *(blue) & *(blue) & *(yellow) & *(yellow) & *(yellow) & *(yellow) & *(yellow) & *(yellow) \\
\end{ytableau}};
    \node[] (5dLyt) at (6,-2) {\ytableausetup{centertableaux,boxsize=5pt}
\begin{ytableau}
*(yellow) & *(yellow) & *(yellow) & *(yellow) & *(yellow) & *(yellow) & *(blue) & *(blue) \\
\none[] & \none[] & \none[] & \none[] & \none[] & \none[] & *(yellow) & *(yellow) & *(blue) & *(blue) & *(blue) & *(blue) & *(blue) & *(blue) \\ \none[] \\
*(yellow) & *(yellow) & *(yellow) & *(yellow) & *(yellow) & *(yellow) & *(blue) & *(blue) \\
\none[] & \none[] & \none[] & \none[] & \none[] & \none[] & *(yellow) & *(yellow) & *(blue) & *(blue) & *(blue) & *(blue) & *(blue) & *(blue) \\
\end{ytableau}};
\node[] (5dLyt) at (9,-2) {\ytableausetup{centertableaux,boxsize=5pt}
\begin{ytableau}
*(blue) & *(blue) & *(blue) & *(blue) & *(blue) & *(blue) & *(blue) & *(blue) \\
\none[] & \none[] & \none[] & \none[] & \none[] & \none[] & *(yellow) & *(yellow) & *(yellow) & *(yellow) & *(yellow) & *(yellow) & *(yellow) & *(yellow) \\ \none[] \\
*(blue) & *(blue) & *(blue) & *(blue) & *(blue) & *(blue) & *(blue) & *(blue) \\
\none[] & \none[] & \none[] & \none[] & \none[] & \none[] & *(yellow) & *(yellow) & *(yellow) & *(yellow) & *(yellow) & *(yellow) & *(yellow) & *(yellow) \\
\end{ytableau}};
\draw[black, thick] (1.5,-3) -- (1.5,-1);
\filldraw[black] (1.5,-3) circle (0pt) node[anchor=north]{$\mu_{t,3}$};
\draw[black, thick] (4.5,-3) -- (4.5,-1);
\filldraw[black] (4.5,-3) circle (0pt) node[anchor=north]{$\mu_{t,4}$};
\draw[black, thick] (7.5,-3) -- (7.5,-1);
\filldraw[black] (7.5,-3) circle (0pt) node[anchor=north]{$\mu_{t,3}$};
\end{tikzpicture}
\caption{Box diagram for the configuration in Figure~\ref{fig: local s3s4s3s4 domain wall} for the examples $n=3$. The action of $\mu_{t,3}$ is to flop six of the effective curves, whilst the action of $\mu_{t,4}$ is to flop eight of the effective curves. }
\label{Fig: box diagram n3 fifth region}
\end{figure}

\paragraph{The $(\mu_{t,n}^{\varepsilon} \mu_s^{\varepsilon}\mu_{t,4}^{\varepsilon}\mu_s^{\varepsilon} )^2 $ Flux Torus.} 

Finally, we glue the profile in Figure~\ref{fig: local s3s4s3s4 domain wall} using the monodromy map $\mathcal{M}[\mathcal{F}_n]=( \mu_s\mu_{t,4}\mu_s\mu_{t,n})^2$. The label $\mathcal{F}_n$ is the geometric flux that satisfies $N(\mathcal{F}_n)=(-2^{2n},0^{8-2n})$, and the specific form of $\mathcal{F}_n \in H_2(X_S,\Z)$ will be explained in Section~\ref{sect:concatenating}. The volume profile for this configuration is displayed in Figure~\ref{fig: n3 volumes and quiver closed}. 
\begin{figure}
    \centering
\begin{tikzpicture}[scale=0.7]


    \filldraw[color=blue, fill=blue!20, thick] (0,-3.6) ellipse (0.6 and 1.28);
    \draw[blue, thin] (0,-3.8) ellipse (0.58 and 0.2);

    \begin{scope}[shift={(-1.6,-3.49)}, rotate=-10]
        \filldraw[color=blue, fill=blue!20, thick] (0,0) ellipse (0.5 and 1.12);
        \draw[blue, thin] (0,-0.2) ellipse (0.48 and 0.17);
    \end{scope}

    \begin{scope}[shift={(1.6,-3.49)}, rotate=10]
        \filldraw[color=blue, fill=blue!20, thick] (0,0) ellipse (0.5 and 1.12);
        \draw[blue, thin] (0,-0.2) ellipse (0.48 and 0.17);
    \end{scope}

    \begin{scope}[shift={(-2.8,-3.28)}, rotate=-18]
        \filldraw[color=blue, fill=blue!20, thick] (0,0) ellipse (0.3 and 0.79);
        \draw[blue, thin] (0,-0.12) ellipse (0.3 and 0.1);
    \end{scope}

    \begin{scope}[shift={(2.8,-3.28)}, rotate=18]
        \filldraw[color=blue, fill=blue!20, thick] (0,0) ellipse (0.3 and 0.79);
        \draw[blue, thin] (0,-0.12) ellipse (0.3 and 0.1);
    \end{scope}


    \begin{scope}[shift={(0,3.6)}, rotate=180]
    \filldraw[color=blue, fill=blue!20, thick] (0,0) ellipse (0.6 and 1.28);
    \draw[blue, thin] (0,-0.2) ellipse (0.58 and 0.2);
   \end{scope}

    \begin{scope}[shift={(1.6,3.49)}, rotate=170]
        \filldraw[color=blue, fill=blue!20, thick] (0,0) ellipse (0.5 and 1.12);
        \draw[blue, thin] (0,-0.2) ellipse (0.48 and 0.17);
    \end{scope}

    \begin{scope}[shift={(-1.6,3.49)}, rotate=190]
        \filldraw[color=blue, fill=blue!20, thick] (0,0) ellipse (0.5 and 1.12);
        \draw[blue, thin] (0,-0.2) ellipse (0.48 and 0.17);
    \end{scope}

    \begin{scope}[shift={(2.8,3.28)}, rotate=162]
        \filldraw[color=blue, fill=blue!20, thick] (0,0) ellipse (0.3 and 0.79);
        \draw[blue, thin] (0,-0.12) ellipse (0.3 and 0.1);
    \end{scope}

    \begin{scope}[shift={(-2.8,3.28)}, rotate=198]
        \filldraw[color=blue, fill=blue!20, thick] (0,0) ellipse (0.3 and 0.79);
        \draw[blue, thin] (0,-0.12) ellipse (0.3 and 0.1);
    \end{scope}


    \begin{scope}[shift={(-5.87,0)}, rotate=90]
        \filldraw[color=blue, fill=blue!20, thick] (0,0) ellipse (0.3 and 0.74);
        \draw[blue, thin] (0,0.02) ellipse (0.29 and 0.1);
    \end{scope}

    \begin{scope}[shift={(-5.72,-1.02)}, rotate=112]
        \filldraw[color=blue, fill=blue!20, thick] (0,0) ellipse (0.27 and 0.67);
        \draw[blue, thin] (0,-0.02) ellipse (0.26 and 0.09);
    \end{scope}

    \begin{scope}[shift={(-5.72,1.02)}, rotate=68]
        \filldraw[color=blue, fill=blue!20, thick] (0,0) ellipse (0.27 and 0.67);
        \draw[blue, thin] (0,-0.02) ellipse (0.26 and 0.09);
    \end{scope}

    \begin{scope}[shift={(-5.3,-1.9)}, rotate=130]
        \filldraw[color=blue, fill=blue!20, thick] (0,0) ellipse (0.22 and 0.49);
        \draw[blue, thin] (0,-0.1) ellipse (0.22 and 0.08);
    \end{scope}

   \begin{scope}[shift={(-5.3,1.9)}, rotate=50]
        \filldraw[color=blue, fill=blue!20, thick] (0,0) ellipse (0.22 and 0.49);
        \draw[blue, thin] (0,-0.1) ellipse (0.22 and 0.08);
    \end{scope}


    \begin{scope}[shift={(5.87,0)}, rotate=-90]
        \filldraw[color=blue, fill=blue!20, thick] (0,0) ellipse (0.3 and 0.74);
        \draw[blue, thin] (0,0.02) ellipse (0.29 and 0.1);
    \end{scope}

    \begin{scope}[shift={(5.72,-1.02)}, rotate=-112]
        \filldraw[color=blue, fill=blue!20, thick] (0,0) ellipse (0.27 and 0.67);
        \draw[blue, thin] (0,-0.02) ellipse (0.26 and 0.09);
    \end{scope}

    \begin{scope}[shift={(5.72,1.02)}, rotate=-68]
        \filldraw[color=blue, fill=blue!20, thick] (0,0) ellipse (0.27 and 0.67);
        \draw[blue, thin] (0,-0.02) ellipse (0.26 and 0.09);
    \end{scope}

    \begin{scope}[shift={(5.3,-1.9)}, rotate=-130]
        \filldraw[color=blue, fill=blue!20, thick] (0,0) ellipse (0.22 and 0.49);
        \draw[blue, thin] (0,-0.1) ellipse (0.22 and 0.08);
    \end{scope}

   \begin{scope}[shift={(5.3,1.9)}, rotate=-50]
        \filldraw[color=blue, fill=blue!20, thick] (0,0) ellipse (0.22 and 0.49);
        \draw[blue, thin] (0,-0.1) ellipse (0.22 and 0.08);
    \end{scope}


    \begin{scope}[shift={(-4.31,-3.5)}, rotate=-30]
        \filldraw[color=red, fill=red!20, thick] (0,0) ellipse (0.25 and 0.56);
        \draw[red, thin] (0,0.02) ellipse (0.25 and 0.08);
    \end{scope}

    \begin{scope}[shift={(-5.9,-2.45)}, rotate=-50]
        \filldraw[color=red, fill=red!20, thick] (0,0) ellipse (0.14 and 0.31);
        \draw[red, thin] (0,-0.02) ellipse (0.14 and 0.04);
    \end{scope}

    \begin{scope}[shift={(-6.57,-1.34)}, rotate=-68]
        \filldraw[color=red, fill=red!20, thick] (0,0) ellipse (0.14 and 0.2);
        \draw[red, thin] (0,-0.02) ellipse (0.14 and 0.04);
    \end{scope}

    \begin{scope}[shift={(-3.2,-4.34)}, rotate=-20]
        \filldraw[color=red, fill=red!20, thick] (0,0) ellipse (0.2 and 0.33);
        \draw[red, thin] (0,-0.02) ellipse (0.2 and 0.07);
    \end{scope}

    \begin{scope}[shift={(-1.85,-4.83)}, rotate=-10]
        \filldraw[color=red, fill=red!20, thick] (0,0) ellipse (0.15 and 0.22);
        \draw[red, thin] (0,-0.02) ellipse (0.15 and 0.05);
    \end{scope}


    \begin{scope}[shift={(-4.31,3.5)}, rotate=30]
        \filldraw[color=red, fill=red!20, thick] (0,0) ellipse (0.25 and 0.56);
        \draw[red, thin] (0,0.02) ellipse (0.25 and 0.08);
    \end{scope}

    \begin{scope}[shift={(-5.9,2.45)}, rotate=50]
        \filldraw[color=red, fill=red!20, thick] (0,0) ellipse (0.14 and 0.31);
        \draw[red, thin] (0,-0.02) ellipse (0.14 and 0.04);
    \end{scope}

    \begin{scope}[shift={(-6.57,1.34)}, rotate=68]
        \filldraw[color=red, fill=red!20, thick] (0,0) ellipse (0.14 and 0.2);
        \draw[red, thin] (0,-0.02) ellipse (0.14 and 0.04);
    \end{scope}

    \begin{scope}[shift={(-3.2,4.34)}, rotate=20]
        \filldraw[color=red, fill=red!20, thick] (0,0) ellipse (0.2 and 0.33);
        \draw[red, thin] (0,-0.02) ellipse (0.2 and 0.07);
    \end{scope}

    \begin{scope}[shift={(-1.85,4.83)}, rotate=10]
        \filldraw[color=red, fill=red!20, thick] (0,0) ellipse (0.15 and 0.22);
        \draw[red, thin] (0,-0.02) ellipse (0.15 and 0.05);
    \end{scope}


    \begin{scope}[shift={(4.31,3.5)}, rotate=-30]
        \filldraw[color=red, fill=red!20, thick] (0,0) ellipse (0.25 and 0.56);
        \draw[red, thin] (0,0.02) ellipse (0.25 and 0.08);
    \end{scope}

    \begin{scope}[shift={(5.9,2.45)}, rotate=-50]
        \filldraw[color=red, fill=red!20, thick] (0,0) ellipse (0.14 and 0.31);
        \draw[red, thin] (0,-0.02) ellipse (0.14 and 0.04);
    \end{scope}

    \begin{scope}[shift={(6.57,1.34)}, rotate=-68]
        \filldraw[color=red, fill=red!20, thick] (0,0) ellipse (0.14 and 0.2);
        \draw[red, thin] (0,-0.02) ellipse (0.14 and 0.04);
    \end{scope}

    \begin{scope}[shift={(3.2,4.34)}, rotate=-20]
        \filldraw[color=red, fill=red!20, thick] (0,0) ellipse (0.2 and 0.33);
        \draw[red, thin] (0,-0.02) ellipse (0.2 and 0.07);
    \end{scope}

    \begin{scope}[shift={(1.85,4.83)}, rotate=-10]
        \filldraw[color=red, fill=red!20, thick] (0,0) ellipse (0.15 and 0.22);
        \draw[red, thin] (0,-0.02) ellipse (0.15 and 0.05);
    \end{scope}


    \begin{scope}[shift={(4.31,-3.5)}, rotate=30]
        \filldraw[color=red, fill=red!20, thick] (0,0) ellipse (0.25 and 0.56);
        \draw[red, thin] (0,0.02) ellipse (0.25 and 0.08);
    \end{scope}

    \begin{scope}[shift={(5.9,-2.45)}, rotate=50]
        \filldraw[color=red, fill=red!20, thick] (0,0) ellipse (0.14 and 0.31);
        \draw[red, thin] (0,-0.02) ellipse (0.14 and 0.04);
    \end{scope}

    \begin{scope}[shift={(6.57,-1.34)}, rotate=68]
        \filldraw[color=red, fill=red!20, thick] (0,0) ellipse (0.14 and 0.2);
        \draw[red, thin] (0,-0.02) ellipse (0.14 and 0.04);
    \end{scope}

    \begin{scope}[shift={(3.2,-4.34)}, rotate=20]
        \filldraw[color=red, fill=red!20, thick] (0,0) ellipse (0.2 and 0.33);
        \draw[red, thin] (0,-0.02) ellipse (0.2 and 0.07);
    \end{scope}

    \begin{scope}[shift={(1.85,-4.83)}, rotate=10]
        \filldraw[color=red, fill=red!20, thick] (0,0) ellipse (0.15 and 0.22);
        \draw[red, thin] (0,-0.02) ellipse (0.15 and 0.05);
    \end{scope}


    \begin{scope}[shift={(0,-1.8)}, rotate=180]
        \filldraw[color=orange, fill=orange!20, thick] (0,0) ellipse (0.2 and 0.5);
        \draw[orange, thin] (0,-0.02) ellipse (0.2 and 0.05);
    \end{scope}

    \begin{scope}[shift={(-3.69,-2.34)}, rotate=330]
        \filldraw[color=orange, fill=orange!20, thick] (0,0) ellipse (0.25 and 0.655);
        \draw[orange, thin] (0,0.02) ellipse (0.25 and 0.08);
    \end{scope}

    \begin{scope}[shift={(-2.5,-2.17)}, rotate=340]
        \filldraw[color=orange, fill=orange!20, thick] (0,0) ellipse (0.2 and 0.34);
        \draw[orange, thin] (0,-0.02) ellipse (0.2 and 0.04);
    \end{scope}

    \begin{scope}[shift={(-1.36,-2.02)}, rotate=350]
        \filldraw[color=orange, fill=orange!20, thick] (0,0) ellipse (0.18 and 0.34);
        \draw[orange, thin] (0,0) ellipse (0.18 and 0.04);
    \end{scope}

    \begin{scope}[shift={(-4.63,-1.39)}, rotate=300]
        \filldraw[color=orange, fill=orange!20, thick] (0,0) ellipse (0.2 and 0.32);
        \draw[orange, thin] (0,-0.02) ellipse (0.2 and 0.05);
    \end{scope}

    \begin{scope}[shift={(-4.92,-0.71)}, rotate=285]
        \filldraw[color=orange, fill=orange!20, thick] (0,0) ellipse (0.1 and 0.14);
        \draw[orange, thin] (0,-0.02) ellipse (0.1 and 0.03);
    \end{scope}


    \begin{scope}[shift={(3.69,-2.34)}, rotate=210]
        \filldraw[color=orange, fill=orange!20, thick] (0,0) ellipse (0.25 and 0.655);
        \draw[orange, thin] (0,0.02) ellipse (0.25 and 0.08);
    \end{scope}

    \begin{scope}[shift={(2.5,-2.17)}, rotate=200]
        \filldraw[color=orange, fill=orange!20, thick] (0,0) ellipse (0.2 and 0.34);
        \draw[orange, thin] (0,-0.02) ellipse (0.2 and 0.04);
    \end{scope}

    \begin{scope}[shift={(1.36,-2.02)}, rotate=190]
        \filldraw[color=orange, fill=orange!20, thick] (0,0) ellipse (0.18 and 0.34);
        \draw[orange, thin] (0,0) ellipse (0.18 and 0.04);
    \end{scope}

    \begin{scope}[shift={(4.63,-1.39)}, rotate=240]
        \filldraw[color=orange, fill=orange!20, thick] (0,0) ellipse (0.2 and 0.32);
        \draw[orange, thin] (0,-0.02) ellipse (0.2 and 0.05);
    \end{scope}

    \begin{scope}[shift={(4.92,-0.71)}, rotate=255]
        \filldraw[color=orange, fill=orange!20, thick] (0,0) ellipse (0.1 and 0.14);
        \draw[orange, thin] (0,-0.02) ellipse (0.1 and 0.03);
    \end{scope}


    \begin{scope}[shift={(0,1.8)}, rotate=0]
        \filldraw[color=orange, fill=orange!20, thick] (0,0) ellipse (0.2 and 0.5);
        \draw[orange, thin] (0,-0.02) ellipse (0.2 and 0.05);
    \end{scope}

    \begin{scope}[shift={(3.69,2.34)}, rotate=150]
        \filldraw[color=orange, fill=orange!20, thick] (0,0) ellipse (0.25 and 0.655);
        \draw[orange, thin] (0,0.02) ellipse (0.25 and 0.08);
    \end{scope}

    \begin{scope}[shift={(2.5,2.17)}, rotate=160]
        \filldraw[color=orange, fill=orange!20, thick] (0,0) ellipse (0.2 and 0.34);
        \draw[orange, thin] (0,-0.02) ellipse (0.2 and 0.04);
    \end{scope}

    \begin{scope}[shift={(1.36,2.02)}, rotate=170]
        \filldraw[color=orange, fill=orange!20, thick] (0,0) ellipse (0.18 and 0.34);
        \draw[orange, thin] (0,0) ellipse (0.18 and 0.04);
    \end{scope}

    \begin{scope}[shift={(4.63,1.39)}, rotate=120]
        \filldraw[color=orange, fill=orange!20, thick] (0,0) ellipse (0.2 and 0.32);
        \draw[orange, thin] (0,-0.02) ellipse (0.2 and 0.05);
    \end{scope}

    \begin{scope}[shift={(4.92,0.71)}, rotate=105]
        \filldraw[color=orange, fill=orange!20, thick] (0,0) ellipse (0.1 and 0.14);
        \draw[orange, thin] (0,-0.02) ellipse (0.1 and 0.03);
    \end{scope}


    \begin{scope}[shift={(-3.69,2.34)}, rotate=30]
        \filldraw[color=orange, fill=orange!20, thick] (0,0) ellipse (0.25 and 0.655);
        \draw[orange, thin] (0,0.02) ellipse (0.25 and 0.08);
    \end{scope}

    \begin{scope}[shift={(-2.5,2.17)}, rotate=20]
        \filldraw[color=orange, fill=orange!20, thick] (0,0) ellipse (0.2 and 0.34);
        \draw[orange, thin] (0,-0.02) ellipse (0.2 and 0.04);
    \end{scope}

    \begin{scope}[shift={(-1.36,2.02)}, rotate=10]
        \filldraw[color=orange, fill=orange!20, thick] (0,0) ellipse (0.18 and 0.34);
        \draw[orange, thin] (0,0) ellipse (0.18 and 0.04);
    \end{scope}

    \begin{scope}[shift={(-4.63,1.39)}, rotate=60]
        \filldraw[color=orange, fill=orange!20, thick] (0,0) ellipse (0.2 and 0.32);
        \draw[orange, thin] (0,-0.02) ellipse (0.2 and 0.05);
    \end{scope}

    \begin{scope}[shift={(-4.92,0.71)}, rotate=75]
        \filldraw[color=orange, fill=orange!20, thick] (0,0) ellipse (0.1 and 0.14);
        \draw[orange, thin] (0,-0.02) ellipse (0.1 and 0.03);
    \end{scope}
    

    \draw[thick] (0,0) ellipse (6 and 4);
    \draw[color=mygreen, ultra thick] (0,0) ellipse (6 and 4);

    \draw[color=red, thick] (6.63,0) .. controls (8,3.1) and (2,6) .. (0,4.88);
    \draw[color=red, thick] (-6.63,0) .. controls (-8,-3.1) and (-2,-6) .. (0,-4.88);
    \draw[color=red, thick] (-6.63,0) .. controls (-8,3.1) and (-2,6) .. (0,4.88);
    \draw[color=red, thick] (6.63,0) .. controls (8,-3.1) and (2,-6) .. (0,-4.88);

    \draw[color=blue, thick] (-4,-2.98) .. controls (-2.8,-2.1) and (2.8,-2.1) .. (4,-2.98);
    \draw[color=blue, thick] (-4,-2.98) .. controls (-2.7,-5.5) and (2.7,-5.5) .. (4,-2.98);

    \draw[color=blue, thick] (-4,2.98) .. controls (-2.8,5.5) and (2.8,5.5) .. (4,2.98);
    \draw[color=blue, thick] (-4,2.98) .. controls (-2.7,2.1) and (2.7,2.1) .. (4,2.98);

    \draw[color=blue, thick] (-4,2.98) .. controls (-7.5,2.5) and (-7.5,-2.5) .. (-4,-2.98);
    \draw[color=blue, thick] (-4,2.98) .. controls (-5.5,2.2) and (-5.5,-2.2) .. (-4,-2.98);

    \draw[color=blue, thick] (4,2.98) .. controls (7.5,2.5) and (7.5,-2.5) .. (4,-2.98);
    \draw[color=blue, thick] (4,2.98) .. controls (5.5,2.2) and (5.5,-2.2) .. (4,-2.98);

    

    \draw[color=orange, thick] (-5.1,0) .. controls (-4,-3) and (-1,-1.5) .. (0,-1.3);
    \draw[color=orange, thick] (5.1,0) .. controls (4,-3) and (1,-1.5) .. (0,-1.3);
    \draw[color=orange, thick] (5.1,0) .. controls (4,3) and (1,1.5) .. (0,1.3);
    \draw[color=orange, thick] (-5.1,0) .. controls (-4,3) and (-1,1.5) .. (0,1.3);

    \node at (6,0) {\large $\times$};    
    \node at (0,-4) {\large $\times$};  
    \node at (-6,0) {\large $\times$};    
    \node at (0,4) {\large $\times$};  
    \node at (-4,2.98) {\large $\times$};    
    \node at (4,2.98) {\large $\times$};  
    \node at (-4,-2.98) {\large $\times$};    
    \node at (4,-2.98) {\large $\times$}; 

    \filldraw[black] (0,-5.9) circle (0pt)
        node[anchor=south]{\large{$\mu_{t,n}$}};

    \filldraw[black] (-6.8,0) circle (0pt)
        node[anchor=east]{\large{$\mu_{t,4}$}};

    \filldraw[black] (0,5.9) circle (0pt)
        node[anchor=north]{\large{$\mu_{t,n}$}};

    \filldraw[black] (6.8,0) circle (0pt)
        node[anchor=west]{\large{$\mu_{t,4}$}};

    \filldraw[black] (-5,4.3) circle (0pt)
        node{\large{$\mu_s$}};
    \filldraw[black] (5,4.3) circle (0pt)
        node{\large{$\mu_s$}};
    \filldraw[black] (-5,-4.3) circle (0pt)
        node{\large{$\mu_s$}};
    \filldraw[black] (5,-4.3) circle (0pt)
        node{\large{$\mu_s$}};

\begin{scope}[shift={(2.9,-3.25)}, rotate=200,scale=0.725]
         \draw[ultra thick] (0,2) arc(90:270:-0.3cm and 2cm);
         \draw[ultra thick, style=dashed] (0,2) arc(90:270:0.3cm and 2cm);
\end{scope}

\filldraw[black] (4,-5.8) circle (0pt) node[anchor=north]{\large{$(\mu_s \mu_{t,4}\mu_{s}\mu_{t,n})^2$}};

\begin{scope}[shift={(3.4,2.2)}, rotate=67,scale=1]
    \draw[thick, -<_] (-6.75,-2.75) arc (-270:20:0.4);
\end{scope}


\begin{scope}[shift={(-4.5,-11)}, scale=1.5]

 \tikzstyle{every node}=[font=\scriptsize]
       \node[draw, circle,thick] (p1) at (0,-2.2) {\fontsize{12pt}{12pt}\selectfont $2$};
       \node[draw, circle,thick] (p2) at (3,0) {\fontsize{12pt}{12pt}\selectfont $2$};
       \node[draw, circle,thick] (p3) at (6,-2.2) {\fontsize{12pt}{12pt}\selectfont $2$};
       \node[draw, rectangle,thick] (p4) at (3,-2.2) {\fontsize{12pt}{12pt}\selectfont $2n$};
       \node[draw, circle,thick] (p5) at (3,-4.4) {\fontsize{12pt}{12pt}\selectfont $2$};
       \node[draw, rectangle,thick] (p6) at (3,1.6) {\fontsize{12pt}{12pt}\selectfont $8-2n$};
       \draw[-,thick] (p1) to  node[rotate=45]{\large $\times$}    (p2);
       \draw[-,thick] (p2) to  node[rotate=45]{\large $\times$}    (p3);
       \draw[-,thick] (p3) to  node[rotate=45]{\large $\times$}    (p5);
       \draw[-,thick] (p5) to  node[rotate=45]{\large $\times$}    (p1);
       \draw[->-,thick] (p1) to      (p4);
       \draw[->-,thick] (p3) to      (p4);
       \draw[-<-,thick] (p2) to      (p4);
       \draw[-<-,thick] (p5) to      (p4);
       \draw[-<-,thick] (p1) to      (0,1.6);
       \draw[-,thick] (0,1.6) to      (p6);
       \draw[->-,thick] (p3) to      (6,1.6);
       \draw[-,thick] (6,1.6) to      (p6);
    
\end{scope}

\end{tikzpicture}
    \caption{The geometry of the $(\mu_{t,n}^{\varepsilon}\mu_s^{\varepsilon}\mu_{t,4}^{\varepsilon}\mu_s^{\varepsilon}  )^2 $ flux torus, and the associated 4d $\cN=1$ quiver of flux $\mathcal{F}=(-2^{2n},0^{8-2n})$. There are four compact three-cycles given by the fibration of the section over the circle. An $A_1$ singularity from the collapsed fiber sits over the whole circle. The curves $M_{1\cdots 2n}$ collapse at four points in which they touch the $A_1$ singularity, giving rise to the four sets of bifundamental hypers charged under the $SU(2n)$ flavor symmetry. The curves $M_{2n+1\cdots 8}$ collapse only at two points, giving rise to the two sets of hypers charged under the $SU(8-2n)$ flavor symmetry.}
    \label{fig: n3 volumes and quiver closed}
\end{figure}
In closing up the profile in Figure~\ref{fig: local s3s4s3s4 domain wall}, we glue the two open $SU(2)$ nodes, giving rise to the fourth gauged $SU(2)$ node in the flux torus quiver. 

Specializing to the case of $n=3$, we can calculate the action of the monodromy map on our curves. We will reserve the discussion of the cases $n=1,2$ for Section~\ref{sec: Examples} when we have developed more systematic algebraic machinery. We use the geometric flux $\mathcal{F}_3=4\Sigma-2F-2M_7-2M_8$ such that $N(\mathcal{F}_3)=(-2^6,0^2)$. The action of the monodromy map $\mathcal{M}[\mathcal{F}_3]=(\mu_s\mu_{t,4}\mu_{s}\mu_{t,3})^2$ on the curves is
\be \label{eq: Curves after domain wall n3}
\mathcal{M}[\mathcal{F}_3]:  \qquad 
\left\{\begin{aligned}
F &\rightarrow F+8E-\mathcal{F}_3\\
\Sigma &\rightarrow\Sigma+ 5E-\mathcal{F}_3 \\
M_i &\rightarrow M_i+ 3E-\frac{1}{2}\mathcal{F}_3  \, , \qquad i=1\cdots 6 \\
M_i &\rightarrow M_i+4 E-\frac{1}{2} \mathcal{F}_3\, , \qquad i=7,8 \, ,
\end{aligned}\right.
\ee
and can be expressed as 
\be
    \mathcal{M}[\mathcal{F}_3]: \Gamma \rightarrow \Gamma+\frac{1}{2}(\mathcal{F}_3 \cdot \Gamma)E-\frac{1}{2}(E\cdot \Gamma)\mathcal{F}_3+3(E\cdot \Gamma)E \, , 
\ee
for all $\Gamma \in I_{1,9}$. We will explain in Sections~\ref{sect:elemfluxtubes} and~\ref{sect:concatenating} why such a map is an isomorphism of the threefold $X_S^f$. We can hence close the geometry to $\mathcal{T}[\mathcal{F}_3]$ using this automorphism, giving us the geometric analogue of the flux torus.

\paragraph{Enhanced Flavor Symmetry and Monodromy Map.} 

To determine the enhanced flavor symmetry, we need the action of the monodromy map on the $E_8$ roots
\begin{equation} \label{eq: n=3 root shifts}
\mathcal{M}[\mathcal{F}_3]:  \qquad 
\left\{
\begin{aligned}
    \alpha_1 & \rightarrow \alpha_1  -3E   \\
    \alpha_4 & \rightarrow \alpha_4  +E   \\
    \alpha_i &\rightarrow \alpha_i \, ,  \qquad \qquad i= 2,3,5,\cdots,8 \, .
\end{aligned}\right.
\end{equation}
From here, we can determine the lattice $\Lambda_{n=3}$ that is invariant under $\mathcal{M}[\mathcal{F}_3]$ is
\be
\ba
    \Lambda_{n=3} = \langle &\alpha_1 +  \alpha_2+2\alpha_3+3\alpha_4+2\alpha_5+\alpha_6, \alpha_2, \alpha_3,\alpha_5,\alpha_6,\alpha_7,\alpha_8 \rangle_{\mathbb{Z}} \, ,
\ea
\ee
written in a basis of simple roots. One can furthermore verify that this lattice coincides with the lattice $(\mathcal{F}_3)^{\perp}$, the lattice of vectors perpendicular to $\mathcal{F}_3$. The inner form on this lattice is (minus) the Cartan of $E_6 \times A_1$. This is exactly the enhanced flavor symmetry~\eqref{eq:preserved flavor symms} appropriate for the quiver theory constructed in Figure~\ref{fig: n3 volumes and quiver closed}. Moreover, we again see a shift of the $E_8$ roots by integer multiples of $E$, which is compatible with the shift of the hyper masses by $m_{\text{KK}}$ observed in the field theory \cite{Sabag:2022hyw}, see Appendix~\ref{app:KKshift} for more details. 
 
Lastly, a variant of this construction allows us to realize the fluxes $(-2q^{6},0^2)$ by concatenating the building blocks $(\mu_{t,3}^{\varepsilon}\mu_s^{\varepsilon}\mu_{t,4}^{\varepsilon}\mu_s^{\varepsilon})^{2q}$. From the associated monodromy map one recovers the $E_8$ root shift
\be \label{eq: explicit form of 2qn4 monodromy map}
\mathcal{M}[q\mathcal{F}_3] = (\mu_s\mu_{t,4}\mu_s\mu_{t,3})^{2q} :  \qquad 
\left\{\begin{aligned}
\alpha_1 &\rightarrow \alpha_1 - 3q E\\
 \alpha_4 &\rightarrow \alpha_4 +qE \\
\alpha_i &\rightarrow \alpha_i \qquad \qquad i=2,3,5,\cdots,8 \, ,
\end{aligned}\right.
\ee
The enhanced non-abelian flavor symmetry is again $E_6 \times A_1$ and the shift by $E$ is still compatible with the field theory observations.

\subsection{Building Blocks}\label{sect:elemfluxtubes}

The most basic flux torus constructed above arises as the configuration
\begin{equation}
      \mathcal{D}[2\alpha_1] := (\mu_{t,4}^{\varepsilon}\mu_s^{\varepsilon})^2 \, .
\end{equation}
which we call the \textbf{building block} associated with the flux $2\alpha_1$. As we have seen, the associated monodromy map $\mathcal{M} = (\mu_s \mu_{t,4})^2 $ used to glue the flux tube to a torus acts on $H^2(X_S^f,\Z) \simeq I_{1,9}$ as
\begin{equation}\label{eq ES for -2alpha1}
\mathcal{M}[ 2\alpha_1] \Gamma = \Gamma + (\alpha_1 \cdot \Gamma ) E -  (E \cdot \Gamma )\alpha_1 +  (E \cdot \Gamma) E  \, .  
\end{equation}
As we argued in Section \ref{sec: Geometry of XS}, $X_S^f$ admits automorphisms $w_\beta$ which act as Weyl reflections on $I_{1,9}$ for all $\beta \in \Lambda_{E_8} \setminus \Lambda_{A_7}$. We can hence act with such maps to convert 
$\mathcal{D}[2\alpha_1]$ into building blocks for other fluxes. 
Using the fact that the inner form as well as $E$ are invariant under Weyl reflections and furthermore $w_\beta^2=1$, \eqref{eq ES for -2alpha1} implies that
\begin{equation}\label{eq:conjfluxformula_alpha1}
    w_\beta \mathcal{M}[2 \alpha_1]w_\beta\,\,\Delta = \Delta +\left(w_\beta \alpha_1 \cdot \Delta\right) E   - \left(E \cdot \Delta \right) w_\beta\alpha_1 
    + \left(E \cdot \Delta \right)E \, .
\end{equation}
where we have written $\Delta = w_{\beta} \Gamma $. As \eqref{eq ES for -2alpha1} holds for any $\Gamma \in I_{1,9}$, the above holds for any $\Delta \in I_{1,9}$ as well. We can think of the above as a building block associated to the flux 
$2 w_\beta \alpha_1$, with the associated automorphism characterized by the map 
\begin{equation}
\mathcal{M}[2 w_\beta \alpha_1] = w_\beta\mathcal{M}[2 \alpha_1]w_\beta \, .
\end{equation}
For $\beta \in \Lambda_{E_8} \setminus \Lambda_{A_7}$, both $w_\beta$ and 
$\mathcal{M}[2 \alpha_1]$ are automorphisms of $X_S^f$, so that the same must hold true for 
$\mathcal{M}[2 w_\beta \alpha_1]$. We can hence identify the geometries corresponding to the ends of the interval.  

The above becomes particularly simple if $\beta \in \Lambda_{D_8} \setminus \Lambda_{A_7}$ (see \eqref{eq D8 lattice}) as this implies that
\begin{equation}
    [w_\beta,\mu_{t,4}] = 0 \, .  
\end{equation}
We can then write 
\begin{equation}
\mathcal{M}[2 w_{\beta} \alpha_1] = w_{\beta} ( \mu_s \mu_{t,4})^2 w_{\beta}
=   ( w_{\beta} w_{\alpha_1} w_{\beta} \,\mu_{t,4})^2 
=   (w_{w_\beta\alpha_1} \,\mu_{t,4}  )^2 
\end{equation}
where we have used that $\mu_s = w_{\alpha_1}$ and $ w_{\beta} w_\alpha w_{\beta}= w_{w_{\beta} \alpha}$ a Weyl reflection by $w_\beta \alpha$. What we have just seen is that such domain wall configurations are constructed in exactly the same way as 
$\mathcal{D}[2 \alpha_1]$ with the replacement of $\mu_s$ by a Weyl reflection by a different root which is $w_\beta\alpha_1$. 

The set of roots $\{\alpha_+\}$ of the form 
$\alpha_+ = w_{\beta} \alpha_1$
for $ \beta \in  \Lambda_{D_8} \setminus \Lambda_{A_7}$ is equal to\footnote{As $\beta \cdot F =0$ for all roots $\beta \in \Lambda_{D_8}$,
any element of the Weyl group of $\Lambda_{D_8}$ acts trivially on $F$.
Hence we have that $-1 = F \cdot \alpha_1 = w_{\beta} F \cdot (w_{\beta} \alpha_1) = F \cdot (w_{\beta} \alpha_1)$. By inspection one sees that indeed all roots with the property $\alpha_+ \cdot F = 1$ can be reached by an appropriate $w_{\beta}$. }  
\begin{equation}
   R_+ := \{\alpha_+ \in \Lambda_{E_8} \, |\, \alpha_+^2 = -2\quad \alpha_+ \cdot F = 1 \} \subset I_{1,9} \, .
\end{equation}
and for all such roots we can construct an automorphism 
\begin{equation}\label{eq:defdomainwallplus}
    \mathcal{M}[2 \alpha_+] := \left(w_{\alpha_+} \mu_{t,4}  \right)^2 \, ,
\end{equation}
which realizes the building block 
\begin{equation} \label{eq: Rplus building block}
      \mathcal{D}[2\alpha_+] := (\mu_{t,4}^{\varepsilon} w_{\alpha_+}^{\varepsilon})^2 \, .
\end{equation}
with flux 
$2\alpha_+$. We will comment shortly what is meant by $w_{\alpha_+}^{\varepsilon}$. 

Likewise, we can consider 
\begin{equation}
   R_- := \{\alpha_- \in \Lambda_{E_8} \, |\, \alpha_-^2 = -2\quad \alpha_- \cdot F = -1 \} \subset I_{1,9} \, .
\end{equation}
For any such root, we can construct an automorphism for the flux 
$2\alpha_-$ by 
taking the inverse of 
$\mathcal{M}[2\alpha_+]$ where $\alpha_- = -\alpha_+\in R_+$, i.e. 
\begin{equation}\label{eq:defdomainwallminus}
 \mathcal{M}[-2 \alpha_+] =  \mathcal{M}[2 \alpha_-] := \left(\mu_{t,4}  w_{\alpha_-} \right)^2
\end{equation}
where we have used that $w_{\alpha_-} = w_{\alpha_+} $. This provides another set of building blocks 
\be
    \mathcal{D}[2\alpha_-] := (w_{\alpha_-}^{\varepsilon} \mu_{t,4}^{\varepsilon})^2 \, .
\ee 

We may summarize the set of all building blocks as follows: let $R = R_+ \cup R_-$. For any element $\alpha \in R$, there is an associated automorphism defined as above which acts on curves $\Gamma \in H_2(X_S,\Z)$ as
\begin{equation}\label{eq:fluxformula_alpha}
    \mathcal{M}[2 \alpha] \,\,\Gamma = \Gamma + \left(\alpha \cdot \Gamma\right) E - \left(E \cdot \Gamma \right) \alpha + \left(E \cdot \Gamma \right)E \, .
\end{equation}
One can check that this formula is compatible with $\mathcal{M}[2\alpha]^{-1} = \mathcal{M}[-2\alpha]$, which we show in the next section. To each such automorphism, we associate the building block 
$\mathcal{D}[2\alpha]$ of flux 
$2\alpha$. As we have seen, these are all composed of Weyl reflections $w_\beta$ by roots $\beta \in \Lambda_{D_8}\setminus \Lambda_{A_7}$ and 
$\mathcal{M}[2\alpha_1] = \tau_{-\alpha_1}$, so that there is an associated automorphism of $X_S^f$ in each case and we glue to find the geometric realization of a flux torus.

Let us now elaborate on the local building blocks $w_{\alpha}^{\varepsilon}$ for $\alpha \in R$. In Appendix~\ref{sec: App Gluing Profiles} we prove that we can always uniquely write
\be \label{eq: unique way to rewrite Weyl}
    w_{\alpha} = \mu_{t,\mathbf{n}} \mu_s \mu_{t,\mathbf{n}} \, ,
\ee
for any $\alpha \in R$ and an automorphism of $H_2(S,\Z)$ which generalizes $\mu_{t,n}$. This result rests on the fact that for any such $\alpha$ we have
\be \label{eq: generic alpha in R}
    \alpha = \pm \Sigma \pm \left( \frac{1}{2}|\mathbf{n}|-1 \right)F\mp\sum_{i=1}^8  n_i M_i \, , \qquad \alpha \in R_{\pm} \, ,
\ee
for a unique $\mathbf{n} \in \mathcal{S} := \{ (n_i) \in \mathbb{Z}_2^8 \, | \, \sum_{i=1}^8 n_i = 0 \, \, \, \text{mod} \, \, 2 \}$. For each such $\mathbf{n} \in \mathcal{S}$, we can then define
\be \label{eq: more general mutn def}
\mu_{t,\mathbf{n}} :  \qquad 
\left\{\begin{aligned}
F &\rightarrow F'=F \\
 \Sigma &\rightarrow \Sigma' = \Sigma+\frac{1}{2}|\mathbf{n}|F-\sum_{i=1}^8 n_i M_i \\
M_i &\rightarrow M_i' = F-M_i \qquad \ \qquad \qquad \qquad  n_i \equiv 1 \\
M_i &\rightarrow M_i' = M_i\qquad \ \qquad \qquad \qquad \qquad  n_i \equiv 0 \, .
\end{aligned}\right.
\ee
We will therefore define the local building blocks associated to a Weyl reflection $w_{\alpha}$ for $\alpha \in R$ as
\be
    w_{\alpha}^{\varepsilon} := \mu_{t,\bold{n}}^{\varepsilon} \mu_s^{\varepsilon} \mu_{t,\bold{n}}^{\varepsilon} \, .
\ee
Note that compared to the quivers associated to $\mu_s=w_{\alpha_1}$, those associated to a generic Weyl reflection $w_\alpha$ contain $2|\bold{n}|$ extra light chirals. Once the building block $w_{\alpha}^{\varepsilon}$ is concatenated with the building block $\mu_{t,4}^{\varepsilon}$ as $(\mu_{t,4}^{\varepsilon} w_{\alpha}^{\varepsilon})$ however, there are no additional unexpected light states. The building block $(\mu_{t,4}^{\varepsilon} w_{\alpha}^{\varepsilon})$ gives rise to 8 light chiral multiplets in the quiver, as expected. To see this, we use the important result of Appendix~\ref{sec: App Gluing Profiles}
\be
    \mu_{t,\bold{n}}^{\varepsilon} \mu_{t,\bold{m}}^{\varepsilon}  = \mu_{t,\bold{n}+\bold{m}}^{\varepsilon} \, ,
\ee
that implies we can write $(\mu_{t,4}^{\varepsilon} w_{\alpha}^{\varepsilon})=(\mu_{t,(1^8)+\bold{n}}^{\varepsilon}\mu_s^{\varepsilon} \mu_{t,\bold{n}}^{\varepsilon})$, where the addition is defined component-wise mod 2. It is clear that the quiver associated to this building block contains 8 light chirals. Moreover, the mod 2 addition has an important physical interpretation: it captures the action of integrating out massive chirals. Indeed, if both $n_i,m_i=1$, then we have a pair of chirals charged under the same flavor symmetry but in conjugate representations, meaning they generate a superpotential mass term, so are integrated out in the IR effective theory. 

Using~\eqref{eq: unique way to rewrite Weyl}, and the further basic properties we explain in Appendix~\ref{sec: App Gluing Profiles}, one can always rewrite the building block 
$\mathcal{D}[2\alpha]$
as an alternating stacking of local building blocks $\mu_s^{\varepsilon}$ and $\mu_{t,\mathbf{n}}^{\varepsilon}$. We know the volume profile for the local building block $\mu_s^{\varepsilon}$ as shown in Figure~\ref{fig: mus local volumes}, and the volume profile for the local building block $\mu_{t,\mathbf{n}}^{\varepsilon}$ as shown in Figure~\ref{fig: local mutn dom wall} up to a relabeling of the curves $M_i$. Thus we always know the volume profiles for each local building block in 
$\mathcal{D}[2\alpha]$, meaning we can always glue such volume profiles together in the way demonstrated in Sections~\ref{sec: flux 2^8} and~\ref{sec: flux 2^2n}.

\subsection{Concatenating Building Blocks}\label{sect:concatenating}

Given that we managed to construct the geometric analogue for each flux torus of flux 
$2\alpha$ with $\alpha \in R$, it is tempting to speculate that these can be concatenated to realize more general fluxes. That this is indeed the case will be the purpose of the present section. We begin by stating the general result:
\begin{Theorem}[]
\emph{For any $\mathcal{F}$ such that 
$\mathcal{F}/2 \in \Lambda_{E_8} \subset H_2(X_S,\Z)$ there is an associated flux domain wall which acts on any element $\Gamma \in I_{1,9}$ as the Eichler-Siegel transformation
\begin{equation}\label{eq:fluxformula_general}
    \mathcal{M}[\mathcal{F}] \Gamma = \Gamma + \frac{1}{2} (\mathcal{F} \cdot \Gamma) E - \frac12(E \cdot \Gamma) \mathcal{F} - \frac18(E\cdot \Gamma) (\mathcal{F}\cdot \mathcal{F}) E\, ,
\end{equation}
and this map is induced from the action of an automorphism of $X_S^f$. Moreover, 
this map defines a group homomorphism
\be \label{eq: homomorphism property}
    \mathcal{M} : \left\{ \mathcal{F} \,\, |\,\, \mathcal{F}/2 \in \Lambda_{E_8} \right\} \rightarrow \mbox{Aut}(I_{1,9}) \, .
\ee
Here, we can think of $I_{1,9}$ as $H_2(S,\Z) \simeq H^2(X_S^f,\Z)$. }
\end{Theorem}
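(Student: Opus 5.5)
Define $\mathcal{M}[\mathcal{F}]$ by \eqref{eq:fluxformula_general} for every $\mathcal{F}$ with $\mathcal{F}/2\in\Lambda_{E_8}$. The proof then has three parts: an algebraic homomorphism statement, a check against the building blocks, and a generation argument that transports the geometric realisation from generators to the whole lattice.

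\textbf{Homomorphism.} Write $\mathcal{F}=2a$ with $a\in\Lambda_{E_8}$. Then $E\cdot a=0$, $E^2=0$ and $a^2\in 2\mathbb{Z}$, so \eqref{eq:fluxformula_general} becomes the Eichler--Siegel map
\begin{equation*}
T_a\Gamma=\Gamma+(a\cdot\Gamma)E-(E\cdot\Gamma)a-\tfrac12 a^2(E\cdot\Gamma)E .
\end{equation*}
This map preserves $I_{1,9}$ integrally. Expanding $(T_a\Gamma)^2$ and using $E\cdot a=E^2=0$, the cross terms $2(a\cdot\Gamma)(E\cdot\Gamma)$ and $-a^2(E\cdot\Gamma)^2$ cancel against $(E\cdot\Gamma)^2a^2-2(E\cdot\Gamma)(a\cdot\Gamma)$, so $T_a$ is an isometry. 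For composition, first note $E\cdot T_b\Gamma=E\cdot\Gamma$ and $a\cdot T_b\Gamma=a\cdot\Gamma-(E\cdot\Gamma)(a\cdot b)$. Substituting these gives
\begin{equation*}
T_aT_b\Gamma=\Gamma+((a+b)\cdot\Gamma)E-(E\cdot\Gamma)(a+b)-\tfrac12\big(a^2+2a\cdot b+b^2\big)(E\cdot\Gamma)E=T_{a+b}\Gamma ,
\end{equation*}
which is \eqref{eq: homomorphism property}. In particular $T_{-a}=T_a^{-1}$ and $T_0=1$.

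\textbf{Consistency with building blocks.} For $\alpha\in R$ we have $\alpha^2=-2$, so $\mathcal{M}[2\alpha]$ from \eqref{eq:fluxformula_general} reproduces \eqref{eq:fluxformula_alpha}. As special cases, $\mathcal{M}[4\alpha_1]$ matches \eqref{eq: Action of M[-4a]}, and taking $a=\mathcal{F}_3/2$ recovers the $n=3$ formula. Section~\ref{sect:elemfluxtubes} already shows that each $\mathcal{M}[2\alpha]$ with $\alpha\in R$ is induced by an automorphism of $X_S^f$, as a composite of $\tau_{-\alpha_1}$ with Weyl reflections $w_\beta$, $\beta\in\Lambda_{D_8}\setminus\Lambda_{A_7}$. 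The associated domain-wall configuration is $\mathcal{D}[2\alpha]$.

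\textbf{Generation, the main obstacle.} It remains to show that every $a\in\Lambda_{E_8}$ is a $\mathbb{Z}$-combination $\sum_k\epsilon_k\alpha^{(k)}$ of roots in $R=R_+\cup R_-$. Once this holds, the homomorphism property gives
\begin{equation*}
\mathcal{M}[2a]=\prod_k\mathcal{M}[2\epsilon_k\alpha^{(k)}] ,
\end{equation*}
which is a composite of automorphisms of $X_S^f$ and is realised by concatenating the corresponding $\mathcal{D}[2\alpha^{(k)}]$. The flux domain wall is then obtained by expanding each piece into $\mu_s^\varepsilon$ and $\mu_{t,\mathbf n}^\varepsilon$ blocks using \eqref{eq: unique way to rewrite Weyl}. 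The elements of $R$ are the roots with $F\cdot\alpha=\pm1$, i.e.\ by \eqref{eq:E8roots_so16weights} the 128 spinor weights of $\mathfrak{so}(16)$. I would first try to show directly that these span $\Lambda_{E_8}$: $\alpha_1\in R_+$, and the difference of two spinor weights differing in two signs is a $D_8$ root $\pm e_i\pm e_j$, so every $\beta_i$ is reached. Hence $\langle R\rangle\supseteq\Lambda_{D_8}+\mathbb{Z}\alpha_1=\Lambda_{E_8}$. Finally, one should note that the resulting automorphism of $X_S^f$ depends on the chosen decomposition only through its action on homology. This is the sense in which the statement says the map is ``induced'', and I would remark that independence at the level of the geometry itself is the conjectural input on automorphisms already assumed in Section~\ref{sec: Geometry of XS}.
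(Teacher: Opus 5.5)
Your route is the paper's. You check the formula on the building blocks $\mathcal{M}[2\alpha]$ with $\alpha\in R$. You use that Eichler--Siegel maps are integral isometries satisfying $\mathcal{M}[\mathcal{F}]\mathcal{M}[\mathcal{F}']=\mathcal{M}[\mathcal{F}+\mathcal{F}']$; you verify this explicitly, where the paper calls it a short computation and cites it as standard. Finally you show that $R$ spans $\Lambda_{E_8}$, so every $\mathcal{M}[\mathcal{F}]$ is a product of building-block automorphisms.

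One step is false as written: $\Lambda_{D_8}=\langle\beta_1,\dots,\beta_8\rangle$ is not contained in $\Lambda_{E_8}=\langle\alpha_1,\dots,\alpha_8\rangle$.
\begin{itemize}
\item None of $\alpha_1,\dots,\alpha_8$ involves $M_1$, so no element of $\Lambda_{E_8}\supseteq\langle R\rangle$ does. But $\beta_1=M_2-M_1=\alpha_0$ does.
\item In fact $\Lambda_{D_8}+\mathbb{Z}\alpha_1=\langle\alpha_0,\dots,\alpha_8\rangle=\Lambda_{E_8}\oplus\mathbb{Z}E$. So both ``$\langle R\rangle\supseteq\Lambda_{D_8}+\mathbb{Z}\alpha_1$'' and ``$\Lambda_{D_8}+\mathbb{Z}\alpha_1=\Lambda_{E_8}$'' fail.
\item Differences of elements of $R$ produce the $\beta_i$ only up to multiples of $E$. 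For $\beta_1$, what lies in $\langle R\rangle$ is $\beta_1-E$.
\end{itemize}

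The fix is the paper's. Run your spinor-difference argument on the weights $N(R)$, which then span the $E_8$ lattice $N(\Lambda_{E_8})$. Then pull back using $R\subset\Lambda_{E_8}$ and the injectivity of $N$ on $\Lambda_{E_8}$.

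Alternatively, your composition computation only used $E\cdot a=E\cdot b=0$ and $E^2=0$, and one checks $T_{a+kE}=T_a$. So a decomposition valid only modulo $E$ would already give $\mathcal{M}[2a]$ as a product of building-block maps.
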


\proof: 

Comparing with \eqref{eq:fluxformula_alpha} immediately shows that the above formula works for any
$\mathcal{F} = 2 \alpha$ with $\alpha \in R$. Furthermore, a short computation shows that $\mathcal{M}[\mathcal{F}]$ acts as an isometry on 
$I_{1,9}$:
\begin{equation}
    A \cdot B = (\mathcal{M}[\mathcal{F}] A ) \cdot (\mathcal{M}[\mathcal{F}] B)
\end{equation}
for all $A,B \in I_{1,9}$. Now for any $\mathcal{F},\mathcal{F}'$ with $\mathcal{F}/2 $ and $\mathcal{F}'/2$ $ \in \Lambda_{E_8}$ one can now work out 
\begin{equation}
\begin{aligned}
        \mathcal{M}[\mathcal{F}]\mathcal{M}[\mathcal{F}'] \Gamma = \mathcal{M}[\mathcal{F} + \mathcal{F}']\, \Gamma
\end{aligned}
\end{equation}
which shows the homomorphism property. These two are in fact well-known properties of Eichler-Siegel transformations, see e.g. \cite{heckman2001moduli} for a discussion in the context of rational elliptic surfaces.

As a consequence, we can decompose
\begin{equation}
    \mathcal{M}[\mathcal{F}] = \prod_i \mathcal{M}[2 \alpha_i] \, ,
\end{equation}
and this expression is well-defined for any flux of the form\footnote{We allow the case where $\alpha_i = \alpha_j$, so that $\mathcal{F}$ can be any linear combination of roots in $R$.}  
$\mathcal{F} = \sum_i 2 \alpha_i$ with $\alpha_i \in R$, and acts on $I_{1,9}$ as \eqref{eq:fluxformula_general}. We have already argued that 
$\mathcal{M}[2 \alpha]$ for $\alpha \in R$ is induced by an automorphism, so that the composition of these gives an automorphism inducing $\mathcal{M}[\mathcal{F}]$.

It only remains to be shown that the $\Z$-span of the set $R$ is $\Lambda_{E_8}$. For convenience, let us pass to the representation theoretic description, where we use the $\mathfrak{so}(16)$ weight basis to write any element 
$\alpha \in R$ as
\begin{equation}
\alpha = N^{-1}(\ell) \, , \qquad  \ell = \left(\ell_1,\ell_2,\ell_3,\ell_4,\ell_5,\ell_6,\ell_7,\ell_8 \right) \in \Lambda_w(\mathfrak{so}(16))\, .
\end{equation}

The fluxes $N(R)$ realize all such elements with the property that $\ell_i = \pm \tfrac12$ and $\sum \ell_k \in 2 \Z$. As any root of $E_8$ can be written as a linear combination of such roots, the $\Z$-span of $R$ is hence equal to $\Lambda_{E_8}$. It follows that we can build domain walls for any flux $\mathcal{F}$ such that $\mathcal{F}/2 \in \Lambda_{E_8}$. Given any weight of $\Lambda_w(\mathfrak{so}(16))$, the associated flux $\mathcal{F}$ is furthermore unique since the kernel of the map $N$ trivializes when restricted to $\Lambda_{E_8}$.\begin{flushright}\qedsymbol{} \end{flushright}

As a consequence of the above, we can close up the circle and find a variety $\mathcal{T}[\mathcal{F}]$ as a circle fibration of $X_S^f$ with monodromy 
$\mathcal{M}[\mathcal{F}]$ for any $\mathcal{F}$ with $\mathcal{F}/2 \in \Lambda_{E_8}$. We can describe the building blocks out of which this is formed by expanding 
$\mathcal{F}=\sum_{i}2 \alpha_i$ for $\alpha_i \in R$, and concatenating the associate building blocks: 
\be
    \mathcal{D}[\mathcal{F}] :=\prod_i \mathcal{D}[2\alpha_i] \, ,
\ee
which in turn allows to reconstruct the associated quiver. One immediate consequence of this is that the quivers have an even number of chirals charged under each gauge node, meaning it has vanishing Witten anomaly~\cite{Witten:1982fp}. 

As compactification of M-Theory on $\mathcal{T}[\mathcal{F}]$ geometrically engineers a 4d $\mathcal{N}=1$ field theory, it is natural to conjecture that the topological space $\mathcal{T}[\mathcal{F}]$ admits a $G_2$ metric. 

Determining the roots invariant under the automorphism 
$\mathcal{M}[2\alpha]$ is very simple as well. Observe that~\eqref{eq:fluxformula_general} simplifies when acting on a root $\gamma \in \Lambda_{E_8} \subset H_2(X_S,\mathbb{Z})$ where it acts as
\be
    \mathcal{M}[\mathcal{F}]\gamma = \gamma + \left( \frac{\mathcal{F}}{2}\cdot \gamma \right) E \, .
\ee
Hence the lattice of invariant roots of $\Lambda_{E_8}$ is always given by $\mathcal{F}^{\perp}$, which is exactly what is expected for compactification on a torus with flux $N(\mathcal{F})$. 

The fluxes $\mathcal{F}$ such $\mathcal{F}/2 \in \Lambda_{E_8}$ are precisely non-center fluxes. These are fluxes breaking $E_8\to H\times U(1)^k$, where $H$ denotes the non-abelian part of the residual subgroup, which corresponds to fluxes in $U(1)^k$ but not the center of $H$. In the present work we only focus on such non-center fluxes, however we point out that there exist more general fluxes in $\Lambda_{E_8}$. For example, we have already discussed that
\begin{equation}
    \left(-\frac{1}{2},-\frac{1}{2},-\frac{1}{2},-\frac{1}{2},-\frac{1}{2},-\frac{1}{2},-\frac{1}{2},-\frac{1}{2}\right)
\end{equation}
breaks
\begin{equation}
    E_8\to E_7\times U(1)\,.
\end{equation}
In terms of this subgroup, this flux corresponds to a fractional flux in the $U(1)$ that needs to be compensated by a flux in the center of $E_7$, which can be characterized by a non-trivial Stiefel-Whitney class in $E_7/\Z_2$. As a consequence, the actual preserved symmetry is smaller. In this case, we can at most preserve an $F_4$ subgroup of $E_7$ \cite{Kim:2017toz}. It would be interesting to investigate how these more general fluxes appear in our construction.

\subsection{Examples} \label{sec: Examples}

As examples, let us show that how to construct the field theory fluxes $(-2^8),(-2^6,0^2),(-2^4,0^4)$, and $(-2^2,0^6)$ and how to determine the preserved flavor symmetries using the domain wall prescription to see that it coincides with the geometric understanding we developed in Sections~\ref{sec: flux 2^8} and ~\ref{sec: flux 2^2n}. 

\paragraph{Flux $(-2^8)$.} We have already analyzed the monodromy map 
$\mathcal{M}[4\alpha_1] = \mathcal{M}[2\alpha_1]^2$. Indeed, since 
$2\alpha_1\in R$, the geometric flux 
$\mathcal{F}_4=4\alpha_1$ is the correct geometric flux to realize $N(\mathcal{F}_4)=(-2^8)$. Moreover, we verified that the monodromy map had the simple action~\eqref{eq: Action of M[-4a]}, in agreement with~\eqref{eq:fluxformula_general}. We have that $(\mathcal{F}_4)^{\perp} \cong E_7$ as before.

\paragraph{Flux $(-2^6,0^2)$.} We can decompose the flux $(-2^6,0^2)$ as
\begin{equation}
    (-2^6,0^2) = 2 \left( -\frac{1}{2}^8 \right) + 2 \left( -\frac{1}{2}^6, \frac{1}{2}^2 \right) \, .
\end{equation}
Clearly we would associate the root 
$2\alpha_1$ to the first term. For the second term, we associate it to the root
$2\hat{\alpha}_3$ such that 
$N(2 \alpha_1+2\hat{\alpha}_3)=(-2^6,0^2)$, which in this case is given explicitly by
\begin{equation} \label{eq:hat alpha 3}
   \hat{\alpha}_3  = \Sigma-M_7-M_8 \, .
\end{equation}
The geometric flux 
$\mathcal{F}_{3}:=2 \alpha_1+2\hat{\alpha}_3$ corresponds to the $E_8$ element $N(\mathcal{F}_{3})=(-2^6,0^2)$, and by the homomorphism~\eqref{eq: homomorphism property}, the monodromy map is given by
\be
   \mathcal{M}[\mathcal{F}_{3}] =  \mathcal{M}[2\hat{\alpha}_3] \mathcal{M}[2\alpha_1] \, .
\ee   
An explicit computation shows that
\begin{equation}
    \mu_{t,4}\mu_{t,3}\mu_s \mu_{t,3}\mu_{t,4} = w_{\Sigma-M_7-M_8} \, ,
\end{equation}
which we insert into~\eqref{eq: Rplus building block} to derive
\be
    \mathcal{M}[\mathcal{F}_3] = (\mu_s \mu_{t,4} \mu_s \mu_{t,3})^2 \, .
\ee
Thus we recover the monodromy map used in Section~\ref{sec: flux 2^2n} to close the flux torus, and the preserved flavor symmetry $(\mathcal{F}_3)^{\perp} \cong E_6 \times A_1$.

\paragraph{Flux $(-2^4,0^4)$.} We can decompose the flux $(-2^4,0^4)$ as
\be
(-2^4,0^4)=2(-\frac{1}{2}^8)+2(-\frac{1}{2}^4,\frac{1}{2}^4)
\ee
and so associate the root 
$2\hat{\alpha}_2$ to the second term, with $\hat{\alpha}_2$ given by
\begin{equation} \label{eq:hat alpha 2}
    \hat{\alpha}_2=\Sigma+F-\sum_{i=5}^8 M_i\, .
\end{equation}
We therefore associate the geometric flux 
$\mathcal{F}_{2}=2\alpha_1+2\hat{\alpha}_2$, whose $E_8$ element is $N(\mathcal{F}_{2})=(-2^4,0^4)$. The associated monodromy map for this flux is
\be
    \mathcal{M}[\mathcal{F}_{2}] = \mathcal{M}[2\hat{\alpha}_2]\mathcal{M}[2\alpha_1]\, .
\ee    
We can further verify that
\begin{equation}
    \mu_{t,4}\mu_{t,2}\mu_s \mu_{t,2}\mu_{t,4} = w_{ \Sigma+F-\sum_{i=5}^8 M_i} \, ,
\end{equation}
and use~\eqref{eq: Rplus building block} to derive
\be
    \mathcal{M}[\mathcal{F}_2] = (\mu_{s} \mu_{t,4} \mu_s \mu_{t,2})^2 \, .
\ee
This monodromy map corresponds to the quiver constructed by closing the building block $\mathcal{D}[\mathcal{F}_2]=(\mu_{t,2}^{\varepsilon}\mu_s^{\varepsilon}\mu_{t,4}^{\varepsilon}\mu_s^{\varepsilon}  )^2 $, as shown in Figure~\ref{fig: n3 volumes and quiver closed}. Using this flux, we can determine the invariant sublattice of $\Lambda_{E_8}$ is
\be 
   (\mathcal{F}_{2})^{\perp}= \langle 
   \alpha_1+\alpha_2+\alpha_3+2\alpha_4+2\alpha_5+2\alpha_6+\alpha_7,\alpha_2,\alpha_3,\alpha_4,\alpha_5,\alpha_7,\alpha_8 \rangle_{\mathbb{Z}} \, .
\ee
The inner form generated by this lattice is the Cartan matrix of $D_7$, which is exactly the enhanced flavor symmetry~\eqref{eq:preserved flavor symms} appropriate for this quiver theory.

\paragraph{Flux $(-2^2,0^6)$.} To conclude our list of examples, we consider the flux 
\be
(-2^2,0^6)=2 (-\frac{1}{2}^8)+2(-\frac{1}{2}^2,\frac{1}{2}^6)\,.
\ee
To the second entry, we associate the root
\begin{equation} \label{eq:hat alpha 1}
    \hat{\alpha}_1 = \Sigma+2F-\sum_{i=3}^8 M_i \, .
\end{equation}
For the geometric flux 
$\mathcal{F}_{1}=2\alpha_1+2\hat{\alpha}_1$, with corresponding $E_8$ element $N(\mathcal{F}_{1})=(-2^2,0^6)$, we associate to it the monodromy map
\be
    \mathcal{M}[\mathcal{F}_{1}] =\mathcal{M}[2\hat{\alpha}_1] \mathcal{M}[2\alpha_1]\, .
\ee
We can further verify that
\begin{equation}
    \mu_{t,4}\mu_{t,1}\mu_s \mu_{t,1}\mu_{t,4}=w_{\Sigma+2F-\sum_{i=3}^8 M_i} \, ,
\end{equation}
meaning we can similarly write the monodromy map as
\be
    \mathcal{M}[\mathcal{F}_{1}] =(\mu_{s} \mu_{t,4} \mu_s \mu_{t,1})^2 \, .
\ee
This monodromy map is appropriate to close up the building block $\mathcal{D}[\mathcal{F}_1]=(\mu_{t,1}^{\varepsilon}\mu_s^{\varepsilon}\mu_{t,4}^{\varepsilon}\mu_s^{\varepsilon}  )^2 $. The preserved flavor symmetry for this is given by the lattice of roots orthogonal to $\mathcal{F}_1$
\be \label{eq: n=1 lattice of inv roots}
    (\mathcal{F}_{1})^{\perp} = \langle \alpha_1+\alpha_3+\alpha_4+\alpha_5+\alpha_6+\alpha_7+\alpha_8,\alpha_2,\alpha_3,\alpha_4,\alpha_5,\alpha_6,\alpha_7 \rangle_{\mathbb{Z}} \, ,
\ee
which generates the Cartan matrix of $E_7$. This is exactly the enhanced flavor symmetry~\eqref{eq:preserved flavor symms} appropriate for the quiver theory constructed in Figure~\ref{fig: n3 volumes and quiver closed} for $n=1$ \\

Note that comparing the invariant root lattices $(\mathcal{F}_{4})^{\perp}$ and $(\mathcal{F}_{1})^{\perp}$ we see that despite them both being isomorphic to an $E_7$ root lattice, they are spanned by different invariant roots, and therefore correspond to different $E_7$ flavor symmetries. We comment on this isomorphism from the perspective of IR dualities next.

\subsection{IR Duality from Geometry}

The IR duality discussed in Section~\ref{sec: Field Theory sec2} can be seen geometrically by relating the monodromy maps $\mathcal{M}[\mathcal{F}_{1}]$ and 
$\mathcal{M}[2\alpha_1]$ by a Weyl reflection. This captures the identification between the fluxes $(-1^8)$ and $(-2^2,0^6)$. Take the simple root
\be
    \alpha_{+} = \Sigma+2 F - \sum_{i=3}^8 M_i \in R_+\, .
\ee
Then one can show that the two fluxes are obtained by the action of the Weyl reflection 
$w_{\alpha_+}(\mathcal{F}_{1})=2\alpha_1$, meaning that the two $E_7$ sublattices are related by this Weyl transformation. Applying this Weyl reflection to the invariant roots explicitly gives
\be
\ba
    w_{\alpha_+}(\mathcal{F}_{1}^{\perp}) = \langle &2 \alpha_1+2 \alpha_2+4\alpha_3+5\alpha_4+4 \alpha_5+3\alpha_6+2\alpha_7+\alpha_8, \alpha_2\cr
    &-\alpha_1-2\alpha_2-2\alpha_3-4\alpha_4-3 \alpha_5-2\alpha_6-\alpha_7,\alpha_4,\alpha_5,\alpha_6,\alpha_7  \rangle_{\mathbb{Z}} \, .
\ea
\ee
One can then show that by suitable linear combinations of the simple roots we can write the above as
\be
    w_{\alpha_+}(\mathcal{F}_{1}^{\perp}) =  \langle \alpha_1+\alpha_2+2\alpha_3+2 \alpha_4+\alpha_5, \alpha_2,\alpha_4,\alpha_5,\alpha_6,\alpha_7,\alpha_8 \rangle_{\mathbb{Z}} = (2\alpha_1)^{\perp}  \, .
\ee
To understand how the geometric maps are related, it can be verified that 
\be \label{eq: Seiberg duality for flux maps}
    w_{\alpha_+} \mathcal{M}[\mathcal{F}_{1}] w_{\alpha_+} = \mathcal{M}[2\alpha_1] \, ,
\ee
which can be done using~\eqref{eq:conjfluxformula_alpha1} and~\eqref{eq:fluxformula_alpha}, establishing that the two isomorphisms are in fact related by conjugation by a Weyl reflection.

One can actually understand the application of Seiberg duality at the level of local geometry. Observe from Figure~\ref{fig:Seiberg} that the local geometries of each quiver correspond to the local building blocks $\mu_s^{\varepsilon} \mu_{t,1}^{\varepsilon}\mu_s^{\varepsilon}$ and $\mu_{t,1}^{\varepsilon}\mu_s^{\varepsilon} \mu_{t,1}^{\varepsilon}$, respectively, as shown in Figure~\ref{fig: local Seiberg duality geometry}. The corresponding automorphisms are related by a Weyl reflection with respect to the $\mathfrak{so}(16)$ simple root $\beta_1=M_2-M_1$
\be \label{eq: Local geometric Seiberg}
    \mu_s \mu_{t,1} \mu_s = w_{\beta_1} \mu_{t,1} \mu_s \mu_{t,1} \, .
\ee
The action of this Weyl reflection swaps $M_1 \leftrightarrow M_2$ and leaves all other curves invariant. More generally, one can understand local Seiberg duality by the identity
\be
    \mu_s \mu_{t,\bold{n}} \mu_s = w_{\beta(\bold{n})} \mu_{t,\bold{n}} \mu_s \mu_{t,\bold{n}} \, , \qquad |\bold{n}|=2 \, ,
\ee
where we have defined $\beta(\bold{n})=M_i-M_j$ with $i,j$ labeling the two non-zero components of $\bold{n}$.

\begin{figure}
    \centering
    \begin{tikzpicture}[scale=0.6, transform shape]
        \begin{scope}[shift={(0,0)}]

    \filldraw[color=red, fill=red!20, thick] (-5.5,-1.4) ellipse (0.3 and 0.59);
    \draw[red, thin] (-5.5,-1.4) ellipse (0.3 and 0.13);

    \filldraw[color=red, fill=red!20, thick] (-4,-1.02) ellipse (0.45 and 0.98);
    \draw[red, thin] (-4,-1.02) ellipse (0.45 and 0.15);

    \filldraw[color=red, fill=red!20, thick] (-2.5,-1.4) ellipse (0.3 and 0.6);
    \draw[red, thin] (-2.5,-1.4) ellipse (0.3 and 0.13);

    \filldraw[color=red, fill=red!20, thick] (-1,-1.76) ellipse (0.1 and 0.24);
    \draw[red, thin] (-1,-1.78) ellipse (0.1 and 0.05);

    \filldraw[color=red, fill=red!20, thick] (5.5,-1.4) ellipse (0.3 and 0.59);
    \draw[red, thin] (5.5,-1.4) ellipse (0.3 and 0.13);

    \filldraw[color=red, fill=red!20, thick] (4,-1.02) ellipse (0.45 and 0.98);
    \draw[red, thin] (4,-1.02) ellipse (0.45 and 0.15);

    \filldraw[color=red, fill=red!20, thick] (2.5,-1.4) ellipse (0.3 and 0.6);
    \draw[red, thin] (2.5,-1.4) ellipse (0.3 and 0.13);

    \filldraw[color=red, fill=red!20, thick] (1,-1.76) ellipse (0.1 and 0.24);
    \draw[red, thin] (1,-1.78) ellipse (0.1 and 0.05);


    \filldraw[color=orange, fill=orange!20, thick] (-5.5,1.92) ellipse (0.42 and 1.08);
    \draw[orange, thin] (-5.5,1.92) ellipse (0.42 and 0.18);

    \filldraw[color=orange, fill=orange!20, thick] (-4,1.53) ellipse (0.52 and 1.47);
    \draw[orange, thin] (-4,1.72) ellipse (0.52 and 0.22);

    \filldraw[color=orange, fill=orange!20, thick] (-2.5,1.9) ellipse (0.42 and 1.1);
    \draw[orange, thin] (-2.5,1.9) ellipse (0.42 and 0.18);

    \filldraw[color=orange, fill=orange!20, thick] (-1,2.26) ellipse (0.3 and 0.74);
    \draw[orange, thin] (-1,2.26) ellipse (0.3 and 0.14);

    \filldraw[color=orange, fill=orange!20, thick] (5.5,1.92) ellipse (0.42 and 1.08);
    \draw[orange, thin] (5.5,1.92) ellipse (0.42 and 0.18);

    \filldraw[color=orange, fill=orange!20, thick] (4,1.53) ellipse (0.52 and 1.47);
    \draw[orange, thin] (4,1.72) ellipse (0.52 and 0.22);

    \filldraw[color=orange, fill=orange!20, thick] (2.5,1.9) ellipse (0.42 and 1.1);
    \draw[orange, thin] (2.5,1.9) ellipse (0.42 and 0.18);

    \filldraw[color=orange, fill=orange!20, thick] (1,2.26) ellipse (0.3 and 0.74);
    \draw[orange, thin] (1,2.26) ellipse (0.3 and 0.14);

    \filldraw[color=blue, fill=blue!20, thick] (-5.5,0) ellipse (0.46 and 0.74);
    \draw[blue, thin] (-5.5,0) ellipse (0.46 and 0.16);

    \filldraw[color=blue, fill=blue!20, thick] (-2.5,0) ellipse (0.46 and 0.71);
    \draw[blue, thin] (-2.5,0) ellipse (0.46 and 0.16);

    \filldraw[color=blue, fill=blue!20, thick] (-1,0) ellipse (0.62 and 1.46);
    \draw[blue, thin] (-1,0) ellipse (0.62 and 0.22);

    \filldraw[color=blue, fill=blue!20, thick] (1,0) ellipse (0.62 and 1.46);
    \draw[blue, thin] (1,0) ellipse (0.62 and 0.20);

    \filldraw[color=blue, fill=blue!20, thick] (2.5,0) ellipse (0.46 and 0.71);
    \draw[blue, thin] (2.5,0) ellipse (0.46 and 0.15);

    \filldraw[color=blue, fill=blue!20, thick] (5.5,0) ellipse (0.46 and 0.74);
    \draw[blue, thin] (5.5,0) ellipse (0.46 and 0.16);


    \draw[black, thick, ->] (-6.25,0) -- (6.25,0);
    \draw[mygreen, ultra thick] (-5.85,0) -- (5.85,0);

    \draw[red, thick] (-5.85,-2) -- (5.85,-2);
    
    \draw[orange, thick] (-5.85,3) -- (5.85,3);


    \draw[blue, thick] (-5.85,0.975) -- (-4.00,0);
    \draw[blue, thick] (-5.85,-0.975) -- (-4.00,0);

    \draw[blue, thick] (-4.00,0) -- (0.00,2.00);
    \draw[blue, thick] (0.00,2.00) -- (4.00,0);

    \draw[blue, thick] (-4.00,0) -- (0.00,-2.00);
    \draw[blue, thick] (0.00,-2.00) -- (4.00,0);

    \draw[blue, thick] (4.00,0) -- (5.85,0.975);
    \draw[blue, thick] (4.00,0) -- (5.85,-0.975);


    \filldraw[black] (-4.00,0) circle (0pt) node{\huge{$\times$}};
    \filldraw[black] (0.00,0) circle (0pt) node{\huge{$\times$}};
    \filldraw[black] (4.00,0) circle (0pt) node{\huge{$\times$}};


    \filldraw[black] (0,0.2) circle (0pt) node[anchor=south]{\huge{$\mu_{t,1}$}};

    \filldraw[black] (-4,0.2) circle (0pt) node[anchor=south]{\huge{$\mu_{s}$}};

    \filldraw[black] (4,0.2) circle (0pt) node[anchor=south]{\huge{$\mu_{s}$}};

\end{scope}


        \begin{scope}[shift={(14,0)}]


    \filldraw[color=red, fill=red!20, thick] (-5.5,-1.64) ellipse (0.2 and 0.36);
    \draw[red, thin] (-5.5,-1.64) ellipse (0.2 and 0.08);

    \filldraw[color=red, fill=red!20, thick] (-2.5,-1.64) ellipse (0.2 and 0.36);
    \draw[red, thin] (-2.5,-1.64) ellipse (0.2 and 0.08);

    \filldraw[color=red, fill=red!20, thick] (-1,-1.27) ellipse (0.28 and 0.73);
    \draw[red, thin] (-1,-1.27) ellipse (0.28 and 0.11);

    \filldraw[color=red, fill=red!20, thick] (5.5,-1.64) ellipse (0.2 and 0.36);
    \draw[red, thin] (5.5,-1.64) ellipse (0.2 and 0.08);

    \filldraw[color=red, fill=red!20, thick] (2.5,-1.64) ellipse (0.2 and 0.36);
    \draw[red, thin] (2.5,-1.64) ellipse (0.2 and 0.08);

    \filldraw[color=red, fill=red!20, thick] (1,-1.27) ellipse (0.28 and 0.73);
    \draw[red, thin] (1,-1.27) ellipse (0.28 and 0.11);

     \filldraw[color=orange, fill=orange!20, thick] (-5.5,2.13) ellipse (0.3 and 0.87);
    \draw[orange, thin] (-5.5,2.13) ellipse (0.3 and 0.14);

    \filldraw[color=orange, fill=orange!20, thick] (-4,2.5) ellipse (0.22 and 0.5);
    \draw[orange, thin] (-4,2.5) ellipse (0.22 and 0.1);

     \filldraw[color=orange, fill=orange!20, thick] (-2.5,2.13) ellipse (0.3 and 0.87);
    \draw[orange, thin] (-2.5,2.13) ellipse (0.3 and 0.14);

    \filldraw[color=orange, fill=orange!20, thick] (-1,1.77) ellipse (0.5 and 1.23);
    \draw[orange, thin] (-1,1.77) ellipse (0.5 and 0.18);

     \filldraw[color=orange, fill=orange!20, thick] (5.5,2.13) ellipse (0.3 and 0.87);
    \draw[orange, thin] (5.5,2.13) ellipse (0.3 and 0.14);

    \filldraw[color=orange, fill=orange!20, thick] (4,2.5) ellipse (0.22 and 0.5);
    \draw[orange, thin] (4,2.5) ellipse (0.22 and 0.1);

     \filldraw[color=orange, fill=orange!20, thick] (2.5,2.13) ellipse (0.3 and 0.87);
    \draw[orange, thin] (2.5,2.13) ellipse (0.3 and 0.14);

    \filldraw[color=orange, fill=orange!20, thick] (1,1.77) ellipse (0.5 and 1.23);
    \draw[orange, thin] (1,1.77) ellipse (0.5 and 0.18);


    \filldraw[color=blue, fill=blue!20, thick] (-5.5,0) ellipse (0.46 and 1.23);
    \draw[blue, thin] (-5.5,0) ellipse (0.46 and 0.14);

    \filldraw[color=blue, fill=blue!20, thick] (-2.5,0) ellipse (0.46 and 1.23);
    \draw[blue, thin] (-2.5,0) ellipse (0.46 and 0.16);
    
    \filldraw[color=blue, fill=blue!20, thick] (-1,0) ellipse (0.3 and 0.48);
    \draw[blue, thin] (-1,0) ellipse (0.3 and 0.1);

    \filldraw[color=blue, fill=blue!20, thick] (-4,0) ellipse (0.7 and 1.95);
    \draw[blue, thin] (-4,0) ellipse (0.7 and 0.22);

    \filldraw[color=blue, fill=blue!20, thick] (5.5,0) ellipse (0.46 and 1.23);
    \draw[blue, thin] (5.5,0) ellipse (0.46 and 0.14);

    \filldraw[color=blue, fill=blue!20, thick] (2.5,0) ellipse (0.46 and 1.23);
    \draw[blue, thin] (2.5,0) ellipse (0.46 and 0.16);
    
    \filldraw[color=blue, fill=blue!20, thick] (1,0) ellipse (0.3 and 0.48);
    \draw[blue, thin] (1,0) ellipse (0.3 and 0.1);

    \filldraw[color=blue, fill=blue!20, thick] (4,0) ellipse (0.7 and 1.95);
    \draw[blue, thin] (4,0) ellipse (0.7 and 0.22);


    \draw[black, thick, ->] (-6.25,0) -- (6.25,0);
    \draw[mygreen, ultra thick] (-5.85,0) -- (5.85,0);

    \draw[red, thick] (-5.85,-2) -- (5.85,-2);
    
    \draw[orange, thick] (-5.85,3) -- (5.85,3);


       \draw[blue, thick] (-5.85,1.075) -- (-4.00,2);
    \draw[blue, thick] (-5.85,-1.075) -- (-4.00,-2);

    \draw[blue, thick] (-4.00,2) -- (0.00,0);
    \draw[blue, thick] (0.00,0) -- (4.00,2);

    \draw[blue, thick] (-4.00,-2) -- (0.00,0);
    \draw[blue, thick] (0.00,0) -- (4.00,-2);

    \draw[blue, thick] (4.00,2) -- (5.85,1.075);
    \draw[blue, thick] (4.00,-2) -- (5.85,-1.075);


    \filldraw[black] (-4.00,0) circle (0pt) node{\huge{$\times$}};
    \filldraw[black] (0.00,0) circle (0pt) node{\huge{$\times$}};
    \filldraw[black] (4.00,0) circle (0pt) node{\huge{$\times$}};


    \filldraw[black] (0,0.2) circle (0pt) node[anchor=south]{\huge{$\mu_{s}$}};

    \filldraw[black] (-4,0.2) circle (0pt) node[anchor=south]{\huge{$\mu_{t,1}$}};

    \filldraw[black] (4,0.2) circle (0pt) node[anchor=south]{\huge{$\mu_{t,1}$}};
\end{scope}


\begin{scope}[shift={(0,-6)}]

    \tikzstyle{every node}=[font=\scriptsize]
    \draw[thick] (-0.5,-1.5) rectangle (0.5,-0.5);
    \filldraw[black] (0,-1) circle (0pt) node{\huge{$2$}};
    \filldraw[black] (0,1.5) circle (0pt) node{\huge{$2$}};
    \draw[thick](0,1.5) circle (0.5);
    \draw[thick] (0,1) -- (0,-0.5) ;
       
    \draw[thick]  (-6,1) arc (-90:90:0.5);
    \draw[thick]  (6,1) arc (270:90:0.5);

    \draw[thick] (-5.5,1.5) -- (-0.5,1.5);
    \draw[thick] (0.5,1.5) -- (5.5,1.5);
    \node at (-4,1.5) {\huge $\times$};
    \node at (4,1.5) {\huge $\times$};

\node at (-6,1.5) {\huge $2$};
\node at (6,1.5) {\huge $2$};

\node[anchor=west] at (6,2) {\Large $\cdots$};
\node[anchor=west] at (6,1) {\Large $\cdots$};

\node[anchor=east] at (-6,2) {\Large $\cdots$};
\node[anchor=east] at (-6,1) {\Large $\cdots$};
\end{scope}
\begin{scope}[shift={(14,-6)}]

    \tikzstyle{every node}=[font=\scriptsize]
    \draw[thick] (-0.5,-1.5) rectangle (0.5,-0.5);
    \filldraw[black] (0,-1) circle (0pt) node{\huge{$2$}};
       
    \draw[thick]  (-4,1) arc (-90:90:0.5);
    \draw[thick]  (4,1) arc (270:90:0.5);

    \draw[thick] (-3.5,1.5) -- (3.5,1.5);
    \node at (0,1.5) {\huge $\times$};
    \draw[thick] (-3.6,1.22) -- (-0.5,-1);
    \draw[thick] (3.6,1.22) -- (0.5,-1);

\node at (-4,1.5) {\huge $2$};
\node at (4,1.5) {\huge $2$};

\node[anchor=west] at (4,2) {\Large $\cdots$};
\node[anchor=west] at (4,1) {\Large $\cdots$};

\node[anchor=east] at (-4,2) {\Large $\cdots$};
\node[anchor=east] at (-4,1) {\Large $\cdots$};

\end{scope}

    \end{tikzpicture}
    \caption{The local geometries interchanged by application of Seiberg duality. }
    \label{fig: local Seiberg duality geometry}
\end{figure}

We can spell out the flux $(-1^8)\leftrightarrow (-2^2,0^6)$ example using the local application of Seiberg duality to show that the field theoretic computation in Figure~\ref{fig:Seiberg n=14} agrees with the geometric calculation. We will do this at the level of the unfolded quiver. Observe that we have the following two applications of Seiberg duality on the unfolded quiver shown in Figure~\ref{fig:2 Seibergs}.
\begin{figure}
    \centering
\begin{tikzpicture}[scale=0.5, transform shape]
    
\begin{scope}[shift={(0,0)}]
   \tikzstyle{every node}=[font=\scriptsize]
    \draw[thick] (2.5,-1.5) rectangle (1.5,-0.5);
    \filldraw[black] (2,-1) circle (0pt) node{\huge{$2$}};
    \draw[thick] (2.5,3.5) rectangle (1.5,4.5);
    \filldraw[black] (2,4) circle (0pt) node{\huge{$6$}};
    \filldraw[black] (9,1.5) circle (0pt) node{\huge{$2$}};
    \filldraw[black] (6,1.5) circle (0pt) node{\huge{$2$}};
    \filldraw[black] (2,1.5) circle (0pt) node{\huge{$2$}};
    \filldraw[black] (-2,1.5) circle (0pt) node{\huge{$2$}};
    \filldraw[black] (-6,1.5) circle (0pt) node{\huge{$2$}};
    \draw[thick](6,1.5) circle (0.5);
    \draw[thick](2,1.5) circle (0.5);
    \draw[thick](-2,1.5) circle (0.5);
    \draw[thick] (5.6,1.2) -- (2.5,-0.7) ;
    \draw[thick] (2,1) -- (2,-0.5) ;
    \draw[thick] (-1.6,1.2) -- (1.5,-0.7) ;
    \draw[thick] (-5.6,1.2) -- (1.5,-1) ;
    \draw[thick, -<-] (2,2) -- (2,3.5) ;
    \draw[thick, -<-] (1.5,4) -- (-5.6,1.8) ;

    \draw[thick]  (9,1) arc (270:90:0.5);
    \draw[thick]  (-6,1) arc (-90:90:0.5);

    \draw[thick] (8.5,1.5) -- (6.5,1.5);
    \draw[thick] (5.5,1.5) -- (2.5,1.5);
    \draw[thick] (1.5,1.5) -- (-1.5,1.5);
    \draw[thick] (-2.5,1.5) -- (-5.5,1.5);
    \node at (7.5,1.5) {\huge $\times$};
    \node at (4,1.5) {\huge $\times$};
    \node at (0,1.5) {\huge $\times$};
    \node at (-4,1.5) {\huge $\times$};


\node[anchor=east] at (-6,2) {\Large $\cdots$};
\node[anchor=east] at (-6,1) {\Large $\cdots$};

\node[anchor=west] at (9,2) {\Large $\cdots$};
\node[anchor=west] at (9,1) {\Large $\cdots$};

\node[anchor=west] at (10,1.5) {\Huge $\longleftrightarrow$};
    
\end{scope}


\begin{scope}[shift={(19,0)}]

  \tikzstyle{every node}=[font=\scriptsize]
    \draw[thick] (0.5,-1.5) rectangle (-0.5,-0.5);
    \filldraw[black] (0,-1) circle (0pt) node{\huge{$2$}};
    \draw[thick] (0.5,3.5) rectangle (-0.5,4.5);
    \filldraw[black] (0,4) circle (0pt) node{\huge{$6$}};
    \filldraw[black] (6,1.5) circle (0pt) node{\huge{$2$}};
    \filldraw[black] (2,1.5) circle (0pt) node{\huge{$2$}};
    \filldraw[black] (-2,1.5) circle (0pt) node{\huge{$2$}};
    \filldraw[black] (-6,1.5) circle (0pt) node{\huge{$2$}};
    \draw[thick](2,1.5) circle (0.5);
    \draw[thick](-2,1.5) circle (0.5);
    \draw[thick] (2,1) -- (0.5,-1) ;
    \draw[thick, -<-] (-1.7,1.9) -- (0,3.5) ;
    \draw[thick, -<-] (-0.5,4) -- (-5.6,1.8) ;

    \draw[thick]  (6,1) arc (270:90:0.5);
    \draw[thick]  (-6,1) arc (-90:90:0.5);

    \draw[thick] (5.5,1.5) -- (2.5,1.5);
    \draw[thick] (1.5,1.5) -- (-1.5,1.5);
    \draw[thick] (-2.5,1.5) -- (-5.5,1.5);
    \node at (4,1.5) {\huge $\times$};
    \node at (0,1.5) {\huge $\times$};
    \node at (-4,1.5) {\huge $\times$};


\node[anchor=east] at (-6,2) {\Large $\cdots$};
\node[anchor=east] at (-6,1) {\Large $\cdots$};

\node[anchor=west] at (6,2) {\Large $\cdots$};
\node[anchor=west] at (6,1) {\Large $\cdots$};

\node[anchor=west,rotate=90] at (0,-4) {\Huge $\longleftrightarrow$};

\end{scope}


\begin{scope}[shift={(19,-9)}]
  \tikzstyle{every node}=[font=\scriptsize]
    \draw[thick] (0.5,-1.5) rectangle (-0.5,-0.5);
    \filldraw[black] (0,-1) circle (0pt) node{\huge{$2$}};
    \draw[thick] (0.5,3.5) rectangle (-0.5,4.5);
    \filldraw[black] (0,4) circle (0pt) node{\huge{$6$}};
    \filldraw[black] (-4,1.5) circle (0pt) node{\huge{$2$}};
    \filldraw[black] (0,1.5) circle (0pt) node{\huge{$2$}};
    \filldraw[black] (4,1.5) circle (0pt) node{\huge{$2$}};
    \draw[thick](0,1.5) circle (0.5);
    \draw[thick] (3.6,1.2) -- (0.5,-1) ;
    \draw[thick, -<-] (0,2) -- (0,3.5) ;
    \draw[thick] (0,1) -- (0,-0.5) ;
    \draw[thick, -<-] (-0.5,4) -- (-3.6,1.8) ;

    \draw[thick]  (4,1) arc (270:90:0.5);
    \draw[thick]  (-4,1) arc (-90:90:0.5);

    \draw[thick] (3.5,1.5) -- (0.5,1.5);
    \draw[thick] (-0.5,1.5) -- (-3.5,1.5);
    \node at (-2,1.5) {\huge $\times$};
    \node at (2,1.5) {\huge $\times$};


\node[anchor=east] at (-4,2) {\Large $\cdots$};
\node[anchor=east] at (-4,1) {\Large $\cdots$};

\node[anchor=west] at (4,2) {\Large $\cdots$};
\node[anchor=west] at (4,1) {\Large $\cdots$};

\end{scope}
\end{tikzpicture}
    \caption{Two applications of Seiberg duality on the unfolded quiver.}
    \label{fig:2 Seibergs}
\end{figure}
The first and final quivers correspond to the maps
\be
    \mu_{s}\mu_{t,(1^2,0^6)} \mu_s \mu_{t,(1^8)} \mu_s \mu_{t,(1^2,0^6)} \mu_s \mu_{t,(1^8)}
     \, , \qquad  \mu_{t,(1^2,0^6)} \mu_s \mu_{t,(1^8)}  \mu_s\mu_{t,(0^2,1^6)} \, ,
\ee
respectively. Let us apply Seiberg duality twice at the level of the automorphisms
\be
\ba
    \mu_{s}\mu_{t,(1^2,0^6)} \mu_s \mu_{t,(1^8)} \mu_s \mu_{t,(1^2,0^6)} \mu_s \mu_{t,(1^8)} &= w_{\beta_1} \mu_{t,(1^2,0^6)} \mu_s \mu_{t,(1^2,0^6)} \mu_{t,(1^8)}\mu_s \mu_{t,(1^2,0^6)} \mu_s \mu_{t,(1^8)} \cr
    &= w_{\beta_1} \mu_{t,(1^2,0^6)} \mu_s \mu_{t,(1^2,0^6)} \mu_{t,(1^8)} w_{\beta_1} \mu_{t,(1^2,0^6)} \mu_s \mu_{t,(1^2,0^6)} \mu_{t,(1^8)} \cr
    &= \mu_{t,(1^2,0^6)} \mu_s \mu_{t,(1^8)} \mu_s \mu_{t,(0^2,1^6)} \, ,
\ea
\ee
where the first two equalities were applications of~\eqref{eq: Local geometric Seiberg}, and the final equality used the properties of $\mu_{t,\bold{n}}$ maps in Appendix~\ref{sec: App Gluing Profiles}, and that the Weyl transformation $w_{\beta_1}$ commutes with all automorphisms in the expression. Thus we can recover the geometric Seiberg duality of two flux tori quivers by local applications of the geometric Seiberg duality.

Thus far, we have shown that Seiberg duality relates automorphisms by Weyl transformations using a root $\beta_1 \in \Lambda_{D_8} \setminus \Lambda_{A_7}$. As we have seen, these are among the automorphisms of $X_S^f$, so that we can think of the associated varieties $\mathcal{T}[(-1^8)]$ and $\mathcal{T}[(-2^2,0^6)]$
as being isomorphic.

\section{Discussion and Outlook}

\paragraph{Summary of the Construction.} \label{sec: summary of construction}
We conclude our discussion of the geometric construction of flux tori by providing a recipe for constructing a topological $G_2$-holonomy manifold and associated 4d $\cN=1$ quiver starting solely from a non-zero field theory flux $f \in E_8$ with $f/2 \in E_8$:  
\begin{enumerate}
    \item Starting from $f\in E_8$, construct the unique geometric flux $\mathcal{F}\in \Lambda_{E_8} \subset H_2(X_S,\mathbb{Z})$ satisfying $N(\mathcal{F})=f$. 
    \item Determine a possible decomposition of $\mathcal{F}=\sum_{i=1}^n 2 \alpha_i$ with $\alpha_i \in R$. 
    \item Glue together the building blocks 
    $\mathcal{D}[2\alpha_i]$
    to construct the $\mathcal{D}[\mathcal{F}]$ concatenation of domain walls, following the logic of Section~\ref{sec: flux 2^2n}. This provides a non-trivial fibration of the threefold $X_S^f$ over an interval. 
    \item Glue the threefold at the ends of the line interval together by applying the automorphism associated to the monodromy map $\mathcal{M}[\mathcal{F}]$ to find the 7-dimensional space  $\mathcal{T}[\mathcal{F}]$. This is realized as a fibration of the threefold $X_S^f$ over the circle $S^1$. Read off the quiver from the corresponding glued volume profile.
    \item Determine the preserved flavor symmetry inside $E_8$ by calculating the set of roots of $\mathcal{F}^{\perp} \subset \Lambda_{E_8}$ that are perpendicular to the geometric flux $\mathcal{F}$.  
\end{enumerate}

\paragraph{Path in the Moduli Space.}
The geometry of the moduli space of the threefold $X_S$ is sketched in Figure~\ref{fig: n4 moduli space path}, with the path we take corresponding to the flux torus $\mathcal{T}[4\alpha_1]$ drawn in black.
\begin{figure}
\centering
\begin{tikzpicture}
    \draw[black, thick] (0,0) -- (6,4) ;
    \draw[black, thick] (0,0) -- (-6,4) ;
    \draw[red, thick] (0,0) -- (-4.8,5.4);
    \draw[red, thick] (0,0) -- (-2.4,6.8);
    \draw[red, thick] (0,0) -- (1.2,7.14);
    \draw[red, thick] (0,0) -- (3.6,6.24);
    \draw[blue, thick] (0,0) -- (4.8,5.4);
    \draw[blue, thick] (0,0) -- (2.4,6.8);
    \draw[blue, thick] (0,0) -- (-1.2,7.14);
    \draw[blue, thick] (0,0) -- (-3.6,6.24);
    \filldraw[black] (-4.8,5.4) circle (0pt) node[anchor=south]{\textcolor{red}{$\mu_s$}};
    \filldraw[black] (-3.6,6.24) circle (0pt) node[anchor=south]{\textcolor{blue}{$\mu_{t,4}$}};
    \filldraw[black] (-4,3.5) circle (2pt) node[anchor=south]{};
    \filldraw[black] (4,3.5) circle (2pt) node[anchor=south]{};
    \draw[black, thick] (-4,3.5) .. controls (-2,4.9) and (2,4.9) .. (4,3.5);
    \draw[black, dashed, ->] (4,3.5) .. controls (2,2.1) and (-2,2.1) .. (-3.9,3.4);
    \filldraw[black] (-3.4,4.1) circle (0pt) node[anchor=north]{$\times$};
    \filldraw[black] (-2.42,4.48) circle (0pt) node[anchor=north]{$\times$};
    \filldraw[black] (-1.56,4.68) circle (0pt) node[anchor=north]{$\times$};
    \filldraw[black] (-0.76,4.78) circle (0pt) node[anchor=north]{$\times$};
    \filldraw[black] (3.4,4.1) circle (0pt) node[anchor=north]{$\times$};
    \filldraw[black] (2.42,4.48) circle (0pt) node[anchor=north]{$\times$};
    \filldraw[black] (1.56,4.68) circle (0pt) node[anchor=north]{$\times$};
    \filldraw[black] (0.76,4.78) circle (0pt) node[anchor=north]{$\times$};
    \filldraw[black] (0,4.7) circle (0pt) node[anchor=south]{$\mathcal{D}[4\alpha_1]$};
    \filldraw[black] (0,3.2) circle (0pt) node[anchor=north]{$\mathcal{M}[4\alpha_1]$};
\end{tikzpicture}
\caption{The path $\mathcal{T}[4\alpha_1]$ we take through the K\"ahler cone of the threefold $X_S$ to construct the $(-2^8)$ flux domain wall. We cross red and blue walls corresponding to the location of the $\mu_s$ and $\mu_{t,4}$ domain walls, respectively. The path $\mathcal{D}[4\alpha_1]$ starts at the first black dot on the left, crosses walls in a continuous way, then closes back on itself using the monodromy map $\mathcal{M}[4\alpha_1]$. }
\label{fig: n4 moduli space path}
\end{figure}

The moduli space contains many walls, corresponding to the special locations in which we apply $\mu_s$ and $\mu_{t,\mathbf{n}}$ automorphisms. Crossing the same domain wall twice in succession is a contractible path in this space, since each automorphism squares to the identity. 

 After traversing a path in the moduli space, we glue back to our starting point using a monodromy map, which is in particular an element of $GL(10,\mathbb{Z})$. Such a path through the moduli space is non-contractible provided one uses a non-trivial monodromy map to glue closed the profile. Since the monodromy map must be an element of $GL(10,\mathbb{Z})$, if the monodromy map for a given path is not the identity, then it cannot be continuously deformed to the identity due to the integrality of the map. We conclude that all of the continuous deformations of a given path through the moduli space must have the same monodromy map.

\paragraph{Future questions.}
 We did not discuss M2-instanton corrections, which  become important as we move to the singular point in the geometry that  corresponds to the SCFT. It would be interesting to get a quantitative handle on these corrections and explore whether they invalidate the IR description as an SCFT. 

We already commented on the issue of the cubic superpotential that is relevant in these quivers, see also part I
 \cite{Braun:2023fqa}. This should be generated by M2-instantons and it would be important to get a better understanding of these contributions, as they would substantiate the enhanced flavor symmetries, that we expect from the field theory construction.

We have merely proposed a topological $G_2$ manifold construction. There are various assumptions, which would be very good to substantiate beyond the reliance on the field theory. For example, we constructed the automorphisms in the particular choice of complex structure where there is only an $I_8$ fiber, assuming that these would persist  when we tune back to the singular geometry with $II^*$ fiber. It would be important to develop a theory of the automorphisms of partially resolved elliptic fibrations. The physics seems to predict that these retain the automorphisms of the $I_8$ model.  
Moreover, the most  challenging open problem is to show that the topological $G_2$-manifolds we have constructed are actually (singular degenerations of) $G_2$-manifolds. To do so, one needs to construct a torsion-free $G_2$-holonomy metric on these spaces.

 There are obvious generalizations of the construction of topologically $G_2$-manifolds: the first is to generalize this to arbitrary flux Riemann surface compactifications. 
The novelty here is understanding the 7-dimensional geometry associated to the pair of pants decomposition. Once this is understood, since we know how the tubes work, one should be able to $G_2$-geometrize all flux Riemann surface compactifications of 6d $\mathcal{N}=(1,0)$ theories.

\subsection*{Acknowledgments}
We thank Evyatar Sabag for collaboration on Part I and initial collaboration. We thank Bobby Acharya for discussions. 
OL is supported by an STFC studentship. 
The work of SSN  is  supported  in part by an EPSRC Open Fellowship EP/X01276X/1 (Schafer-Nameki) and the  STFC consolidated grant ST/X000761/1.

\appendix

\section{KK-Shifts from Field Theory}
\label{app:KKshift}

As we have seen in Section \ref{sec: Geometric Realization}, when constructing a torus model, the volumes of the curves associated with the $E_8$ roots should be identified up to a shift by an integer multiple of $\mathrm{Vol}(E)=m_{\text{KK}}$. This is consistent with the field theory observations of \cite{Sabag:2022hyw}, where it was argued the the KK symmetry of the tube models is broken to some finite subgroup when forming a torus, since the mass parameters of the 5d hypers gets shifted by multiples of $m_{\text{KK}}$ as we go around the cirle. In this appendix we explain more in details how the geometric and the field theoretic pictures are compatible.

Let $\mathcal{F}_n$ denote the geometric flux used for the $q=1$ torus model,
so that
\be\label{eq:app-flux-weight}
  N(\mathcal{F}_n)
  = \bigl(\underbrace{-2,\ldots,-2}_{2n},
  \underbrace{0,\ldots,0}_{8-2n}\bigr)\,,
  \qquad n=1,\ldots,4\,.
\ee
For a root $\gamma\in\Lambda_{E_8}$, write
$N(\gamma)=(\lambda_1,\ldots,\lambda_8)$.  The pairing convention following
\eqref{eq: N homomorphism definition} gives
\be\label{eq:app-flux-pairing}
  \frac{1}{2}\mathcal{F}_n\mathbin{\cdot}\gamma
  = \sum_{i=1}^{2n}\lambda_i\,.
\ee
Consequently, the general monodromy formula \eqref{eq:fluxformula_general}
reduces on the $E_8$ root lattice to
\be\label{eq:app-root-monodromy}
  \mathcal{M}[q\mathcal{F}_n](\gamma)
  = \gamma+q\left(\sum_{i=1}^{2n}\lambda_i\right)E\,.
\ee

We now express the same statement in fugacity variables.  Associate to a
weight $\lambda=(\lambda_1,\ldots,\lambda_8)$ the monomial
\be\label{eq:app-weight-fugacity}
  \chi_{\lambda}(\boldsymbol{v})
  :=\prod_{i=1}^{8}v_i^{\lambda_i}\,.
\ee
The simple roots in the convention of \eqref{eq:E8roots_so16weights} therefore
correspond to
\be\label{eq:rootsfug}
\ba
  \alpha_1&:\quad \prod_{i=1}^{8}v_i^{-1/2}\,,
  &\alpha_2&:\quad v_7v_8^{-1}\,,
  &\alpha_3&:\quad v_7v_8\,,\\
  \alpha_4&:\quad v_6v_7^{-1}\,,
  &\alpha_5&:\quad v_5v_6^{-1}\,,
  &\alpha_6&:\quad v_4v_5^{-1}\,,\\
  \alpha_7&:\quad v_3v_4^{-1}\,,
  &\alpha_8&:\quad v_2v_3^{-1}\,.
\ea
\ee
In particular, the half-powers in the monomial for $\alpha_1$ are required by
its spinor weight; replacing them by integer powers changes the normalization
by a factor of two.

The field-theory analysis of \cite{Sabag:2022hyw} is naturally written in a
different basis for the two abelian symmetries of the tube.  In that basis,
which we denote temporarily by $u_i$ and $f$, the rescaling for
the flux $q\mathcal{F}_n$ takes the form
\be\label{eq:app-old-fugacity-shift}
  u_i\longmapsto
  \begin{cases}
    u_i f^{q(2n-4)}\,, & i=1,\ldots,2n\,,\\
    u_i f^{2nq}\,, & i=2n+1,\ldots,8\,.
  \end{cases}
\ee
To compare directly with the main text, we reverse the orientation of the KK
generator and absorb the resulting common factor into the diagonal abelian
flavor fugacity
\be
 f=f^{-1}\,,\qquad v_i=u_i\prod_{j=1}^8u_j^{-\frac{1}{4}}\,.
\ee
The rescaling of the new fugacities $v_i$ is then
\be\label{eq:app-adapted-fugacity-shift}
  v_i\longmapsto
  \begin{cases}
    v_i f^{4q}\,, & i=1,\ldots,2n\,,\\
    v_i\,, & i=2n+1,\ldots,8\,.
  \end{cases}
\ee
It follows immediately that a root monomial transforms as
\be\label{eq:app-general-root-fugacity-shift}
  \chi_{N(\gamma)}(\boldsymbol{v})
  \longmapsto
  f^{4q\sum_{i=1}^{2n}\lambda_i}
  \chi_{N(\gamma)}(\boldsymbol{v})\,.
\ee
Comparing \eqref{eq:app-general-root-fugacity-shift} with
\eqref{eq:app-root-monodromy}, the field-theory and geometric shifts agree with
the convention
\be\label{eq:idEf}
  E\longleftrightarrow f^4\,.
\ee
For completeness at $q=1$ the only shifted simple roots are
\be\label{eq:app-simple-root-shifts}
\ba
  n=1:&\qquad \alpha_1\longmapsto\alpha_1-E
  \qquad \alpha_8\longmapsto\alpha_8+E\\
  n=2:&\qquad \alpha_1\longmapsto\alpha_1-2E
  \qquad \alpha_6\longmapsto\alpha_6+E\\
  n=3:&\qquad \alpha_1\longmapsto\alpha_1-3E
  \qquad \alpha_4\longmapsto\alpha_4+E\\
  n=4:&\qquad \alpha_1\longmapsto\alpha_1-4E
  \qquad \alpha_3\longmapsto\alpha_3+2E\,,
\ea
\ee
where the remaining roots are invariant.  The last two
lines reproduce \eqref{eq: n=3 root shifts} and \eqref{eq:Eshiftn4},
respectively, while the first two follow from the same general monodromy
formula.  For higher $q$, every coefficient of $E$ in
\eqref{eq:app-simple-root-shifts} is multiplied by $q$.

\section{The Mordell-Weil Groups of $S_0$ and $S$}\label{app:sections}

In this Appendix we show how to compute the classes of sections on the elliptic fibrations on $S$ and $S_0$ using \cite{schuett2010ellipticsurfaces} (see Section 11). 

\subsection{Determining Classes of Sections}

For an elliptic fibration, the classes of the zero section $\sigma_0$, elliptic fiber $E$ and irreducible fiber components $E_i$ not meeting the zero section define the trivial lattice
\begin{equation}
    T := \langle \sigma_0,E,E_i \rangle_\Z
\end{equation}
and its orthogonal complement $L := T^\perp$ is called the essential lattice. There is a group homomorphism $\varphi$ from the Mordell-Weil group to $L\otimes \mathbb{Q}$ given by
\begin{equation}\label{eq:sections schutts hioda}
    \varphi(s) = \sigma_s - \sigma_0 - \left(\sigma_s \cdot \sigma_0 +1 \right) E
    - \sum_{ij} E_i\, A_{ij} \, (\sigma_s \cdot E_j)
\end{equation}
where $s$ denotes the section as an abstract element of the Mordell-Weil group, $\sigma_s$ is its class, and $A$ is the inverse of the intersection product between the $E_i$. Together with information about which fiber components a given section meets, this allows to find the class $\sigma_s$ of a given element $s$ of the Mordell-Weil group. Torsional elements of the Mordell-Weil group are known to be in the kernel of $\varphi$. 

The identification 
\begin{equation}
    s \leftrightarrow \sigma_s
\end{equation}
between the Mordell-Weil group and curve classes is a homomorphism modulo the trivial lattice $T$. However, the curve class of any section has the properties that 
\begin{equation}\label{eq: section intersection conditions}
    \begin{aligned}
        \sigma_s \cdot E & = 1 \\ 
        \sigma_s \cdot \sigma_s & = -1 \\
        \sigma_s \cdot E_i &= 1 \qquad \text{for a single $i$} \\
        \sigma_s \cdot E_j &= 0 \qquad \text{for all others} \, .
    \end{aligned}
\end{equation}
The number of these conditions equals the rank of $T$, so that they allow to find the class of the sum (under the Mordell-Weil group) of sections. However, this requires knowledge about which fiber component the sum of any two given sections meets. This can be worked out as follows: reducible elliptic fibers have a multiplicative (in the $I_n$ case) or additive (for all others) group structure. The fiber component meeting the zero section (called the trivial fiber component) plays the role of the identity element here. The map $\Psi$ which takes any element of the Mordell-Weil group to the fiber component it meets turns out to be a group homomorphism (see Lemma 7.4 of \cite{schuett2010ellipticsurfaces}), a fact which can be understood from the action of translations by a section on reducible fiber components. $\Psi$ is injective for torsional sections, so that they must always meet a non-zero fiber component. 

With the information about which $E_i$ is met by $\sigma_{s \boxplus s'}$, one can work out the class of 
\begin{equation}
\sigma_{s \boxplus s'} = \sigma_s + \sigma_{s'} \quad \text{mod}\quad  T \, ,
\end{equation}
by imposing the conditions \eqref{eq: section intersection conditions}.

\subsection{Sections on $S_0$}

For $S_0$ there are no reducible fibers and $L = \Lambda_{-E_8}$. We can hence write 
\begin{equation}
    \varphi(s) = \sigma - \sigma_0 - \left(\sigma \cdot \sigma_0 +1 \right) E = \alpha
\end{equation}
for $\alpha \in \Lambda_{-E_8}$. Squaring both sides and using that any section must meet the fiber exactly once and square to $-1$ we find
\begin{equation}
    \alpha \cdot \alpha = -2 \sigma \cdot \sigma_0 -2
\end{equation}
so that 
\begin{equation}
    \sigma_\alpha = \sigma_0 + \alpha - \frac{\alpha^2}{2} E \, .
\end{equation}
As $T = \left\{E,\sigma_0\right\}$ in this case and there are no $E_i$, it follows that sections are simply added as
\begin{equation}
\sigma_{\alpha+\beta} = \sigma_0 + \alpha +\beta - \frac{(\alpha+\beta)^2}{2} E \, . 
\end{equation}

\subsection{Sections on $S$}

On $S$ we have an $I_8$ fiber for which the fiber component $\beta_1$ meets the zero section. The other components are $E_i = \beta_{i+1} = M_{i+2}-M_{i+1}$ for $i\leq6$ and $E_7 = E - \sum_{i=1}^7\beta_i = E + M_1 - M_8$. This is summarized by the white nodes in the diagram:
\begin{center}
    \begin{tikzpicture}[
    node distance=1.5cm,
    main/.style={circle, draw, fill=white, inner sep=2pt},
    section/.style={circle, draw, fill=red, inner sep=2pt},
    label style/.style={font=\small, color=black}
]

    \node[main] (v0) at (0, 2) {};
    \node[main] (v1) at (1.41, 1.41) {};
    \node[main] (v2) at (2, 0) {};
    \node[main] (v3) at (1.41, -1.41) {};
    \node[main] (v4) at (0, -2) {};
    \node[main] (v5) at (-1.41, -1.41) {};
    \node[main] (v6) at (-2, 0) {};
    \node[main] (v7) at (-1.41, 1.41) {};

    \node[section] (s0) at (0.5,1) {};
    \node[section] (sa) at (-0.5,1) {};
    \node[section] (spm) at (1,0) {};
    \node[section] (spmi) at (-1,0) {};
    \node[section] (st) at (0,-1) {};

    \draw (v0) -- (v1);
    \draw (v1) -- (v2);
    \draw (v2) -- (v3);
    \draw (v3) -- (v4);
    \draw (v4) -- (v5);
    \draw (v5) -- (v6);
    \draw (v6) -- (v7);
    \draw (v7) -- (v0);

    \draw (v0) -- (s0);
    \draw (v0) -- (sa);
    \draw (v2) -- (spm);
    \draw (v6) -- (spmi);
    \draw (v4) -- (st);

    \node[above, label style] at (v0) {$\beta_1$};
    \node[above right, label style] at (v1) {$\beta_2$};
    \node[right, label style] at (v2) {$\beta_3$};
    \node[below right, label style] at (v3) {$\beta_4$};
    \node[below, label style] at (v4) {$\beta_5$};
    \node[below left, label style] at (v5) {$\beta_6$};
    \node[left, label style] at (v6) {$\beta_7$};
    \node[above left, label style] at (v7) {$E + M_1 - M_8$};

    \node[below right, label style] at (s0) {$\sigma_0$};
    \node[below left, label style] at (sa) {$\sigma_{\alpha_1}$};
    \node[left, label style] at (spm) {$\sigma_\pm^{{\color{white}1}}$};
    \node[right, label style] at (spmi) {$\sigma_\pm^{-1}$};
    \node[above right, label style] at (st) {$\sigma_{\text{tor}}$};
    
\end{tikzpicture}
\end{center}
The trivial lattice $T$ is generated by $\sigma_0,E$ together with the $E_i$. As the Mordell-Weil group is $\Z \oplus \Z_2$ so there must be one more non-torsional section besides $\sigma_0$. On $S$ there is a fiber of type $I_8$ so that $A^{-1}$ is minus the Cartan matrix of $SU(8)$, and the fiber components not meeting the zero section are $E_i = \beta_{i+1}$. The inner form between $\{\sigma_0,E,\sigma,E_i | i = 1,\cdots ,7\}$ must have determinant $4$, which makes $H^2(S,\Z)$ unimodular taking into account the torsional section. This determines $\sigma$ to meet the fiber component $E_2 = \beta_3$ (another equivalent option is $E_6=\beta_7$). Note that $\beta_1$ is not among the $E_i$ as it meets the zero section $\sigma_0 = C_1$ drawn in red.

As the lattice $L$ is generated by $F-\Sigma$ for $S$, \eqref{eq:sections schutts hioda} implies 
\begin{equation}
\varphi(s) = q(F-\Sigma) =  \sigma - E + \frac14(2F + 2 \Sigma -4M_2-4M_3) 
\end{equation}
so that we need $q=\pm 1/2$ for $\sigma$ to be integral and square to $-1$ and find the two sections $s_\pm$ with classes
\begin{equation}
\begin{aligned}
  \sigma_{+} &:=  M_1 + M_2 + M_3 - \Sigma + E   \\
  \sigma_{-} &:=  M_1 + M_2 + M_3 - F + E \, .
\end{aligned}
\end{equation}
The sum of these two sections is such that 
$\varphi(s_{+} \boxplus s_{-} ) = 0$ which implies that it must be the torsional section. As $\sigma_{\pm}$ both meet $E_2 = \beta_3$, the multiplicative group structure of the reducible fiber implies that $s_{+} \boxplus s_{-}$ meets the fiber component $E_4$.
This is also found by noting that the torsional subgroup is a $\Z_2$ acting multiplicatively on the cyclical group of fiber components. Implementing the intersection conditions then determines 
\begin{equation}
\sigma_{\text{tor}}  = \sigma_{+} \boxplus \sigma_{-} = M_1 -F -\Sigma + M_2 + M_3 + M_4 + M_5 + E \, .
\end{equation}
This can also be found from \eqref{eq:sections schutts hioda} using the information that the torsional section must meet the fiber component $E_4 = \beta_5$. 

The multiplicative structure of the $I_8$ fiber determines that the inverse $s_\pm^{-1}$ of $s_\pm$ (they obey $s_\pm \boxplus s_\pm^{-1} = s_0$ with $s_0$ the zero section) correspond to curve classes $\sigma_{\pm}^{-1}$ which meet the fiber component $E_6 = \beta_7$.
It follows that $s_- \boxplus s_+^{-1} = s_- \boxminus s_+$ only meets the trivial fiber component, so that 
\begin{equation}
\sigma_{\alpha_1}:= \sigma_{s_- \boxminus s_+} = M_1 + \Sigma - F + E  \,,
\end{equation}
is determined by \eqref{eq: section intersection conditions} together with 
\begin{equation}
    \sigma_{s_- \boxminus s_+} = \sigma_+ - \sigma_- \quad \text{mod} \quad T\, .
\end{equation}

\section{Generalities of Gluing Geometric Profiles} \label{sec: App Gluing Profiles}

Here we provide all of the details necessary to prove that one can always write the building blocks $\mathcal{D}[\mathcal{F}]$ as an alternating product of $\mu_s^{\varepsilon}$ and $\mu_{t,\mathbf{n}}^{\varepsilon}$ local building blocks, as claimed in Section~\ref{sect:elemfluxtubes}. 

We start with a definition of the $\mu_{t,\mathbf{n}}$ automorphisms. Define the set $\mathcal{S} := \{ ( n_i ) \in \mathbb{Z}_2^8 \, | \, \sum_{i=1}^8 n_i = 0 \, \, \, \text{mod} \, \, 2 \}$, with the following structure:
\begin{enumerate}
    \item The set $\mathcal{S}$ is closed under elementwise addition $\text{mod} \, \, 2$. Namely for elements $\mathbf{n},\mathbf{m} \in \mathcal{S}$
    \be
        \mathbf{n}+\mathbf{m} := (n_1+m_1 \, \, \,  \text{mod} \, \, 2, n_2+m_2 \, \, \,  \text{mod} \, \, 2, \cdots , n_8+m_8 \, \, \,  \text{mod} \, \, 2,  )  \in \mathcal{S}\, .
    \ee
    \item We can define the size of an element $\mathbf{n}\in \mathcal{S}$ as 
    \be
        | \mathbf{n} | := \sum_{i=1}^8 n_i \in \{ 0,2,4,6,8 \} \, .
    \ee
    
\end{enumerate}
For each such $\mathbf{n} \in \mathcal{S}$, one can define the following $\text{dP}_9$ automorphism
\be \label{eq: very general mutn def}
\mu_{t,\mathbf{n}} :  \qquad 
\left\{\begin{aligned}
F &\rightarrow F'=F \\
 \Sigma &\rightarrow \Sigma' = \Sigma+\frac{1}{2}|\mathbf{n}|F-\sum_{i=1}^8 n_i M_i \\
M_i &\rightarrow M_i' = F-M_i \qquad \ \qquad \qquad \qquad  n_i\equiv 1 \\
M_i &\rightarrow M_i' = M_i\qquad \ \qquad \qquad \qquad \qquad  n_i \equiv 0 \, .
\end{aligned}\right.
\ee
The argument that such maps are automorphisms follows from Section~\ref{sec:constructionofXS}, up to a relabeling of the curves classes of the blowups. These maps obey the following properties
\begin{enumerate}
    \item $\mu_{t,\mathbf{n}} \mu_{t,\mathbf{m}} = \mu_{t,\mathbf{n}+\mathbf{m}}$.
    \item $[\mu_{t,\mathbf{n}},\mu_{t,\mathbf{m}}]=0$. 
    \item $\mu_{t,\mathbf{n}}^2=1$. 
\end{enumerate}

\paragraph{Proof:} We prove the first property, from which the second and third property follow. Observe that the composed map acts as
\be 
\mu_{t,\mathbf{n}}  \mu_{t,\mathbf{m}} :  \qquad 
\left\{\begin{aligned}
F &\rightarrow F''=F' \\
 \Sigma &\rightarrow \Sigma'' = \Sigma'+\frac{1}{2}|\mathbf{m}|F'-\sum_{i=1}^8 m_i M_i' \\
M_i &\rightarrow M_i'' = F'-M_i' \qquad \ \qquad \qquad \qquad  &m_i \equiv 1 \\
M_i &\rightarrow M_i'' = M_i'  &m_i \equiv 0 \, ,
\end{aligned}\right.
\ee
where the primed curves denote the image of $F,\Sigma,M_i$ under $\mu_{t,\mathbf{n}}$. Clearly $F''=F'=F$. Next, observe that 
\be
\ba
    \Sigma'' &= \Sigma+\frac{1}{2}|\mathbf{n}|F-\sum_{i=1}^8 n_i M_i+\frac{1}{2}|\mathbf{m}|F-\sum_{i=1}^8 n_i m_i(F-M_i) -\sum_{i=1}^8 (1-n_i) m_i  M_i \cr
    &= \Sigma + \frac{1}{2}\left(|\mathbf{m}|+|\mathbf{n}|-2 \sum_{i=1}^8 n_i m_i \right)F - \sum_{i=1}^8 [n_i(1-m_i)+ (1-n_i)m_i] M_i  \cr
    &= \Sigma +\frac{1}{2}|\mathbf{n}+\mathbf{m}|F - \sum_{i=1}^8 [n_i(1-m_i)+ (1-n_i)m_i] M_i \, .
\ea
\ee
The sum over the matter curves has coefficient 1 when $(\mathbf{n}+\mathbf{m})_i \equiv 1$, and zero otherwise. Finally, observe that 
\be 
M_i'' =   \qquad 
\left\{\begin{aligned}
 &F-(F-M_i)=M_i &n_{i},m_i\equiv 1 \\
 &F-M_i &m_i \equiv 1, \, n_i \equiv 0 \\
&F-M_i &n_i \equiv 1 \, , m_i \equiv 0 \\
&M_i  &n_i,m_i \equiv 0 
\end{aligned}\right. \qquad  = \qquad 
\left\{\begin{aligned}
 &F-M_i &(\mathbf{n}+\mathbf{m})_i \equiv 1 \\
 &M_i &(\mathbf{n}+\mathbf{m})_i \equiv 0 \, ,
\end{aligned}\right.
\ee
allowing us to conclude $\mu_{t,\mathbf{n}} \mu_{t,\mathbf{m}} = \mu_{t,\mathbf{n}+\mathbf{m}}$. \\

We can promote this result to building blocks $\mu_{t,\mathbf{n}}^{\varepsilon} \mu_{t,\mathbf{m}}^{\varepsilon}=\mu_{t,\mathbf{n}+\mathbf{m}}^{\varepsilon}$. This fact is a direct geometric analogue of integrating out massive chirals in field theory. Observe that the building block $\mu_{t,\mathbf{n}}^{\varepsilon}\mu_{t,\mathbf{m}}^{\varepsilon}$ that one would naturally associate to such maps gives rise to a quiver containing a single gauge $SU(2)$ node. When $n_i=m_i=0$, the $i$'th chiral is not charged under the flavor symmetry, so is absent in the local $\bold{n}+\bold{m}$ quiver. When $n_i=1$, $m_i=0$, the $i$'th chiral is charged under the flavor symmetry, and so we get a chiral in the local $\bold{n}+\bold{m}$ quiver. However, when both $n_i=m_i=1$, we get a pair of chirals charged under the flavor symmetry, meaning they generate a superpotential term which gives them both a mass, so are integrated out in the IR. Thus we do not get a chiral charged under the flavor symmetry in the local $\bold{n}+\bold{m}$ quiver when $n_i+m_i=0 \mod{2}$, in exact agreement with the building block structure of $\mu_{t,\bold{n}+\bold{m}}^{\varepsilon}$. The properties $2.$ and $3.$ therefore extend to local building blocks as well. \\

The next identity we prove is that the most general form of a root $\alpha \in R$ takes the form
\be 
    \alpha = \pm \Sigma \pm \left( \frac{1}{2}|\mathbf{n}|-1 \right)F\mp\sum_{i=1}^8 n_i M_i \, , \qquad \alpha \in R_{\pm} \, ,
\ee
with $\mathbf{n} \in \mathcal{S}$. This provides a bijection between $R_
{\pm}$ and $\mathcal{S}$. 

\paragraph{Proof:} To prove this, observe that if we take the most generic $\alpha \in H_2(X_S, \mathbb{Z})$, we can write it as
\be \label{eq: generic form of a curve}
    \alpha = a \Sigma + b F - \sum_{i=1}^8 d_i M_i \, .
\ee
There are now multiple constraints coming from $\alpha \in R_{\pm}$
\be
    \alpha^2=-2 \, , \qquad \alpha \cdot F = \pm 1 \, , \qquad \alpha \cdot E =0\, , \qquad 4a +\sum_{i=1}^8 d_i \equiv 0 \mod{2} \, ,
\ee
for $a,b,d_i\in \mathbb{Z}$, where the final constraint comes from demanding $N(\alpha)\in E_8$. The second constraint fixes $a=\pm1$, and the first and third constraints fix 
\be
    2b = \mp2 \pm \sum_{i=1}^8 d_i^2 \, , \qquad 2b= \mp2+\sum_{i=1}^8 d_i \, ,
\ee
uniquely fixing $b$ in terms of $d_i$. These two equations lead to
\be
    \sum_{i=1}^8 d_i(1\mp d_i) = 0 \, .
\ee
Now we use the fact that for any integer $p \in \mathbb{Z}$, $p(1-p) \le 0$, and $p(1+p)\ge 0$. Thus since each term in the sum is either non-positive or non-negative, it must be true that each term is identically zero. This fixes $d_i \in \{ 0,1 \}$ for $\alpha \in R_+$, and $d_i \in \{ 0,-1 \}$ for $\alpha \in R_-$. Last, we apply the fourth constraint to determine that only an even number of $d_i$ are allowed to be non-zero. Thus we can define $\mathbf{n}\equiv (d_i) \in \mathcal{S}$. Solving for $b$ and substituting into~\eqref{eq: generic form of a curve} yields 
\be
    \alpha = \pm\Sigma \pm\left( \frac{1}{2}|\mathbf{n}|-1 \right)F \mp \sum_{i=1}^8 n_i M_i \, , 
\ee
as required. Thus for any $\alpha$, there is a unique $\mathbf{n}(\alpha)$ that gives rise to such an element of $R$. \\

Using this result, we now prove the following: given a Weyl reflection $w_{\alpha}$, with $\alpha \in R$, we can always write its action as
\be
    w_{\alpha} = \mu_{t,\mathbf{n}} \mu_s \mu_{t,\mathbf{n}} \, ,
\ee
for the unique $\mathbf{n}$ satisfying 
\be 
    \alpha = \pm \Sigma \pm \left( \frac{1}{2}|\mathbf{n}|-1 \right)F\mp\sum_{i=1}^8 n_i M_i \, , \qquad \alpha \in R_{\pm} \, .
\ee

\paragraph{Proof:} The logic is as follows. We first suppose that $\mu_{t,\mathbf{n}}\mu_s\mu_{t,\mathbf{n}}=w_{\alpha}$ for some $\alpha \in R$. We then fix the form of $\alpha=\alpha(\mathbf{n})$ in terms of $\mathbf{n}$ using the intersection between $\alpha$ and $F$. Using this, we show the proposed $\alpha(\mathbf{n})$ is consistent with the action on the section and matter curves. Lastly, we invert the relation using our previous bijection to uniquely construct the corresponding $\mathbf{n}$. To prove this, we first construct the action of $\mu_{t,\mathbf{n}}\mu_s\mu_{t,\mathbf{n}}$ on the curves of the threefold
\be 
\mu_{t,\mathbf{n}} \mu_s \mu_{t,\mathbf{n}} :  \qquad 
\left\{\begin{aligned}
F &\rightarrow \Sigma +\frac{1}{2}|\mathbf{n}|F-\sum_{i=1}^8 n_i M_i \, , \\
 \Sigma &\rightarrow \frac{1}{2}|\mathbf{n}| \Sigma+\left(\frac{1}{2}|\mathbf{n}|-1\right)^2F-\left(\frac{1}{2}|\mathbf{n}|-1\right)\sum_{i=1}^8 n_i M_i \\
M_i &\rightarrow \Sigma +\left(\frac{1}{2}|\mathbf{n}|-1\right)F+M_i-\sum_{j=1}^8 n_j M_j \qquad \ \qquad \qquad \qquad  &n_i \equiv 1 \\
M_i &\rightarrow M_i  &n_i \equiv 0 \, ,
\end{aligned}\right.
\ee
for a generic $\mathbf{n} \in \mathcal{S}$. Next, we compare this to the form of the Weyl reflection on the fiber $F$, which we can do since we know that $\alpha \in R_\pm$ implies $\alpha \cdot F=\pm1$
\be
    w_{\alpha}(F) = F+(\alpha \cdot F)\alpha = F\pm \alpha \, .
\ee
From here we immediately read off $\alpha$ as 
\be
    \alpha = \pm \Sigma \pm \left( \frac{1}{2}|\mathbf{n}|-1 \right)F\mp\sum_{i=1}^8 n_i M_i \, , \qquad \alpha \in R_{\pm} \, .
\ee
Next, we show that this $\alpha(\mathbf{n})$ is consistent with the section and matter curves 
\be
\ba
    w_{\alpha(\mathbf{n})}(\Sigma) &= \Sigma \pm \left( \frac{1}{2}|\mathbf{n}|-1 \right)\alpha(\mathbf{n}) = \frac{1}{2}|\mathbf{n}|\Sigma+\left( \frac{1}{2}|\mathbf{n}|-1 \right)^2 F -\left( \frac{1}{2}|\mathbf{n}|-1 \right)\sum_{i=1}^8 n_i M_i \, , \cr
    w_{\alpha(\mathbf{n})}(M_i) &= M_i \pm \alpha(\mathbf{n}) = \Sigma +\left( \frac{1}{2}|\mathbf{n}|-1 \right)F+M_i-\sum_{j=1}^8 n_j M_j \, , &n_i \equiv 1\cr
    w_{\alpha(\mathbf{n})}(M_i) &= M_i &n_i \equiv 0 \, ,
\ea
\ee
as required. To conclude, we use the fact that there exists a bijection between $\mathcal{S}$ and $R_{\pm}$, to give the unique $\mathbf{n} \in \mathcal{S}$ satisfying $w_{\alpha}=\mu_{t,\mathbf{n}}\mu_s \mu_{t,\mathbf{n}}$.

\bibliographystyle{ytphys}
\small 
\bibliography{FM}

\providecommand{\href}[2]{#2}\begingroup\raggedright\begin{thebibliography}{10}

\bibitem{Seiberg:1996bd}
N.~Seiberg, ``{Five-dimensional SUSY field theories, nontrivial fixed points
  and string dynamics},''
  \href{http://dx.doi.org/10.1016/S0370-2693(96)01215-4}{{\em Phys. Lett. B}
  {\bfseries 388} (1996) 753--760},
  \href{http://arxiv.org/abs/hep-th/9608111}{{\ttfamily arXiv:hep-th/9608111}}.

\bibitem{Seiberg:1996vs}
N.~Seiberg and E.~Witten, ``{Comments on string dynamics in six-dimensions},''
  \href{http://dx.doi.org/10.1016/0550-3213(96)00189-7}{{\em Nucl. Phys. B}
  {\bfseries 471} (1996) 121--134},
  \href{http://arxiv.org/abs/hep-th/9603003}{{\ttfamily arXiv:hep-th/9603003}}.

\bibitem{DelZotto:2014hpa}
M.~Del~Zotto, J.~J. Heckman, A.~Tomasiello, and C.~Vafa, ``{6d Conformal
  Matter},'' \href{http://dx.doi.org/10.1007/JHEP02(2015)054}{{\em JHEP}
  {\bfseries 02} (2015) 054}, \href{http://arxiv.org/abs/1407.6359}{{\ttfamily
  arXiv:1407.6359 [hep-th]}}.

\bibitem{Bhardwaj:2015xxa}
L.~Bhardwaj, ``{Classification of 6d $ \mathcal{N}=\left(1,0\right) $ gauge
  theories},'' \href{http://dx.doi.org/10.1007/JHEP11(2015)002}{{\em JHEP}
  {\bfseries 11} (2015) 002}, \href{http://arxiv.org/abs/1502.06594}{{\ttfamily
  arXiv:1502.06594 [hep-th]}}.

\bibitem{Heckman:2013pva}
J.~J. Heckman, D.~R. Morrison, and C.~Vafa, ``{On the Classification of 6D
  SCFTs and Generalized ADE Orbifolds},''
  \href{http://dx.doi.org/10.1007/JHEP05(2014)028}{{\em JHEP} {\bfseries 05}
  (2014) 028}, \href{http://arxiv.org/abs/1312.5746}{{\ttfamily arXiv:1312.5746
  [hep-th]}}. [Erratum: JHEP 06, 017 (2015)].

\bibitem{Heckman:2015bfa}
J.~J. Heckman, D.~R. Morrison, T.~Rudelius, and C.~Vafa, ``{Atomic
  Classification of 6D SCFTs},''
  \href{http://dx.doi.org/10.1002/prop.201500024}{{\em Fortsch. Phys.}
  {\bfseries 63} (2015) 468--530},
  \href{http://arxiv.org/abs/1502.05405}{{\ttfamily arXiv:1502.05405
  [hep-th]}}.

\bibitem{Xie:2017pfl}
D.~Xie and S.-T. Yau, ``{Three dimensional canonical singularity and five
  dimensional $ \mathcal{N} $ = 1 SCFT},''
  \href{http://dx.doi.org/10.1007/JHEP06(2017)134}{{\em JHEP} {\bfseries 06}
  (2017) 134}, \href{http://arxiv.org/abs/1704.00799}{{\ttfamily
  arXiv:1704.00799 [hep-th]}}.

\bibitem{Jefferson:2018irk}
P.~Jefferson, S.~Katz, H.-C. Kim, and C.~Vafa, ``{On Geometric Classification
  of 5d SCFTs},'' \href{http://dx.doi.org/10.1007/JHEP04(2018)103}{{\em JHEP}
  {\bfseries 04} (2018) 103}, \href{http://arxiv.org/abs/1801.04036}{{\ttfamily
  arXiv:1801.04036 [hep-th]}}.

\bibitem{Bhardwaj:2018yhy}
L.~Bhardwaj and P.~Jefferson, ``{Classifying $5d$ SCFTs via $6d$ SCFTs: Rank
  one},'' \href{http://dx.doi.org/10.1007/JHEP07(2019)178}{{\em JHEP}
  {\bfseries 07} (2019) 178}, \href{http://arxiv.org/abs/1809.01650}{{\ttfamily
  arXiv:1809.01650 [hep-th]}}. [Addendum: JHEP 01, 153 (2020)].

\bibitem{Bhardwaj:2018vuu}
L.~Bhardwaj and P.~Jefferson, ``{Classifying 5d SCFTs via 6d SCFTs: Arbitrary
  rank},'' \href{http://dx.doi.org/10.1007/JHEP10(2019)282}{{\em JHEP}
  {\bfseries 10} (2019) 282}, \href{http://arxiv.org/abs/1811.10616}{{\ttfamily
  arXiv:1811.10616 [hep-th]}}.

\bibitem{Apruzzi:2019vpe}
F.~Apruzzi, C.~Lawrie, L.~Lin, S.~Sch{\"a}fer-Nameki, and Y.-N. Wang, ``{5d
  Superconformal Field Theories and Graphs},''
  \href{http://dx.doi.org/10.1016/j.physletb.2019.135077}{{\em Phys. Lett. B}
  {\bfseries 800} (2020) 135077},
  \href{http://arxiv.org/abs/1906.11820}{{\ttfamily arXiv:1906.11820
  [hep-th]}}.

\bibitem{Apruzzi:2019opn}
F.~Apruzzi, C.~Lawrie, L.~Lin, S.~Sch{\"a}fer-Nameki, and Y.-N. Wang, ``{Fibers
  add Flavor, Part I: Classification of 5d SCFTs, Flavor Symmetries and BPS
  States},'' \href{http://dx.doi.org/10.1007/JHEP11(2019)068}{{\em JHEP}
  {\bfseries 11} (2019) 068}, \href{http://arxiv.org/abs/1907.05404}{{\ttfamily
  arXiv:1907.05404 [hep-th]}}.

\bibitem{Apruzzi:2019enx}
F.~Apruzzi, C.~Lawrie, L.~Lin, S.~Sch\"afer-Nameki, and Y.-N. Wang, ``{Fibers
  add Flavor, Part II: 5d SCFTs, Gauge Theories, and Dualities},''
  \href{http://dx.doi.org/10.1007/JHEP03(2020)052}{{\em JHEP} {\bfseries 03}
  (2020) 052}, \href{http://arxiv.org/abs/1909.09128}{{\ttfamily
  arXiv:1909.09128 [hep-th]}}.

\bibitem{Apruzzi:2019kgb}
F.~Apruzzi, S.~Schafer-Nameki, and Y.-N. Wang, ``{5d SCFTs from Decoupling and
  Gluing},'' \href{http://dx.doi.org/10.1007/JHEP08(2020)153}{{\em JHEP}
  {\bfseries 08} (2020) 153}, \href{http://arxiv.org/abs/1912.04264}{{\ttfamily
  arXiv:1912.04264 [hep-th]}}.

\bibitem{Acharya:2001gy}
B.~S. Acharya and E.~Witten, ``{Chiral fermions from manifolds of G(2)
  holonomy},'' \href{http://arxiv.org/abs/hep-th/0109152}{{\ttfamily
  arXiv:hep-th/0109152}}.

\bibitem{Halverson:2014tya}
J.~Halverson and D.~R. Morrison, ``{The landscape of M-theory compactifications
  on seven-manifolds with G$_{2}$ holonomy},''
  \href{http://dx.doi.org/10.1007/JHEP04(2015)047}{{\em JHEP} {\bfseries 04}
  (2015) 047}, \href{http://arxiv.org/abs/1412.4123}{{\ttfamily arXiv:1412.4123
  [hep-th]}}.

\bibitem{Braun:2017ryx}
A.~P. Braun and M.~Del~Zotto, ``{Mirror Symmetry for $G_2$-Manifolds: Twisted
  Connected Sums and Dual Tops},''
  \href{http://dx.doi.org/10.1007/JHEP05(2017)080}{{\em JHEP} {\bfseries 05}
  (2017) 080}, \href{http://arxiv.org/abs/1701.05202}{{\ttfamily
  arXiv:1701.05202 [hep-th]}}.

\bibitem{Braun:2016igl}
A.~P. Braun, ``{Tops as building blocks for G$_{2}$ manifolds},''
  \href{http://dx.doi.org/10.1007/JHEP10(2017)083}{{\em JHEP} {\bfseries 10}
  (2017) 083}, \href{http://arxiv.org/abs/1602.03521}{{\ttfamily
  arXiv:1602.03521 [hep-th]}}.

\bibitem{Braun:2017uku}
A.~P. Braun and S.~Sch{\"a}fer-Nameki, ``{Compact, Singular $G_2$-Holonomy
  Manifolds and M/Heterotic/F-Theory Duality},''
  \href{http://dx.doi.org/10.1007/JHEP04(2018)126}{{\em JHEP} {\bfseries 04}
  (2018) 126}, \href{http://arxiv.org/abs/1708.07215}{{\ttfamily
  arXiv:1708.07215 [hep-th]}}.

\bibitem{Braun:2017csz}
A.~P. Braun and M.~Del~Zotto, ``{Towards Generalized Mirror Symmetry for
  Twisted Connected Sum $G_2$ Manifolds},''
  \href{http://dx.doi.org/10.1007/JHEP03(2018)082}{{\em JHEP} {\bfseries 03}
  (2018) 082}, \href{http://arxiv.org/abs/1712.06571}{{\ttfamily
  arXiv:1712.06571 [hep-th]}}.

\bibitem{Braun:2018joh}
A.~P. Braun and S.~Sch{\"a}fer-Nameki, ``{Spin(7)-manifolds as generalized
  connected sums and 3d $\mathcal{N}=1$ theories},''
  \href{http://dx.doi.org/10.1007/JHEP06(2018)103}{{\em JHEP} {\bfseries 06}
  (2018) 103}, \href{http://arxiv.org/abs/1803.10755}{{\ttfamily
  arXiv:1803.10755 [hep-th]}}.

\bibitem{Braun:2018vhk}
A.~P. Braun, S.~Cizel, M.~H\"ubner, and S.~Sch\"afer-Nameki, ``{Higgs bundles
  for M-theory on $G_{2}$-manifolds},''
  \href{http://dx.doi.org/10.1007/JHEP03(2019)199}{{\em JHEP} {\bfseries 03}
  (2019) 199}, \href{http://arxiv.org/abs/1812.06072}{{\ttfamily
  arXiv:1812.06072 [hep-th]}}.

\bibitem{Braun:2018fdp}
A.~P. Braun, M.~Del~Zotto, J.~Halverson, M.~Larfors, D.~R. Morrison, and
  S.~Sch{\"a}fer-Nameki, ``{Infinitely many M2-instanton corrections to
  M-theory on G$_{2}$-manifolds},''
  \href{http://dx.doi.org/10.1007/JHEP09(2018)077}{{\em JHEP} {\bfseries 09}
  (2018) 077}, \href{http://arxiv.org/abs/1803.02343}{{\ttfamily
  arXiv:1803.02343 [hep-th]}}.

\bibitem{Acharya:2018nbo}
B.~S. Acharya, A.~P. Braun, E.~E. Svanes, and R.~Valandro, ``{Counting
  associatives in compact $G_{2}$ orbifolds},''
  \href{http://dx.doi.org/10.1007/JHEP03(2019)138}{{\em JHEP} {\bfseries 03}
  (2019) 138}, \href{http://arxiv.org/abs/1812.04008}{{\ttfamily
  arXiv:1812.04008 [hep-th]}}.

\bibitem{Barbosa:2019bgh}
R.~Barbosa, M.~Cveti{\v{c}}, J.~J. Heckman, C.~Lawrie, E.~Torres, and
  G.~Zoccarato, ``{T-branes and $G_2$ backgrounds},''
  \href{http://dx.doi.org/10.1103/PhysRevD.101.026015}{{\em Phys. Rev. D}
  {\bfseries 101} no.~2, (2020) 026015},
  \href{http://arxiv.org/abs/1906.02212}{{\ttfamily arXiv:1906.02212
  [hep-th]}}.

\bibitem{Acharya:2019svi}
B.~S. Acharya, R.~L. Bryant, and S.~Salamon, ``{A circle quotient of a $G_2$
  cone},'' \href{http://dx.doi.org/10.1016/j.difgeo.2020.101681}{{\em Differ.
  Geom. Appl.} {\bfseries 73} (2020) 101681},
  \href{http://arxiv.org/abs/1910.09518}{{\ttfamily arXiv:1910.09518
  [math.DG]}}.

\bibitem{Acharya:2020vmg}
B.~S. Acharya, L.~Foscolo, M.~Najjar, and E.~E. Svanes, ``{New
  G$_{2}$-conifolds in M-theory and their field theory interpretation},''
  \href{http://dx.doi.org/10.1007/JHEP05(2021)250}{{\em JHEP} {\bfseries 05}
  (2021) 250}, \href{http://arxiv.org/abs/2011.06998}{{\ttfamily
  arXiv:2011.06998 [hep-th]}}.

\bibitem{Acharya:2021rvh}
B.~S. Acharya, A.~Kinsella, and D.~R. Morrison, ``{Non-perturbative heterotic
  duals of M-theory on G$_{2}$ orbifolds},''
  \href{http://dx.doi.org/10.1007/JHEP11(2021)065}{{\em JHEP} {\bfseries 11}
  (2021) 065}, \href{http://arxiv.org/abs/2106.03886}{{\ttfamily
  arXiv:2106.03886 [hep-th]}}.

\bibitem{Acharya:2023bth}
B.~S. Acharya, M.~Del~Zotto, J.~J. Heckman, M.~Hubner, and E.~Torres,
  ``{Junctions, edge modes, and $G_2$-holonomy orbifolds},''
  \href{http://dx.doi.org/10.4310/bpam.2024.v1.n1.a5}{{\em Beijing J. Pure
  Appl. Math.} {\bfseries 1} no.~1, (2024) 273--371},
  \href{http://arxiv.org/abs/2304.03300}{{\ttfamily arXiv:2304.03300
  [hep-th]}}.

\bibitem{Braun:2023fqa}
A.~P. Braun, E.~Sabag, M.~Sacchi, and S.~Schafer-Nameki, ``{$G_2$-manifolds
  from 4d N=1 theories, part I: Domain walls},''
  \href{http://dx.doi.org/10.21468/SciPostPhys.17.4.102}{{\em SciPost Phys.}
  {\bfseries 17} no.~4, (2024) 102},
  \href{http://arxiv.org/abs/2304.01193}{{\ttfamily arXiv:2304.01193
  [hep-th]}}.

\bibitem{Gaiotto:2015usa}
D.~Gaiotto and S.~S. Razamat, ``{$ \mathcal{N}=1 $ theories of class $
  {\mathcal{S}}_k $},'' \href{http://dx.doi.org/10.1007/JHEP07(2015)073}{{\em
  JHEP} {\bfseries 07} (2015) 073},
  \href{http://arxiv.org/abs/1503.05159}{{\ttfamily arXiv:1503.05159
  [hep-th]}}.

\bibitem{Ohmori:2015pua}
K.~Ohmori, H.~Shimizu, Y.~Tachikawa, and K.~Yonekura, ``{6d $\mathcal{N}=(1,0)$
  theories on $T^2$ and class S theories: Part I},''
  \href{http://dx.doi.org/10.1007/JHEP07(2015)014}{{\em JHEP} {\bfseries 07}
  (2015) 014}, \href{http://arxiv.org/abs/1503.06217}{{\ttfamily
  arXiv:1503.06217 [hep-th]}}.

\bibitem{Ohmori:2015pia}
K.~Ohmori, H.~Shimizu, Y.~Tachikawa, and K.~Yonekura, ``{6d
  $\mathcal{N}=\left(1,\;0\right) $ theories on S$^{1}$ /T$^{2}$ and class S
  theories: part II},'' \href{http://dx.doi.org/10.1007/JHEP12(2015)131}{{\em
  JHEP} {\bfseries 12} (2015) 131},
  \href{http://arxiv.org/abs/1508.00915}{{\ttfamily arXiv:1508.00915
  [hep-th]}}.

\bibitem{Razamat:2016dpl}
S.~S. Razamat, C.~Vafa, and G.~Zafrir, ``{4d $ \mathcal{N}=1 $ from 6d (1,
  0)},'' \href{http://dx.doi.org/10.1007/JHEP04(2017)064}{{\em JHEP} {\bfseries
  04} (2017) 064}, \href{http://arxiv.org/abs/1610.09178}{{\ttfamily
  arXiv:1610.09178 [hep-th]}}.

\bibitem{Bah:2017gph}
I.~Bah, A.~Hanany, K.~Maruyoshi, S.~S. Razamat, Y.~Tachikawa, and G.~Zafrir,
  ``{4d $ \mathcal{N}=1 $ from 6d $ \mathcal{N}=\left(1,0\right) $ on a torus
  with fluxes},'' \href{http://dx.doi.org/10.1007/JHEP06(2017)022}{{\em JHEP}
  {\bfseries 06} (2017) 022}, \href{http://arxiv.org/abs/1702.04740}{{\ttfamily
  arXiv:1702.04740 [hep-th]}}.

\bibitem{Kim:2017toz}
H.-C. Kim, S.~S. Razamat, C.~Vafa, and G.~Zafrir, ``{E-String Theory on Riemann
  Surfaces},'' \href{http://dx.doi.org/10.1002/prop.201700074}{{\em Fortsch.
  Phys.} {\bfseries 66} no.~1, (2018) 1700074},
  \href{http://arxiv.org/abs/1709.02496}{{\ttfamily arXiv:1709.02496
  [hep-th]}}.

\bibitem{Kim:2018bpg}
H.-C. Kim, S.~S. Razamat, C.~Vafa, and G.~Zafrir, ``{D-type Conformal Matter
  and SU/USp Quivers},'' \href{http://dx.doi.org/10.1007/JHEP06(2018)058}{{\em
  JHEP} {\bfseries 06} (2018) 058},
  \href{http://arxiv.org/abs/1802.00620}{{\ttfamily arXiv:1802.00620
  [hep-th]}}.

\bibitem{Kim:2018lfo}
H.-C. Kim, S.~S. Razamat, C.~Vafa, and G.~Zafrir, ``{Compactifications of ADE
  conformal matter on a torus},''
  \href{http://dx.doi.org/10.1007/JHEP09(2018)110}{{\em JHEP} {\bfseries 09}
  (2018) 110}, \href{http://arxiv.org/abs/1806.07620}{{\ttfamily
  arXiv:1806.07620 [hep-th]}}.

\bibitem{Razamat:2018gro}
S.~S. Razamat and G.~Zafrir, ``{Compactification of 6d minimal SCFTs on Riemann
  surfaces},'' \href{http://dx.doi.org/10.1103/PhysRevD.98.066006}{{\em Phys.
  Rev. D} {\bfseries 98} no.~6, (2018) 066006},
  \href{http://arxiv.org/abs/1806.09196}{{\ttfamily arXiv:1806.09196
  [hep-th]}}.

\bibitem{Zafrir:2018hkr}
G.~Zafrir, ``{On the torus compactifications of Z$_{2}$ orbifolds of E-string
  theories},'' \href{http://dx.doi.org/10.1007/JHEP10(2019)040}{{\em JHEP}
  {\bfseries 10} (2019) 040}, \href{http://arxiv.org/abs/1809.04260}{{\ttfamily
  arXiv:1809.04260 [hep-th]}}.

\bibitem{Ohmori:2018ona}
K.~Ohmori, Y.~Tachikawa, and G.~Zafrir, ``{Compactifications of 6d $N = (1, 0)$
  SCFTs with non-trivial Stiefel-Whitney classes},''
  \href{http://dx.doi.org/10.1007/JHEP04(2019)006}{{\em JHEP} {\bfseries 04}
  (2019) 006}, \href{http://arxiv.org/abs/1812.04637}{{\ttfamily
  arXiv:1812.04637 [hep-th]}}.

\bibitem{Chen:2019njf}
J.~Chen, B.~Haghighat, S.~Liu, and M.~Sperling, ``{4d $N$=1 from 6d D-type
  $N$=(1,0)},'' \href{http://dx.doi.org/10.1007/JHEP01(2020)152}{{\em JHEP}
  {\bfseries 01} (2020) 152}, \href{http://arxiv.org/abs/1907.00536}{{\ttfamily
  arXiv:1907.00536 [hep-th]}}.

\bibitem{Pasquetti:2019hxf}
S.~Pasquetti, S.~S. Razamat, M.~Sacchi, and G.~Zafrir, ``{Rank $Q$ E-string on
  a torus with flux},''
  \href{http://dx.doi.org/10.21468/SciPostPhys.8.1.014}{{\em SciPost Phys.}
  {\bfseries 8} no.~1, (2020) 014},
  \href{http://arxiv.org/abs/1908.03278}{{\ttfamily arXiv:1908.03278
  [hep-th]}}.

\bibitem{Baume:2021qho}
F.~Baume, M.~J. Kang, and C.~Lawrie, ``{Two 6D origins of 4D SCFTs: Class S and
  6D (1,{\,}0) on a torus},''
  \href{http://dx.doi.org/10.1103/PhysRevD.106.086003}{{\em Phys. Rev. D}
  {\bfseries 106} no.~8, (2022) 086003},
  \href{http://arxiv.org/abs/2106.11990}{{\ttfamily arXiv:2106.11990
  [hep-th]}}.

\bibitem{Sabag:2022hyw}
E.~Sabag and M.~Sacchi, ``{A 5d perspective on the compactifications of 6d
  SCFTs to 4d $ \mathcal{N} $ = 1 SCFTs},''
  \href{http://dx.doi.org/10.1007/JHEP12(2022)017}{{\em JHEP} {\bfseries 12}
  (2022) 017}, \href{http://arxiv.org/abs/2208.03331}{{\ttfamily
  arXiv:2208.03331 [hep-th]}}.

\bibitem{Giacomelli:2023qyc}
S.~Giacomelli and R.~Savelli, ``{{\ensuremath{\mathscr{N}}} = 1 SCFTs from
  F-theory on Orbifolds},''
  \href{http://dx.doi.org/10.1007/JHEP08(2023)129}{{\em JHEP} {\bfseries 08}
  (2023) 129}, \href{http://arxiv.org/abs/2304.11148}{{\ttfamily
  arXiv:2304.11148 [hep-th]}}.

\bibitem{Razamat:2022gpm}
S.~S. Razamat, E.~Sabag, O.~Sela, and G.~Zafrir, ``{Aspects of 4d
  supersymmetric dynamics and geometry},''
  \href{http://dx.doi.org/10.21468/SciPostPhysLectNotes.78}{{\em SciPost Phys.
  Lect. Notes} {\bfseries 78} (2024) 1},
  \href{http://arxiv.org/abs/2203.06880}{{\ttfamily arXiv:2203.06880
  [hep-th]}}.

\bibitem{heckman2001moduli}
G.~Heckman and E.~Looijenga, ``The moduli space of rational elliptic
  surfaces,'' {\em Algebraic Geometry 2000, Azumino} {\bfseries 36} (2001)
  185--248.

\bibitem{Gaiotto:2015una}
D.~Gaiotto and H.-C. Kim, ``{Duality walls and defects in 5d $ \mathcal{N}=1 $
  theories},'' \href{http://dx.doi.org/10.1007/JHEP01(2017)019}{{\em JHEP}
  {\bfseries 01} (2017) 019}, \href{http://arxiv.org/abs/1506.03871}{{\ttfamily
  arXiv:1506.03871 [hep-th]}}.

\bibitem{Ganor:1996pc}
O.~J. Ganor, D.~R. Morrison, and N.~Seiberg, ``{Branes, Calabi-Yau spaces, and
  toroidal compactification of the N=1 six-dimensional E(8) theory},''
  \href{http://dx.doi.org/10.1016/S0550-3213(96)00690-6}{{\em Nucl. Phys. B}
  {\bfseries 487} (1997) 93--127},
  \href{http://arxiv.org/abs/hep-th/9610251}{{\ttfamily arXiv:hep-th/9610251}}.

\bibitem{Intriligator:1997pq}
K.~A. Intriligator, D.~R. Morrison, and N.~Seiberg, ``{Five-dimensional
  supersymmetric gauge theories and degenerations of Calabi-Yau spaces},''
  \href{http://dx.doi.org/10.1016/S0550-3213(97)00279-4}{{\em Nucl. Phys. B}
  {\bfseries 497} (1997) 56--100},
  \href{http://arxiv.org/abs/hep-th/9702198}{{\ttfamily arXiv:hep-th/9702198}}.

\bibitem{Hwang:2021xyw}
C.~Hwang, S.~S. Razamat, E.~Sabag, and M.~Sacchi, ``{Rank $Q$ E-string on
  spheres with flux},''
  \href{http://dx.doi.org/10.21468/SciPostPhys.11.2.044}{{\em SciPost Phys.}
  {\bfseries 11} no.~2, (2021) 044},
  \href{http://arxiv.org/abs/2103.09149}{{\ttfamily arXiv:2103.09149
  [hep-th]}}.

\bibitem{Seiberg:1994bz}
N.~Seiberg, ``{Exact results on the space of vacua of four-dimensional SUSY
  gauge theories},'' \href{http://dx.doi.org/10.1103/PhysRevD.49.6857}{{\em
  Phys. Rev. D} {\bfseries 49} (1994) 6857--6863},
  \href{http://arxiv.org/abs/hep-th/9402044}{{\ttfamily arXiv:hep-th/9402044}}.

\bibitem{Seiberg:1994pq}
N.~Seiberg, ``{Electric - magnetic duality in supersymmetric nonAbelian gauge
  theories},'' \href{http://dx.doi.org/10.1016/0550-3213(94)00023-8}{{\em Nucl.
  Phys. B} {\bfseries 435} (1995) 129--146},
  \href{http://arxiv.org/abs/hep-th/9411149}{{\ttfamily arXiv:hep-th/9411149}}.

\bibitem{Serre:1993}
J.-P. Serre, {\em A course in arithmetic}.
\newblock Springer, 1993.

\bibitem{oguiso_shioda_rat_ell}
K.~Oguiso and T.~Shioda, ``{The Mordell-Weil Lattice of a Rational Elliptic
  Surface},'' {\em Comment. Math. Univ. St. Pauli.} {\bfseries 40} (1991) 83.

\bibitem{schuett2010ellipticsurfaces}
M.~Schuett and T.~Shioda, ``Elliptic surfaces,'' 2010.
\newblock \url{https://arxiv.org/abs/0907.0298}.

\bibitem{Gaberdiel:1997ud}
M.~R. Gaberdiel and B.~Zwiebach, ``{Exceptional groups from open strings},''
  \href{http://dx.doi.org/10.1016/S0550-3213(97)00841-9}{{\em Nucl. Phys. B}
  {\bfseries 518} (1998) 151--172},
  \href{http://arxiv.org/abs/hep-th/9709013}{{\ttfamily arXiv:hep-th/9709013}}.

\bibitem{Hayashi:2014kca}
H.~Hayashi, C.~Lawrie, D.~R. Morrison, and S.~Schafer-Nameki, ``{Box Graphs and
  Singular Fibers},'' \href{http://dx.doi.org/10.1007/JHEP05(2014)048}{{\em
  JHEP} {\bfseries 05} (2014) 048},
  \href{http://arxiv.org/abs/1402.2653}{{\ttfamily arXiv:1402.2653 [hep-th]}}.

\bibitem{Hayashi:2013lra}
H.~Hayashi, C.~Lawrie, and S.~Schafer-Nameki, ``{Phases, Flops and F-theory:
  SU(5) Gauge Theories},''
  \href{http://dx.doi.org/10.1007/JHEP10(2013)046}{{\em JHEP} {\bfseries 10}
  (2013) 046}, \href{http://arxiv.org/abs/1304.1678}{{\ttfamily arXiv:1304.1678
  [hep-th]}}.

\bibitem{Braun:2014kla}
A.~P. Braun and S.~Schafer-Nameki, ``{Box Graphs and Resolutions I},''
  \href{http://dx.doi.org/10.1016/j.nuclphysb.2016.02.002}{{\em Nucl. Phys. B}
  {\bfseries 905} (2016) 447--479},
  \href{http://arxiv.org/abs/1407.3520}{{\ttfamily arXiv:1407.3520 [hep-th]}}.

\bibitem{Braun:2015hkv}
A.~P. Braun and S.~Schafer-Nameki, ``{Box Graphs and Resolutions II: From
  Coulomb Phases to Fiber Faces},''
  \href{http://dx.doi.org/10.1016/j.nuclphysb.2016.02.001}{{\em Nucl. Phys. B}
  {\bfseries 905} (2016) 480--530},
  \href{http://arxiv.org/abs/1511.01801}{{\ttfamily arXiv:1511.01801
  [hep-th]}}.

\bibitem{Witten:1982fp}
E.~Witten, ``{An SU(2) Anomaly},''
  \href{http://dx.doi.org/10.1016/0370-2693(82)90728-6}{{\em Phys. Lett. B}
  {\bfseries 117} (1982) 324--328}.

\end{thebibliography}\endgroup

\end{document}